\documentclass[aps,prl,superscriptaddress,reprint,amsmath,amssymb,longbibliography]{revtex4-2}

\usepackage{graphicx}
\usepackage{tabularx}
\usepackage[T1]{fontenc}
\usepackage[utf8]{inputenc}
\usepackage{mathrsfs}
\usepackage{bm}
\usepackage{amsthm}
\usepackage{float}
\usepackage{colortbl}
\usepackage[dvipsnames]{xcolor}
\usepackage{hyperref}
\hypersetup{colorlinks=true,
            citecolor=blue,
            linkcolor=blue,
            filecolor=blue,
            urlcolor=blue,
            breaklinks=true}

\newcommand{\nc}{\newcommand}
\nc{\rnc}{\renewcommand}

\nc{\bra}[1]{\langle#1|}
\nc{\ket}[1]{|#1\rangle}
\nc{\<}{\langle}
\rnc{\>}{\rangle}
\nc{\ketbra}[2]{|#1\rangle\!\langle#2|}
\nc{\braket}[2]{\langle#1|#2\rangle}
\nc{\braandket}[3]{\langle #1|#2|#3\rangle}
\nc{\proj}[1]{| #1\rangle\!\langle #1 |}
\nc{\avg}[1]{\langle#1\rangle}

\nc{\tr}{\operatorname{Tr}}
\nc{\ox}{\otimes}
\nc{\id}{I}
\def\ve{\varepsilon}

\nc{\plus}{{\scalebox{0.7}{+}}}
\nc{\HERM}{\mathscr{H}}
\nc{\PSD}{\HERM_{\plus}}
\nc{\PD}{\HERM_{\plus\plus}}
\nc{\density}{\mathscr{D}}
\nc{\conv}{\operatorname{conv}}
\nc{\aff}{\operatorname{aff}}
\nc{\cone}{\operatorname{cone}}
\nc{\cl}{\operatorname{cl}}
\nc{\diam}{\operatorname{diam}}
\nc{\CPTP}{\text{\rm CPTP}}
\nc{\supp}{{\operatorname{supp}}}
\nc{\polarPSD}[1]{{#1}_{\plus}^{\circ}}

\definecolor{shadecolor}{rgb}{0.9,0.9,0.9}
\definecolor{mylightgray}{RGB}{100,100,100}
\definecolor{myblue}{RGB}{0, 68, 116}
\definecolor{mycyan}{RGB}{0, 97, 91}
\definecolor{mygreen}{RGB}{2, 102, 1}
\definecolor{myorange}{RGB}{240, 102, 0}
\definecolor{myred}{RGB}{172, 23, 0}
\definecolor{mymagenta}{RGB}{140,16,73}

\theoremstyle{plain}
\newtheorem{theorem}{Theorem}
\newtheorem{proposition}[theorem]{Proposition}
\newtheorem{lemma}[theorem]{Lemma}
\newtheorem{corollary}[theorem]{Corollary}

\theoremstyle{definition}
\newtheorem{definition}{Definition}
\newtheorem*{remark}{Remark}
\newtheorem*{example}{Example}
\AtEndEnvironment{example}{\qed}

\nc{\cE}{{\cal E}}
\nc{\cF}{{\cal F}}
\nc{\cH}{{\cal H}}
\nc{\cK}{{\cal K}}
\nc{\cL}{{\cal L}}
\nc{\cP}{{\cal P}}

\nc{\BB}{\mathbb{B}}
\nc{\CC}{\mathbb{C}}
\nc{\NN}{\mathbb{N}}
\nc{\RR}{\mathbb{R}}
\nc{\WW}{\mathbb{W}}

\nc{\sB}{{{\mathscr{B}}}}
\nc{\sC}{{{\mathscr{C}}}}
\nc{\sE}{{{\mathscr{E}}}}
\nc{\sK}{{{\mathscr{K}}}}
\nc{\sL}{{{\mathscr{L}}}}
\nc{\sP}{{{\mathscr{P}}}}

\nc{\fF}{{\mathfrak{F}}}
\nc{\fL}{{\mathfrak{L}}}

\nc{\GPO}{\text{\rm GPO}}
\nc{\GPL}{\text{\rm GPL}}
\nc{\TO}{\text{\rm TO}}
\nc{\GPC}{\text{\rm GPC}}

\makeatletter
\newcommand*\rel@kern[1]{\kern#1\dimexpr\macc@kerna}
\newcommand*\widebar[1]{%
  \begingroup
  \def\mathaccent##1##2{%
    \rel@kern{0.8}%
    \overline{\rel@kern{-0.8}\macc@nucleus\rel@kern{0.2}}%
    \rel@kern{-0.2}%
  }%
  \macc@depth\@ne
  \let\math@bgroup\@empty \let\math@egroup\macc@set@skewchar
  \mathsurround\z@ \frozen@everymath{\mathgroup\macc@group\relax}%
  \macc@set@skewchar\relax
  \let\mathaccentV\macc@nested@a
  \macc@nested@a\relax111{#1}%
  \endgroup
}
\makeatother

\usepackage[pass]{geometry}

\nc{\restateheading}{}
\newtheorem*{restateinner}{\restateheading}

\newcounter{mainresult} 
\nc{\mainresultnumber}{}
\rnc{\themainresult}{\mainresultnumber}

\NewDocumentEnvironment{restate}{m m o}{%
  \rnc{\mainresultnumber}{#2}%
  \refstepcounter{mainresult}%
  \IfNoValueTF{#3}{%
    \rnc{\restateheading}{Restatement of #1~#2}%
  }{%
    \rnc{\restateheading}{Restatement of #1~#2 {\normalfont(#3)}}%
  }%
  \restateinner
}{%
  \endrestateinner
}

\nc{\cuscol}[2]{\parbox[t]{#1}{#2}}

\nc{\KF}[1]{\textcolor{magenta}{[KF: #1]}}
\nc{\MN}[1]{\textcolor{violet}{[MNZ: #1]}}

\begin{document}
\title{Quantum thermodynamics with uncertain equilibrium}

\author{Munan Zhang}
\affiliation{School of Data Science, The Chinese University of Hong Kong, Shenzhen, Guangdong, 518172, China}

\author{Kun Fang}
\email{kunfang@cuhk.edu.cn}
\affiliation{School of Data Science, The Chinese University of Hong Kong, Shenzhen, Guangdong, 518172, China}

\date{\today}

\begin{abstract}
The resource-theoretic approach to quantum thermodynamics typically assumes perfect knowledge of the thermal equilibrium state, an idealization incompatible with finite experimental precision. We develop a framework for equilibrium uncertainty by representing the equilibrium reference as a set of candidate states. Under a generic geometric condition, we prove a no-go theorem that sharply limits athermality ``purification'': converting an uncertain athermal state into a definite target is either trivial or impossible. We then derive exact one-shot entropic characterizations of work extraction and formation for two work-storage models, a clean battery with known equilibrium and a dirty battery with uncertain equilibrium. Both models exhibit strong asymptotic irreversibility even under arbitrarily small uncertainty. An explicit example reveals two distinct extremes: clean batteries display a bound-entanglement-like phenomenon, with positive formation cost but zero extractable work, whereas dirty batteries allow positive work extraction but require infinite formation cost. These phenomena show that equilibrium uncertainty is not a minor perturbation of the standard theory, but a structural ingredient that fundamentally reshapes the limits of quantum thermodynamics.
\end{abstract}

\maketitle

\paragraph{Introduction.---} The resource-theoretic approach of quantum thermodynamics has provided a powerful viewpoint for understanding nonequilibrium phenomena, turning questions about heat, work and irreversibility into a unified framework~\cite{brandao2013resource,horodecki2013fundamental,lostaglio2019introductory}. Central to this framework is the notion of \emph{athermality}: any quantum state deviating from its thermal equilibrium (a.k.a.~Gibbs state) constitutes a thermodynamic resource~\cite{chitambar2019quantum}. This perspective has driven broad advances, including precise characterizations of state interconversion~\cite{horodecki2013fundamental,skrzypczyk2014work,gour2018quantum}, generalized second laws~\cite{cwiklinski2015limitations,brandao2015second}, and a deeper understanding of how genuinely quantum features, such as coherence and correlations~\cite{lostaglio2015description,lostaglio2015quantum,sapienza2019correlations,marvian2020coherence, marvian2022operational,gour2022role,tajima2025gibbspreserving}, shape thermodynamic behaviour. Recently, this framework has been extended to black-box settings, capturing the practical reality that nonequilibrium states are often only partially specified~\cite{safranek2023work,watanabe2024black,watanabe2026universal}. 

However, one basic assumption has remained unchallenged throughout: the equilibrium Gibbs state, which serves as the reference for thermodynamic resourcefulness, is taken to be exactly known. This idealization is difficult to justify in practice with finite experimental precision. The Gibbs state depends on both the system Hamiltonian and the bath temperature, neither of which can be determined perfectly~\cite{giovannetti2011advances,montenegro2025review}. The Hamiltonian must be inferred through calibration~\cite{gerster2022experimental,berritta2025efficient} or learning~\cite{huang2023learning,hangleiter2024robustly}, and may drift over time~\cite{proctor2020detecting,capannelli2025tracking,acharya2025quantum,bluvstein2026faulttolerant}. The bath temperature can be equally challenging to measure, especially in the low-temperature regime~\cite{mok2021optimal}. These considerations suggest that the relevant equilibrium reference should be represented not by a single state, but by a \emph{set of candidate Gibbs states} reflecting this uncertainty.

In this work, we develop the first framework of quantum thermodynamics with uncertain equilibrium that better reflects this experimental practice. We show that, perhaps  surprisingly, equilibrium uncertainty is not a minor perturbation of the standard theory, but a new structural ingredient that fundamentally reshapes the limits of quantum thermodynamics. In particular, we prove that, under a generic geometric condition, the linearity of quantum mechanics and the minimal thermodynamic requirement of Gibbs preservation alone preclude converting an uncertain athermality resource into any definite nontrivial target. Notably, such a conversion is either trivial or impossible, with no room for tradeoff.

This no-go theorem motivates a more flexible treatment of work transformations. We consider two work-storage models: a \emph{clean battery}, whose equilibrium state is known, and a \emph{dirty battery}, whose equilibrium state is uncertain. For both models, we derive exact one-shot entropic characterizations of work extraction, in which an athermality resource is converted into an excited battery, and work of formation, in which an excited battery is consumed to prepare an athermality resource. A notable feature is that the conventional correspondence between operational tasks and entropic quantities~\cite{liu2019oneshot} breaks down: standard relative entropies are insufficient for complete characterizations, necessitating new \emph{subspace-constrained relative entropies}. Specifically, in the clean-battery model, formation is characterized by the standard max-relative entropy, whereas extraction is governed by a subspace-constrained min-relative entropy. In the dirty-battery model, the roles of the standard and subspace-constrained quantities are reversed.

In the asymptotic regime, both battery models exhibit a strict separation between the rates of work extraction and work of formation, breaking the reversibility familiar from standard resource theory of athermality~\cite{brandao2013resource,gour2022role}. In particular, we give an explicit example in which the two models display qualitatively distinct forms of irreversibility, even under arbitrarily small equilibrium uncertainty.  For a clean battery, the work extraction rate vanishes while the formation rate remains positive, yielding a thermodynamic analogue of bound entanglement~\cite{horodecki1998mixedstate}. For a dirty battery, work can still be extracted at an nonzero rate, but formation requires infinite cost. These strong forms of irreversiblity show that equilibrium uncertainty fundamentally alters the structure of quantum thermodynamics, imposing stringent limitations on resource manipulation with no counterpart in the standard theory. We discuss our framework and results below, and defer all technical proofs to the Supplemental Material~\cite{zhang2026sm}.

\paragraph{Thermodynamics with uncertain equilibrium.---} The resource-theoretic framework formulates quantum thermodynamics as a theory of athermal state conversion under restricted, or free, operations. Throughout, we consider a finite-dimensional system $P$ with Hamiltonian $H^P$ in contact with a heat bath at inverse temperature $\beta$. Let $\density(P)$ denote the set of all density matrices on this system. Conventionally, an athermal state is specified by a pair $(\rho^P, \tau^P)$, where $\rho^P$ is the actual state of the system and $\tau^P := e^{-\beta H^P}/\tr[e^{-\beta H^P}]$ is the corresponding equilibrium Gibbs state~\cite{janzing2000thermodynamic,gour2022role}. In realistic settings, however, neither state need be exactly known. The nonequilibrium state may be accessible only as a black box, known to lie in a set of states $\sP\subseteq\density(P)$~\cite{watanabe2024black,watanabe2026universal}; similarly, imperfect knowledge of the system Hamiltonian or bath temperature may specify the equilibrium reference only through a set of candidate Gibbs states $\sE\subseteq\density(P)$. We therefore model an uncertain athermal state as a pair of sets $(\sP,\sE)$, representing the possible nonequilibrium and equilibrium states, respectively.

The most physically grounded model of thermodynamic free operations is the class of thermal operations (TO)~\cite{brandao2013resource,horodecki2013fundamental,janzing2000thermodynamic}. Let $\CPTP$ denote the set of all completely positive and trace-preserving maps. A map $\cE\in\CPTP(P\to P^\prime)$ is a thermal operation if it admits the form $\cE(\cdot) = \tr_{(PE)\setminus P^\prime}[U(\cdot \ox \tau^E)U^\dagger]$, where $E$ is an ancillary system initialized in its Gibbs state $\tau^E$, and $U$ is a joint energy-conserving unitary. By construction, every thermal operation preserves the Gibbs state, i.e., $\cE(\tau^P) = \tau^{P^\prime}$. This motivates the broader and technically useful class of Gibbs-preserving operations (GPO), consisting of all CPTP maps that preserve Gibbs states~\cite{faist2015gibbspreserving,faist2018fundamental,faist2019thermodynamic,shiraishi2021quantum,gour2022role,watanabe2024black,tajima2025gibbspreserving}.

To establish no-go limitations in maximal generality, we impose only the minimal structural requirements on the allowed transformations, so that any impossibility result automatically applies to more restrictive operational classes. We therefore introduce \emph{Gibbs-preserving linear maps} (GPL), defined as linear maps $\cL$ satisfying $\cL(\tau^P) = \tau^{P^\prime}$. This is the largest class of transformations compatible with the linear structure of quantum mechanics and the minimal thermodynamic requirement of Gibbs preservation, with strict inclusions $\TO \subsetneq \GPO \subsetneq \GPL$. This relaxation also encompasses generalized scenarios, including post-selected probabilistic protocols~\cite{alhambra2016fluctuating} and virtual operations combining quantum operations with classical statistical post-processing~\cite{takagi2024virtual,yuan2024virtual,zhu2024reversing}. Accordingly, all no-go results in this work are established for GPL and hence hold a fortiori for all smaller operational classes, whereas achievability results are constructed within conventional classes such as $\TO$ or $\GPO$.

We now formalize transformations between uncertain athermal states. Since the precise input $(\rho,\tau)\in\sP\times\sE$ is not known in advance, a valid transformation must be implemented by a single operation that works for all admissible candidates. Specifically, let $\sP,\sE\subseteq\density(P)$ and $\sP^\prime,\sE^\prime\subseteq\density(P^\prime)$. For an error tolerance $\ve\in[0,1)$ and a free operation class $\fF(P \to P^\prime)$, we write $(\sP,\sE) \xrightarrow[]{\cF,\; \ve} (\sP^\prime, \sE^\prime)$ if there exists an operation $\cF\in\fF$ such that, for every $(\rho,\tau) \in \sP\times\sE$, one can find a target pair $(\rho^\prime, \tau^\prime) \in \sP^\prime\times \sE^\prime$ satisfying $T(\cF(\rho), \rho^\prime) \leq \ve$ and $\cF(\tau) = \tau^\prime$. Here $T(\rho, \rho^\prime) := \tfrac{1}{2}\|\rho-\rho^\prime\|_1$ denotes the trace distance. When the uncertainty sets are singletons, this definition reduces to the standard notion of athermal state conversion.

\paragraph{No-go theorem for ``purification''---} Within this framework, a natural question is whether the athermality uncertainty can be purified: can an uncertain athermal state $(\sP,\sE)$ be converted into a definite target $(\rho^\prime, \tau^\prime)$ by a free operation? If this were possible for nontrivial targets, the uncertainty framework considered here would effectively collapse to the standard setting~\cite{horodecki2013fundamental,brandao2013resource,gour2022role}. The following theorem shows that, under a mild geometric condition, such a conversion is generically impossible. 

\begin{theorem}
    \label{thm:no-go theorem for purification}
    Let $\ve \in [0, 1)$, and let $(\rho^\prime, \tau^\prime) \in \density \times \density$. Suppose that $\sP, \sE \subseteq \density$ satisfy $\conv(\sP) \cap \aff(\sE) \neq \emptyset$, where $\conv(\sP)$ and $\aff(\sE)$ denote convex hull of $\sP$ and affine hull of $\sE$, respectively. Then $(\sP, \sE) \xrightarrow[]{\cL,\;\ve} (\rho^\prime, \tau^\prime)$ is achievable by some $\cL \in \GPL$ if and only if $\ve \geq T(\rho^\prime, \tau^\prime)$. 
\end{theorem}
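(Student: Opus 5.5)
We prove the two directions separately. Sufficiency will come from an explicit constant (replacement) map, and necessity from linearity, which propagates the Gibbs-preservation constraint from $\sE$ to a point of $\conv(\sP)$.

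\emph{Achievability.} Suppose $\ve \geq T(\rho^\prime,\tau^\prime)$. We take the replacement channel $\cL(X) := \tr[X]\,\tau^\prime$. It is linear, and in fact CPTP. For every $\tau\in\sE$ we have $\cL(\tau) = \tau^\prime$, so it is Gibbs-preserving with respect to every candidate equilibrium state. For every $\rho\in\sP$ we get $T(\cL(\rho),\rho^\prime) = T(\tau^\prime,\rho^\prime)\le \ve$. So the conversion is achieved, even within $\GPO$. The same map shows that the "trivial" conversions are always available.

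\emph{Converse.} Suppose some linear $\cL$ achieves the conversion. Since the target equilibrium is a singleton, $\cL(\tau)=\tau^\prime$ for all $\tau\in\sE$.

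First, we extend this identity to the affine hull. A point $\sigma\in\aff(\sE)$ has the form $\sigma=\sum_i \lambda_i \tau_i$ with $\tau_i\in\sE$ and $\sum_i\lambda_i = 1$, where the $\lambda_i$ may be negative. Linearity gives $\cL(\sigma)=\sum_i\lambda_i\tau^\prime=\tau^\prime$.

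Next, we use the hypothesis to pick $\omega\in\conv(\sP)\cap\aff(\sE)$ and write $\omega=\sum_j p_j\rho_j$ with $\rho_j\in\sP$ and $p_j\ge 0$ summing to one. Then $\cL(\omega)=\tau^\prime$. Each $\rho_j$ satisfies $T(\cL(\rho_j),\rho^\prime)\le\ve$. Note that $\cL(\rho_j)$ need not be a state, because $\cL$ is only linear. However, the bound $\tfrac12\|\cL(\rho_j)-\rho^\prime\|_1\le \ve$ still makes sense, reading $T$ as half the trace norm of the difference.

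By convexity of the trace norm,
\[
T(\tau^\prime,\rho^\prime)=\tfrac12\Big\|\sum_j p_j\big(\cL(\rho_j)-\rho^\prime\big)\Big\|_1\le \sum_j p_j\,\tfrac12\|\cL(\rho_j)-\rho^\prime\|_1\le\ve .
\]
This gives the claim $\ve\ge T(\rho^\prime,\tau^\prime)$.

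The main subtlety is the affine-hull step. We need that affine combinations, not just convex ones, are respected, and this is exactly where linearity rather than mere convexity preservation is used. We also need to make sure the error criterion is interpreted consistently when $\cL$ outputs non-state operators. I would state explicitly that $T$ is extended as half the trace norm of the difference, so that the convexity argument remains valid for general linear maps.
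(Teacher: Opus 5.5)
Your proof is correct and matches the paper's argument essentially step for step: the replacer map $\cL(\cdot)=\tr[\cdot]\,\tau^\prime$ gives sufficiency, and for necessity a point $\omega\in\conv(\sP)\cap\aff(\sE)$ satisfies $\cL(\omega)=\tau^\prime$ by linearity on its affine decomposition and $T(\cL(\omega),\rho^\prime)\le\ve$ by convexity of the trace norm on its convex decomposition. Reading $T$ as half the trace norm of the difference for non-state outputs is consistent with the paper, which defines $T$ on all of $\sL$.
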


This places a sharp limitation on athermality ``purification'': under the geometric condition, it is either trivial or impossible. If the target athermal state satisfies $T(\rho^\prime, \tau^\prime) \leq \ve$, it is already operationally $\ve$-equivalent to the free state $(\tau^\prime,\tau^\prime)$~\cite{zhang2026sm}. In this regime, the conversion is achievable by the trivial thermal operation $\cL(\cdot) = \tr[\cdot] \tau^\prime$, which simply discards the input and prepares $\tau^\prime$. Conversely, any target with $T(\rho^\prime,\tau^\prime) > \ve$ cannot be obtained from an uncertain input satisfying the geometric condition. In particular, since $\ve \geq T(\rho^\prime,\tau^\prime) > 0$ holds for any non-free target, exact purification is strictly impossible.

Remarkably, this no-go already emerges at the level of GPL, which imposes only linearity and Gibbs preservation. It is therefore not an artifact of any particular framework, but an unavoidable consequence of the quantum-mechanical structure of thermodynamics under equilibrium uncertainty. It is also conceptually distinct from existing no-go purifications, where impossibility is driven by free components in the resource~\cite{fang2020nogo,fang2022nogo} or by the purity of the target~\cite{liu2025no}, and relies on additional assumptions such as positivity of the map.

The geometric condition $\conv(\sP) \cap \aff(\sE) \neq \emptyset$ may appear technical at first sight, but it can be very generic in practice: it is satisfied whenever arbitrarily small perturbations of the Hamiltonian are present. Even if the nominal Hamiltonian is known, unavoidable fluctuations or incomplete knowledge of control parameters can yield a family of Gibbs states whose affine hull spans the full state space, $\aff(\sE)=\aff(\density)$, making the condition automatic. An illustrative example is given in~\cite{zhang2026sm}. Moreover, for exact conversion ($\ve = 0$), the condition can be relaxed to $\aff(\sP) \cap \aff(\sE) \neq \emptyset$.

\paragraph{Batteries and work transformations.---} A central objective in quantum thermodynamics is to characterize the efficiency of work transformations~\cite{horodecki2013fundamental,brandao2013resource,gour2022role,watanabe2024black,watanabe2026universal}. Following Refs.~\cite{gour2022role,watanabe2024black}, we quantify work using a two-level battery system $B$ with Hamiltonian $H^B_M = E_{M,0} \ket{0}\bra{0} + E_{M,1} \ket{1}\bra{1}$, where the parameter $M > 1$ determines the energy gap through $E_{M,1} - E_{M,0} = \frac{1}{\beta}\log(M-1)$. The corresponding Gibbs state is $\pi_M = \left(1-\frac{1}{M}\right) \ket{0}\bra{0} + \frac{1}{M}\ket{1}\bra{1}$. We call $B_M := (\ket{1}\bra{1}, \pi_M)$ a \emph{clean battery}: an excited battery whose equilibrium state is specified exactly. In practice, however, the battery system may be imperfectly calibrated or subject to parameter drift. This motivates the notion of a \emph{dirty battery}, whose equilibrium state is uncertain and described by a set of candidate Gibbs states. In particular, we define $\widebar{B}_M:=(\ket{1}\bra{1}, \Pi_M)$, where $\Pi_M := \{\pi_{M^\prime}: M^\prime \in [M, \infty)\}$. This one-sided uncertainty reflects a worst-case certification of the battery's work capacity. We consider two canonical work-transformation tasks: work extraction, where an athermal state is converted into an excited battery, and work of formation, where an excited battery is consumed to prepare a target athermal state.

Notably, the no-go result in Theorem~\ref{thm:no-go theorem for purification} has two immediate consequences for clean-battery work extraction.

\begin{corollary}\label{cor:no-go for work extraction to clean battery}
    Let $\ve \in [0,1)$. Suppose that $\sP, \sE \subseteq \density$ satisfy $\conv(\sP) \cap \aff(\sE) \neq \emptyset$. Then $(\sP, \sE) \xrightarrow[]{\cL,\; \ve} B_M$ is achievable by some $\cL \in \GPL$ if and only if $\ve \geq 1 -1/M$. 
    Moreover, whenever the transformation is achievable, the optimal work extraction is trivially realized by the thermal operation $\cL(\cdot) = \tr[\cdot] \pi_{1/(1-\ve)}$.
\end{corollary}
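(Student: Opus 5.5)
The first claim follows from Theorem~\ref{thm:no-go theorem for purification}, specialised to the target $(\rho',\tau') = (\ket{1}\bra{1},\pi_M)$. The theorem says the conversion is achievable by some $\cL\in\GPL$ exactly when $\ve \ge T(\ket{1}\bra{1},\pi_M)$, so we only need this trace distance. Both operators are diagonal in the $\{\ket{0},\ket{1}\}$ basis, so
\[
T(\ket{1}\bra{1},\pi_M) = \tfrac12\bigl(|0-(1-\tfrac1M)| + |1-\tfrac1M|\bigr) = 1-\tfrac1M .
\]
This is the stated threshold $\ve \ge 1-1/M$.

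For the second claim we interpret ``optimal work extraction'' as follows. For a fixed $\ve$, work extraction means reaching $B_M$ for the largest possible $M$, since the extracted work is $\frac1\beta\log(M-1)$. By the first part, $B_M$ is reachable if and only if $1-1/M\le\ve$, that is, $M\le 1/(1-\ve)$. So $M^\star = 1/(1-\ve)$ is the optimum. (If $\ve$ is too small, in particular $M^\star\le 1$, no battery with $M>1$ is reachable and the statement is vacuous.)

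It remains to show that the map $\cL(\cdot)=\tr[\cdot]\,\pi_{M^\star}$, composed where needed with trivial relabelling onto the battery, achieves $B_{M^\star}$. Take any $(\rho,\tau)\in\sP\times\sE$. Then $\cL(\tau)=\pi_{M^\star}$, so the equilibrium is mapped correctly. Also $T(\cL(\rho),\ket{1}\bra{1}) = T(\pi_{M^\star},\ket{1}\bra{1}) = 1-1/M^\star = \ve$, which meets the error tolerance.

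The only step needing care is that this map is a thermal operation and not just GPL. The plan is to realise it as a swap between the input system and an ancilla battery $B$ prepared in its Gibbs state $\pi_{M^\star}$ at inverse temperature $\beta$, followed by tracing out the input. Any input Hamiltonian can be handled this way: replacement channels onto the Gibbs state of the output system are standard thermal operations, obtained by discarding the input and keeping a bath subsystem identical to the output. This resolves the main obstacle, leaving only routine checking that the construction fits the definition of TO in the paper.
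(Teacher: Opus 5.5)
Your proposal is correct and follows the paper's route. You specialise Theorem~1 to the target $(\ket{1}\bra{1},\pi_M)$ and use $T(\ket{1}\bra{1},\pi_M)=1-1/M$ to get the threshold. You then observe that the replacer map onto $\pi_{1/(1-\ve)}$ attains the supremal $M$ with error exactly $\ve$ and is a thermal operation.

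The swap you mention is unnecessary, and it is neither well defined nor energy-conserving in general when the input and battery differ. As in the paper's remark after Theorem~1, the identity unitary acting on the input together with a bath prepared in $\pi_{1/(1-\ve)}$, followed by tracing out the input, already realises the replacer channel. That is exactly the fallback construction you describe.
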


This result shows that the usual work-error tradeoff collapses to the single threshold $\ve = 1 - 1/M$: the work extraction is trivial above this threshold and impossible below it. Since the achievable regime is already realized by a thermal operation, every free-operation class between $\TO$ and $\GPL$ yields the same trivial extractable work under the geometric condition. Equilibrium uncertainty therefore eliminates the operational distinctions among these classes in the standard theory~\cite{horodecki2013fundamental,gour2022role,watanabe2024black}.

Conventionally, a battery with larger capacity is always at least as useful as one with smaller capacity, since surplus energy can simply be discarded via a free truncation operation~\cite{zhang2026sm}. The next corollary shows that equilibrium uncertainty can destroy this monotonicity completely.

\begin{corollary}\label{cor:no universal energy truncation}
    Let $\ve\in[0,1)$ and $M_2 > M_1 \geq N > 1$. Then $(\ket{1}\bra{1}, \{\pi_{M_1}, \pi_{M_2}\}) \xrightarrow[]{\cL,\;\ve} B_N$ is achievable by some $\cL \in \GPL$ if and only if $\ve \geq 1-1/N$.
\end{corollary}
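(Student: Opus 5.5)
The plan is to obtain this as a direct specialization of Theorem~\ref{thm:no-go theorem for purification}, with $\sP=\{\ket{1}\bra{1}\}$, $\sE=\{\pi_{M_1},\pi_{M_2}\}$ and the definite target $(\rho^\prime,\tau^\prime)=(\ket{1}\bra{1},\pi_N)$. The target $B_N$ is a single athermal state rather than a set, so the theorem applies verbatim once its geometric hypothesis is checked. It will then remain only to evaluate the threshold $T(\rho^\prime,\tau^\prime)$.

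\emph{Step 1 (geometric condition).} Here $\conv(\sP)=\{\ket{1}\bra{1}\}$. Since $M_2>M_1$, the states $\pi_{M_1}$ and $\pi_{M_2}$ are distinct diagonal qubit states. Their affine hull is the full line
\[
\{\lambda\pi_{M_1}+(1-\lambda)\pi_{M_2}:\lambda\in\RR\},
\]
which is the set of all trace-one diagonal matrices $p\ket{0}\bra{0}+(1-p)\ket{1}\bra{1}$ with $p\in\RR$. To confirm that $\ket{1}\bra{1}$ lies on this line, I would solve
\[
\lambda\bigl(1-\tfrac{1}{M_1}\bigr)+(1-\lambda)\bigl(1-\tfrac{1}{M_2}\bigr)=0
\]
for $\lambda$. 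This is a linear equation in $\lambda$ with nonzero coefficient $\tfrac{1}{M_2}-\tfrac{1}{M_1}$, so it has the solution
\[
\lambda=\frac{1-1/M_2}{1/M_1-1/M_2}.
\]
This $\lambda$ exceeds $1$, which is harmless because affine combinations allow arbitrary real coefficients. Hence $\conv(\sP)\cap\aff(\sE)\ni\ket{1}\bra{1}$, and the condition holds.

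\emph{Step 2 (threshold).} Theorem~\ref{thm:no-go theorem for purification} now states that the conversion is achievable by some $\cL\in\GPL$ if and only if $\ve\ge T(\ket{1}\bra{1},\pi_N)$. The difference of the two states is
\[
\ket{1}\bra{1}-\pi_N=\bigl(1-\tfrac1N\bigr)\bigl(\ket{1}\bra{1}-\ket{0}\bra{0}\bigr).
\]
Its trace norm is $2(1-1/N)$, so $T(\ket{1}\bra{1},\pi_N)=1-1/N$, which is exactly the claimed threshold.

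For intuition on the achievable direction, the map $\cL(\cdot)=\tr[\cdot]\,\pi_N$ is a thermal operation, hence lies in $\GPL$. It sends both $\pi_{M_1}$ and $\pi_{M_2}$ to $\pi_N$, and it meets the error constraint because $T(\pi_N,\ket{1}\bra{1})=1-1/N\le\ve$.

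There is no real obstacle here. The only point needing care is Step 1: one must observe that the affine hull, unlike the convex hull, extends past $\pi_{M_1}$ and $\pi_{M_2}$ all the way to the pure state $\ket{1}\bra{1}$. This uses only $M_1\neq M_2$. The assumption $M_1\ge N$ is never used; it merely makes the failure of naive energy truncation striking.
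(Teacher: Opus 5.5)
Your proposal is correct and follows the paper's route: the paper observes that $\ket{1}\bra{1}\in\aff(\{\pi_{M_1},\pi_{M_2}\})$ and invokes Corollary~2, which is itself Theorem~1 specialized to a clean-battery target with threshold $T(\ket{1}\bra{1},\pi_N)=1-1/N$. Your direct appeal to Theorem~1 with the explicit affine coefficient is the same argument with that intermediate corollary unpacked.
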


This result identifies a structural obstruction caused specifically by equilibrium uncertainty. Even when the nonequilibrium state is known exactly, an \emph{arbitrarily small} uncertainty in the equilibrium reference can preclude energy truncation, despite every candidate input being individually more resourceful than the target. Moreover, regardless of how large the candidate battery capacities $M_1, M_2$ are, or how modest the target capacity $N$ is, the conversion is either impossible or trivially achievable. This sharply contrasts with the black-box setting in~\cite{watanabe2024black}, where uncertainty confined to the nonequilibrium state alone still permits nontrivial work extraction. The comparison reveals a fundamental asymmetry between the two sources of uncertainty in quantum thermodynamics: equilibrium uncertainty can be far more detrimental than nonequilibrium uncertainty. This phenomenon also differs from entanglement theory, where universal entanglement truncation is possible~\cite{matsumoto2007universal}, highlighting a structural distinction between the resource theories of athermality and entanglement.

Besides the no-go results above, we derive exact entropic characterizations of work extraction and formation for general uncertainty sets $\sP$ and $\sE$, under both the clean- and dirty-battery models. For $\ve\in[0,1)$, the one-shot extractable work and work cost under a free-operation class $\fF$ are defined as
\begin{align*}
    \beta\WW_{\fF,\ve}(\sP,\sE) & := \log\sup_{\cF\in\fF}\left\{M:(\sP,\sE)\xrightarrow[]{\cF,\,\ve} \BB_M\right\}, \\
    \beta\CC_{\fF,\ve}(\sP,\sE) & := \log\inf_{\cF\in\fF}\left\{M:\BB_M\xrightarrow[]{\cF,\,\ve} (\sP,\sE)\right\}, 
\end{align*}
respectively, where $(\WW, \CC, \BB_M)=(W, C, B_M)$ for clean-battery model and $(\widebar{W}, \widebar{C}, \widebar{B}_M)$ for dirty-battery model. 

\paragraph{Clean-battery work transformations.---} To characterize the extractable work in this model, we introduce the \emph{subspace-constrained min-relative entropy} between $\sP$ and $\sE$ with respect to a set of linear operators $\sK$,
\begin{align*}
    D_{\min,\ve}^{\sK}(\sP\|\sE) := -\log \min_{\substack{0\leq E \leq I \\ E \perp V(\sK) \\ \sup_{\rho \in \sP} \tr[(I-E)\rho] \leq \ve}} \sup_{\tau \in \sE}\tr[E \tau],
\end{align*}
where $V(\sK) := \operatorname{span}\{\tau - \tau^\prime : \tau, \tau^\prime \in \sK\}$ is the subspace spanned by the differences of elements in $\sK$~\footnote{An equivalent description is $V(\sK) = \aff(\sK) - \tau_0$ for any fixed $\tau_0 \in \sK$.} and $E \perp V(\sK)$ means $\tr[E X] = 0$ for all $X \in V(\sK)$. 

When $\sK$ is a singleton, the orthogonality constraint is vacuous, and one recovers the standard smoothed min-relative entropy $D_{\min,\ve}(\sP\|\sE)$~\cite{watanabe2024black,fang2025generalized}. The most relevant case here is $\sK = \sE$, for which the quantity admits a natural interpretation as a \emph{constrained} hypothesis testing. In standard hypothesis testing between $\sP$ and $\sE$, one seeks a test $E$ that minimizes the type-II error $\sup_{\tau \in \sE}\tr[E\tau]$ subject to a condition on the type-I error $\sup_{\rho \in \sP} \tr[(I - E)\rho] \leq \ve$. The subspace-constrained variant imposes an additional requirement $E \perp V(\sE)$, which forces $\tr[E\tau]$ to be identical for all $\tau \in \sE$. In other words, the test must remain oblivious to which candidate state is the true one. This constrained quantity determines precisely the extractable work as given below.

\begin{theorem}
    \label{thm:one-shot clean battery work extraction}
    Let $\ve \in [0,1)$, and let $\sP, \sE \subseteq \density$. The one-shot extractable work from the uncertain athermal state $(\sP, \sE)$ into a clean battery under $\GPO$ is given by
    $ \beta W_{\GPO, \ve}(\sP, \sE) = D_{\min,\ve}^{\sE}(\sP\|\sE)$.
\end{theorem}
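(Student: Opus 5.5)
We prove the two inequalities separately. In both directions the key observation is that a Gibbs-preserving map into the clean battery must send every candidate $\tau\in\sE$ to the single state $\pi_M$.

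\emph{Converse (upper bound on $M$).} Suppose some $\cF\in\GPO$ achieves $(\sP,\sE)\xrightarrow[]{\cF,\,\ve} B_M$. Then $\cF(\tau)=\pi_M$ for every $\tau\in\sE$, and for every $\rho\in\sP$ we have $T(\cF(\rho),\proj{1})\le\ve$. Define the test $E:=\cF^\dagger(\proj{1})$. Since $\cF$ is CPTP, its adjoint is unital and positive, so $0\le E\le I$.
\begin{itemize}
\item \emph{Orthogonality.} For $\tau,\tau'\in\sE$,
\[
\tr[E(\tau-\tau')]=\bra{1}\cF(\tau)-\cF(\tau')\ket{1}=0,
\]
so $E\perp V(\sE)$.
\item \emph{Type-I error.} For $\rho\in\sP$, $\tr[(I-E)\rho]=1-\bra{1}\cF(\rho)\ket{1}=T(\cF(\rho),\proj{1})\le\ve$. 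Here we use that the trace distance to a pure state $\proj{1}$ satisfies $T(\sigma,\proj{1})\ge 1-\bra{1}\sigma\ket{1}$.
\item \emph{Type-II error.} $\sup_{\tau\in\sE}\tr[E\tau]=\bra{1}\pi_M\ket{1}=1/M$.
\end{itemize}
Hence $E$ is feasible in the minimisation defining $D_{\min,\ve}^{\sE}$, which gives $\log M\le D^{\sE}_{\min,\ve}(\sP\|\sE)$.

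\emph{Achievability.} Let $E$ be an optimal (or near-optimal) feasible test, and write $\tr[E\tau]=:q$. This value is the same for all $\tau\in\sE$ by orthogonality, and $q\in(0,1]$ (the case $q=0$ is handled by a limit). Set $M=1/q$ and use the measure-and-prepare map
\[
\cF(X)=\tr[EX]\,\proj{1}+\tr[(I-E)X]\,\proj{0}.
\]
This map is CPTP. For every $\tau\in\sE$ it gives $\cF(\tau)=q\proj{1}+(1-q)\proj{0}=\pi_{1/q}$, so it is Gibbs-preserving for every candidate equilibrium. For $\rho\in\sP$ it gives $T(\cF(\rho),\proj{1})=\tr[(I-E)\rho]\le\ve$. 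Therefore $\beta W\ge -\log q$. Taking the infimum over feasible tests yields equality, with the $\sup/\min$ attained by compactness of the feasible set (closed, bounded, nonempty since $E=I$ is feasible).

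\emph{Main subtlety: the meaning of ``$\GPO$'' here.} With uncertain $\sE$, GPO should mean a CPTP map satisfying $\cF(\tau)\in\{\pi_M\}$ for all $\tau\in\sE$, as in the definition of the transformation. The map above satisfies exactly this. If instead one insists on preserving a fixed nominal Gibbs state, the same construction still works because it preserves every element of $\sE$. A minor edge case is $M\le 1$: this corresponds to $q=1$, which occurs only when $E=I$ is optimal, and it is handled by the convention that the supremum over $M>1$ gives $\log 1=0$ consistently. I expect no real obstacle beyond this bookkeeping. The only nontrivial step is recognising that the constraint $E\perp V(\sE)$ corresponds exactly to a single map outputting the same battery Gibbs state for all candidates.
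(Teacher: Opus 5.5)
Your proof is correct and follows the paper's argument: the converse uses the test $E=\cF^\dagger(\proj{1})$, and the achievability uses the same measure-and-prepare channel. The only difference is that the paper first dephases, $\cF'=\Delta\circ\cF$, so that $T(\cF'(\rho),\proj{1})=\tr[(I-E)\rho]$ holds with equality, whereas you skip dephasing and rely on $T(\sigma,\proj{1})\ge 1-\bra{1}\sigma\ket{1}$. That inequality is valid, but it means the middle ``$=$'' in your type-I chain should read ``$\le$''.
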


This theorem extends the known connections between work extraction and hypothesis testing~\cite{liu2019oneshot,gour2022role,watanabe2024black} to the general framework developed here. If $\sE$ is a singleton, it reduces to the black-box work extraction in Ref.~\cite{watanabe2024black}. If $\sP$ is also a singleton, it further reduces to the standard work extraction formula in Ref.~\cite{gour2022role}. In the presence of equilibrium uncertainty, however, the test must satisfy the additional orthogonality condition. This condition captures the penalty imposed by equilibrium uncertainty: it  enforces a calibration-free test that cannot exploit knowledge of which equilibrium state is realized. Moreover, a reformulation~\cite{zhang2026sm} gives $D_{\min,\ve}^{\sE}(\sP\|\sE) = D_{\min,\ve}(\conv(\sP)\|\aff(\sE))$, which explains why the geometric condition appears in the no-go results.

We next consider the reverse task of preparing an uncertain athermal state from a clean battery. The following theorem relates the one-shot work cost to the standard smoothed max-relative entropy~\cite{datta2009min,fang2025generalized},
\begin{align*}
    D_{\max,\ve}(\sP\|\sE):= \log \inf_{\substack{\tau \in \sE\\\omega \in \sB_\ve(\sP)}} \{M: M\tau \geq \omega\},
\end{align*}
where $\sB_\ve(\sP) := \{\omega \in \density : T(\omega,\rho) \leq \ve \text{ for some } \rho \in \sP\}$ denotes the $\ve$-ball around $\sP$.

\begin{theorem}
    \label{thm:one-shot clean battery work cost}
    Let $\ve \in [0,1)$, and let $\sP, \sE \subseteq \density$. The one-shot work cost of preparing the uncertain athermal state $(\sP,\sE)$ from a clean battery under $\GPO$ is given by
    $\beta C_{\GPO, \ve}(\sP,\sE) = D_{\max, \ve}(\sP\|\sE)$.
\end{theorem}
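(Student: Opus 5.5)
We prove the two inequalities $\beta C_{\GPO,\ve}(\sP,\sE)\le D_{\max,\ve}(\sP\|\sE)$ and $\beta C_{\GPO,\ve}(\sP,\sE)\ge D_{\max,\ve}(\sP\|\sE)$ separately, adapting the standard single-state formation argument. Here the battery input $B_M=(\ket{1}\bra{1},\pi_M)$ is definite, so one map $\cF$ is applied to the two fixed inputs $\ket{1}\bra{1}$ and $\pi_M$. The conversion $B_M\xrightarrow{\cF,\ve}(\sP,\sE)$ therefore requires three things: there exist $\rho\in\sP$ with $T(\cF(\ket{1}\bra{1}),\rho)\le\ve$; there exists $\tau\in\sE$ with $\cF(\pi_M)=\tau$; and $\cF$ is Gibbs-preserving. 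The last condition is the subtle point, since the map must lie in $\GPO$ with respect to the battery Gibbs state $\pi_M$. Thus $\cF(\pi_M)$ is itself the candidate Gibbs state, and the formation problem reduces to a pointwise statement.

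\emph{Achievability.} Take $M$ with $M\tau\ge\omega$ for some $\tau\in\sE$ and $\omega\in\sB_\ve(\sP)$, and assume $M>1$. The case $\omega=\tau$, where $M\to1$, is handled by a limiting or trivial argument, consistent with the infimum in the definition. Define the measure-and-prepare map
\[
\cF(X)=\bra{1}X\ket{1}\,\omega+\bra{0}X\ket{0}\,\frac{M\tau-\omega}{M-1}.
\]
It is CPTP because $M\tau-\omega\ge0$ and both prepared states have unit trace. It sends $\ket{1}\bra{1}\mapsto\omega$, which is $\ve$-close to some $\rho\in\sP$. It sends $\pi_M\mapsto\frac1M\omega+\frac{M\tau-\omega}{M}=\tau\in\sE$, so it is Gibbs-preserving in the required sense. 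Taking the infimum over $M$ gives $\beta C\le D_{\max,\ve}$.

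\emph{Converse.} Suppose $\cF\in\GPO$ achieves $B_M\to(\sP,\sE)$, and set $\omega:=\cF(\ket{1}\bra{1})\in\sB_\ve(\sP)$ and $\tau:=\cF(\pi_M)\in\sE$. Since $M\pi_M=\ket{1}\bra{1}+(M-1)\ket{0}\bra{0}\ge\ket{1}\bra{1}$, positivity of $\cF$ gives $M\tau\ge\omega$. Hence $\log M\ge D_{\max,\ve}(\sP\|\sE)$, and taking the infimum over feasible $M$ gives the reverse inequality. This direction in fact only uses positivity of $\cF$.

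The main obstacle is definitional rather than analytic: we must confirm that the free-operation constraint in this setting really is just $\cF(\pi_M)\in\sE$. It should not be a stronger requirement, such as preserving a fixed target Gibbs state or every element of $\sE$. We must also handle boundary issues: the infimum may not be attained when $\sE$ or $\sB_\ve(\sP)$ is not closed, and the degenerate value $M=1$ must be treated. We will address both by working with near-optimal feasible pairs $(\tau,\omega)$ and taking limits.
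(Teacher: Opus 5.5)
Your proof is correct, and it takes a more self-contained route than the paper. The paper never argues at the level of sets. It notes that, because the battery input $(\ket{1}\bra{1},\pi_M)$ is a single pair, converting it to $(\sP,\sE)$ is the same as converting it to some fixed pair $(\rho,\tau)\in\sP\times\sE$. It then exchanges the infima to write $\beta C_{\GPO,\ve}(\sP,\sE)=\inf_{(\rho,\tau)\in\sP\times\sE}\beta C_{\GPO,\ve}(\rho,\tau)$. Next it quotes the single-state formula $\beta C_{\GPO,\ve}(\rho,\tau)=D_{\max,\ve}(\rho\|\tau)$ from \cite{wang2019resource}. Finally it recombines using $\sB_\ve(\sP)=\bigcup_{\rho\in\sP}\sB_\ve(\rho)$.

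You instead show directly that two sets of parameters coincide: the admissible battery parameters $M>1$, and the $M>1$ for which some $\tau\in\sE$ and $\omega\in\sB_\ve(\sP)$ satisfy $M\tau\ge\omega$. Achievability uses your explicit measure-and-prepare channel, and the converse applies positivity to $M\pi_M\ge\ket{1}\bra{1}$. This is exactly the mechanism behind the cited single-state result. Your version unpacks the citation and exhibits an optimal channel. The paper's version is shorter and makes the ``easiest candidate'' structure $\inf_{(\rho,\tau)}D_{\max,\ve}(\rho\|\tau)$ explicit. Your reading of the free-operation constraint as ``$\cF$ is CPTP and $\cF(\pi_M)\in\sE$'' is precisely the convention the paper uses.

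Two minor points. First, the limiting arguments you anticipate are unnecessary. Your channel is well defined for every $M>1$; when $\omega=\tau$ the second prepared state is simply $\tau$. Since the two feasible sets of $M$ coincide exactly, the infima agree, including the value $+\infty$, with no closedness assumptions on $\sP$ or $\sE$.

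Second, your aside that the converse ``only uses positivity'' is slightly too strong. Placing $\cF(\ket{1}\bra{1})$ in $\sB_\ve(\sP)$ requires it to be a density operator, which uses trace preservation. For a merely positive or Gibbs-preserving linear map, you would have to enlarge the smoothing ball to non-normalized operators, as in the paper's remark following Theorem~7.
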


When both $\sP$ and $\sE$ are singletons, this theorem recovers the standard one-shot work-cost formula in~\cite{gour2022role}. More generally, it shows that the work cost of preparing an uncertain athermal state is determined by the easiest candidate within the target set. It also gives the smoothed max-relative entropy between two sets of quantum states, previously studied as a mathematical quantity in Ref.~\cite{fang2025generalized}, a direct operational interpretation.

\paragraph{Dirty-battery work transformations.---} The preceding no-go results and clean-battery analysis reveal a sharp obstruction: under equilibrium uncertainty, the subspace constraint on the test can severely limit work extraction and, in generic cases (Corollary~\ref{cor:no-go for work extraction to clean battery}), collapse the usual work-error tradeoff. Since uncertainty in the equilibrium state of the input system is already unavoidable, it is reasonable to allow uncertainty in the target system as well and ask whether this obstruction can be circumvented by using a dirty battery. The following theorem answers this question affirmatively: work extraction to a dirty battery $\widebar{B}_M$ admits a nontrivial operational regime and is precisely characterized by the standard smoothed min-relative entropy $D_{\min,\ve}(\sP\|\sE)$~\cite{watanabe2024black,fang2025generalized}. 

\begin{theorem}
    \label{thm:one-shot dirty battery work extraction}
    Let $\ve \in [0,1)$, and let $\sP, \sE \subseteq \density$. The one-shot extractable work from the uncertain athermal state $(\sP,\sE)$ into a dirty battery under $\GPO$ is given by
    $\beta \widebar{W}_{\GPO,\ve}(\sP, \sE) = D_{\min, \ve}(\sP\|\sE)$.
\end{theorem}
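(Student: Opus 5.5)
We prove the two inequalities $\beta\widebar{W}_{\GPO,\ve}(\sP,\sE)\ge D_{\min,\ve}(\sP\|\sE)$ (achievability) and $\le$ (converse). Throughout we use the standard definition
$D_{\min,\ve}(\sP\|\sE)=-\log\inf_E \sup_{\tau\in\sE}\tr[E\tau]$, where the infimum runs over tests $0\le E\le I$ with $\sup_{\rho\in\sP}\tr[(I-E)\rho]\le\ve$. This is the case $\sK$ a singleton of the subspace-constrained quantity. The dirty battery is useful because its target set $\Pi_M$ is one-sided, $\{\pi_{M'}:M'\ge M\}$. Hence a free map only needs to send each $\tau\in\sE$ to \emph{some} $\pi_{M'}$ with $M'\ge M$, not to a common one. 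This removes the orthogonality constraint that appears in Theorem~\ref{thm:one-shot clean battery work extraction}.

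\emph{Achievability.} Take a feasible test $E$ with $\sup_{\tau\in\sE}\tr[E\tau]=:1/M$, so $M\ge1$. Consider the measure-and-prepare map
\begin{align*}
\cF(X)=\tr[EX]\,\ket{1}\bra{1}+\tr[(I-E)X]\,\ket{0}\bra{0}.
\end{align*}
It is CPTP. For each $\tau\in\sE$ we have $\cF(\tau)=\pi_{M_\tau}$ with $1/M_\tau=\tr[E\tau]\le 1/M$, so $M_\tau\ge M$ and $\cF(\tau)\in\Pi_M$. If $\tr[E\tau]=0$, this is the degenerate state $\ket{0}\bra{0}$, the $M'\to\infty$ limit. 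We will handle it either by perturbing $E\to(1-\delta)E+\delta I$ and taking a supremum, or by adopting the closure convention. For each $\rho\in\sP$, $T(\cF(\rho),\ket{1}\bra{1})=\tr[(I-E)\rho]\le\ve$. The map is not Gibbs-preserving for a single fixed Gibbs state. It is, however, a legitimate free operation in the uncertain framework, because its only requirement is that it maps every candidate equilibrium into the target equilibrium set. Our reading of ``$\GPO$'' in this setting is exactly this set-to-set preservation, the natural extension of the definition in the transformation relation. Taking the infimum over $E$ gives $\log M\ge D_{\min,\ve}$. We need $M>1$, so the case where the infimum over tests equals $1$ is treated as giving zero work.

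\emph{Converse.} Suppose some $\cL$ (even in $\GPL$, as long as positivity is available for the test) achieves $(\sP,\sE)\to\widebar{B}_M$ with error $\ve$. Set $E:=\cL^\dagger(\ket{1}\bra{1})$. Positivity and trace preservation of $\cL$ give $0\le E\le I$. For every $\rho\in\sP$ the output is $\ve$-close to $\ket{1}\bra{1}$, so $\tr[(I-E)\rho]=1-\bra{1}\cL(\rho)\ket{1}\le T(\cL(\rho),\ket{1}\bra{1})\le\ve$. For every $\tau\in\sE$, $\cL(\tau)=\pi_{M'}$ for some $M'\ge M$, so $\tr[E\tau]=1/M'\le1/M$. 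Thus $\sup_\tau\tr[E\tau]\le1/M$, giving $\log M\le D_{\min,\ve}(\sP\|\sE)$. Taking the supremum over $M$ completes the proof.

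The main obstacle is the achievability side. We must check that the measure-and-prepare map is admissible in the paper's $\GPO$ for set-valued equilibria, and handle the boundary cases: $\tr[E\tau]=0$, which yields the excluded $M'=\infty$, and suprema that are not attained. We will deal with these by a $\delta$-perturbation of the test and a limiting argument in the definition of $\widebar{W}$ as a supremum.
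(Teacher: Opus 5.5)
Your proposal is correct and follows the paper's proof. Achievability uses the measure-and-prepare channel $X\mapsto\tr[EX]\ket{1}\bra{1}+\tr[(I-E)X]\ket{0}\bra{0}$ built from a feasible test, and the converse uses the adjoint test $E=\cF^\dagger(\ket{1}\bra{1})$. The only difference is the boundary handling: the paper treats $\sup_{\tau}\tr[E\tau]=0$ separately by preparing $\pi_{M_0}$ on the rejection branch, whereas your perturbation $E\mapsto(1-\delta)E+\delta I$ covers that case and also the case where $\tr[E\tau]=0$ for some individual $\tau$ while the supremum is positive, which the paper's second case passes over. Use the perturbation rather than the closure convention, since the latter would change the definition of $\Pi_M$.
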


If $\sE$ is a singleton, the equilibrium state of the output battery is uniquely determined, and this knowledge effectively makes the dirty battery a clean one. In this case, the theorem also recovers the known expressions for both standard work extraction~\cite{gour2022role} and the recent black-box work extraction~\cite{watanabe2024black}. For a general set $\sE$, the equilibrium uncertainty in the dirty battery removes the subspace constraint that suppresses work extraction in the clean-battery setting, yielding a generally larger extractable work: $D_{\min,\ve}(\sP\|\sE) \geq D_{\min,\ve}^{\sE}(\sP\|\sE)$. Moreover, this result gives the smoothed min-relative entropy between two sets of quantum states, a mathematical quantity studied in~\cite{fang2025generalized}, an operational interpretation, complementing the role of the max-relative entropy in Theorem~\ref{thm:one-shot clean battery work cost}.

For the reverse task, however, the work stored in a dirty battery becomes difficult to reinvest for state preparation. In this case, the work cost is determined by the new \emph{subspace-constrained max-relative entropy} between $\sP$ and $\sE$ with respect to a set of states $\sK\subseteq\density$,
\begin{align*}
    D_{\max,\ve}^{\sK}(\sP\|\sE) := \log\!\inf_{\substack{\gamma \in \cl(\sK)\\\omega \in \sB_\ve(\sP)}} \!\{M : \sC(\gamma, \omega, 1/M) \subseteq \sE\},
\end{align*}
where $\cl(\cK)$ denotes the closure of $\cK$ in $\density$, and $\sC(\gamma, \omega, 1/M) := \{(1-\lambda)\gamma + \lambda \omega : \lambda \in (0, 1/M]\}$ denotes the initial $1/M$ portion of the half open segment from $\gamma$ toward $\omega$. This quantity relates to the cone-restricted max-relative entropy in~\cite{george2024conerestricted} for closed and convex $\sE$~\cite{zhang2026sm}. 

\begin{theorem}
    \label{thm:one-shot dirty battery work cost}
    Let $\ve \in [0,1)$, and let $\sP, \sE \subseteq \density$. The one-shot work cost of preparing the uncertain athermal state $(\sP,\sE)$ from a dirty battery under $\GPO$ is given by
    $\beta \widebar{C}_{\GPO,\ve}(\sP,\sE) = D_{\max,\ve}^{\,\sE}(\sP\|\sE)$.
\end{theorem}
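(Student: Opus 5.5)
The plan is to show that the feasible sets on the two sides of the identity are literally the same. The key observation is that the dirty-battery input has a very special structure: the state $\ket{1}\bra{1}$ and every candidate Gibbs state $\pi_{M'}$ are diagonal in the energy basis. Hence a linear map $\cF$ on the battery is relevant only through the two states $\gamma:=\cF(\ket{0}\bra{0})$ and $\omega:=\cF(\ket{1}\bra{1})$. For every $M'\ge M$,
\begin{align*}
\cF(\pi_{M'})=\Bigl(1-\tfrac{1}{M'}\Bigr)\gamma+\tfrac{1}{M'}\,\omega ,
\end{align*}
and as $M'$ ranges over $[M,\infty)$ the parameter $\lambda=1/M'$ ranges exactly over $(0,1/M]$. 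So $\{\cF(\pi):\pi\in\Pi_M\}=\sC(\gamma,\omega,1/M)$.

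\emph{Converse direction.} Suppose $\widebar{B}_M\xrightarrow[]{\cF,\,\ve}(\sP,\sE)$ holds for some $\cF\in\GPO$. The input nonequilibrium state is the single state $\ket{1}\bra{1}$, so the accuracy condition says $T(\omega,\rho)\le\ve$ for some $\rho\in\sP$, that is, $\omega\in\sB_\ve(\sP)$. The equilibrium condition says $\cF(\pi_{M'})\in\sE$ for all $M'\ge M$, which by the display is $\sC(\gamma,\omega,1/M)\subseteq\sE$. It remains to check that $\gamma$ is admissible in the infimum, i.e.\ $\gamma\in\cl(\sE)$. This holds because $\gamma=\lim_{M'\to\infty}\cF(\pi_{M'})$ is a limit of points of $\sE$ and is itself a density matrix. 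Therefore $M$ is feasible in the definition of $D^{\sE}_{\max,\ve}(\sP\|\sE)$, and taking infima gives $\beta\widebar{C}_{\GPO,\ve}\ge D^{\sE}_{\max,\ve}(\sP\|\sE)$.

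\emph{Achievability.} Conversely, take any feasible triple $(\gamma,\omega,M)$ with $\gamma\in\cl(\sE)$, $\omega\in\sB_\ve(\sP)$ and $\sC(\gamma,\omega,1/M)\subseteq\sE$. Define the measure-and-prepare channel
\begin{align*}
\cF(X)=\bra{0}X\ket{0}\gamma+\bra{1}X\ket{1}\omega .
\end{align*}
This map is CPTP. It sends every candidate Gibbs state in $\Pi_M$ into $\sE$, since its image of $\Pi_M$ is $\sC(\gamma,\omega,1/M)\subseteq\sE$. This is exactly the Gibbs-preservation requirement for maps between uncertain athermal states, so $\cF$ is a valid free operation. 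It also outputs $\omega$, which is $\ve$-close to some $\rho\in\sP$. Hence $M$ is achievable, which gives $\beta\widebar{C}_{\GPO,\ve}\le D^{\sE}_{\max,\ve}(\sP\|\sE)$.

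\emph{Main obstacle.} I expect the main subtlety to be bookkeeping rather than a hard estimate. Three points need care:
\begin{itemize}
\item $\gamma$ only needs to lie in the closure of $\sE$, not in $\sE$ itself, because $\lambda=0$ is excluded from the segment. This is exactly why the definition of $D^{\sE}_{\max,\ve}$ uses $\cl(\sE)$.
\item The range of $M$ has to match: the battery requires $M>1$, whereas the infimum in $D^{\sE}_{\max,\ve}$ should be read over $M\ge 1$. I will check that the boundary case is handled consistently on both sides, with the infimum possibly not attained.
\item The notion of $\GPO$ for set-valued equilibria must be stated precisely: a CPTP map whose image of every candidate equilibrium state lies in the target equilibrium set, as in the transformation definition. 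Once this is fixed, the two feasible sets coincide exactly and the identity follows by taking $\log\inf$ on both sides.
\end{itemize}
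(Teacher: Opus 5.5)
Your proposal is correct and follows the paper's proof step for step. The paper also sets $\tau=\cF(\ket{0}\bra{0})$ and $\omega=\cF(\ket{1}\bra{1})$, uses linearity to identify the image of $\Pi_M$ with $\sC(\tau,\omega,1/M)$, gets $\tau\in\cl(\sE)$ by letting $\lambda\to0^+$, and proves achievability with the same measure-and-prepare channel. Your boundary concern is moot: the definition of $D_{\max,\ve}^{\sE}$ already restricts to $M>1$, and feasibility is upward closed in $M$ because $\sC(\gamma,\omega,1/M'')\subseteq\sC(\gamma,\omega,1/M)$ for $M''\ge M$.
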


In contrast to Theorem~\ref{thm:one-shot clean battery work cost}, $D_{\max,\ve}^{\,\sE}(\sP\|\sE)$ depends explicitly on the geometry of $\sE$, reflecting the constraint imposed by uncertainty in the battery's equilibrium state. Theorem~\ref{thm:one-shot dirty battery work cost} also admits a clear geometric interpretation: the conversion $\widebar{B}_M \xrightarrow[]{\cF,\;\ve} (\sP, \sE)$ is achievable by some $\cF\in\GPO$ if and only if there exist $\omega \in \sB_\ve(\sP)$ and $\tau \in \cl(\sE)$ such that the initial portion of length $1/M$ of the line segment from $\tau$ toward $\omega$, $\sC(\tau,\omega,1/M)$, lies entirely within $\sE$. The work cost is then determined by the smallest such $M$. Intuitively, the narrower the set $\sE$, the shorter the segment it can accommodate, and hence the larger the required work cost. In the extreme case where $\sE$ is a singleton lying outside $\sB_\ve(\sP)$, no valid pair $(\omega, \tau)$ exists and the work cost diverges. A precise characterization of this intuition can be found in~\cite{zhang2026sm}.

\paragraph{Irreversibility.---} A hallmark of the standard resource theory of athermality is the asymptotic reversibility: the work extraction rate coincides with the work cost rate, so a cyclic process of extraction and formation incurs no net loss~\cite{brandao2013resource,gour2022role}. To examine whether this reversibility persists under equilibrium uncertainty, we consider a sequence of uncertain athermal states $\{(\sP_n,\sE_n)\}_{n=1}^\infty$, and define the asymptotic rates $\beta \WW_{\GPO, \ve}^\infty\left(\sP, \sE\right) := \lim_{n\to\infty} \beta \WW_{\GPO, \ve}(\sP_n,\sE_n)/n$ and $\beta \CC_{\GPO, \ve}^\infty\left(\sP, \sE\right) := \lim_{n\to\infty} \beta \CC_{\GPO, \ve}(\sP_n,\sE_n)/n$, where $\WW \in \left\{W, \widebar{W}\right\}$ and $\CC \in \left\{C, \widebar{C}\right\}$ denote extractable work and work cost in the clean- and dirty-battery settings, respectively.

A simple example already reveals a strict separation between these rates. Let $\ve < 1/2$ and $\delta >0$, and consider the athermal state with precise unequilibrim state $\sP_n = \{\ket{1}\bra{1}^{\ox n}\}$ and uncertain equilibrium state $\sE_n = \left\{\pi_M^{\ox n}: M \in [2, 2+\delta]\right\}$, which describes $n$ copies of an excited dirty battery whose capacity is known only up to precision $\delta$. A direct evaluation~\cite{zhang2026sm} gives, for clean batteries, $\beta W_{\GPO,\ve}^\infty(\sP, \sE) = 0 < 1 = \beta C_{\GPO,\ve}^\infty(\sP, \sE)$. Thus the extractable work vanishes while the formation cost remains strictly positive, giving rise to a thermodynamic analogue of bound entanglement~\cite{horodecki1998mixedstate}. By contrast, for dirty batteries, $\beta \widebar{W}_{\GPO,\ve}^\infty(\sP, \sE) = 1 < \infty = \beta \widebar{C}_{\GPO,\ve}^\infty(\sP, \sE)$, so work can be extracted at an nonzero rate, but the formation cost diverges. These extreme behaviors show that even infinitesimal imprecision on the equilibrium can fundamentally reshape the asymptotic landscape, turning reversibility to severe irreversibility. Further asymptotic analysis is provided in~\cite{zhang2026sm}.

\paragraph{Discussion.---} We challenge the long-standing idealization in quantum thermodynamics by showing that imperfect knowledge of equilibrium, a practically unavoidable source of uncertainty, can impose fundamental limitations on resource manipulation. Rather than giving small corrections to the standard theory, this uncertainty leads to new phenomena and techniques, including generic no-go behaviors, strong forms of irreversibility and new entropic tools. These contributions fit naturally with the continuing interest in resource-theoretic quantum thermodynamics, e.g.,~\cite{shiraishi2021quantum,marvian2022operational,safranek2023work,watanabe2024black,tajima2025gibbspreserving}. More broadly, it opens a new direction for exploring quantum resource theories in general, one that incorporates realistic imperfections and asks how uncertainty reshapes the ultimate limits of quantum information processing.

\let\oldaddcontentsline\addcontentsline
\renewcommand{\addcontentsline}[3]{}

\begin{acknowledgments}
\paragraph{Acknowledgments.---} We thank Gerardo Adesso, Zi-Wen Liu, Kaito Watanabe, and Yunlong Xiao for helpful discussions. We are particularly grateful to Ryuji Takagi for stimulating discussions during the Quantum Resources 2026 workshop in Tokyo and for insightful comments on the affine-hull reformulation of the subspace-constrained min-relative entropy. K.F. and M.Z. are supported by the National Natural Science Foundation of China (Grant No. 92470113 and 12404569), the Shenzhen Science and Technology Program (Grant No. QNXMB20250701091826036 and JCYJ20240813113519025), the Shenzhen Fundamental Research Program (Grant No. JCYJ20241202124023031), the General R\&D Projects of 1+1+1 CUHK-CUHK(SZ)-GDST Joint Collaboration Fund (Grant No. GRDP2025-022), and the University Development Fund (Grant No. UDF01003565). 
\end{acknowledgments}

\nocite{*}

\bibliography{Ref}

\begin{thebibliography}{58}%
\makeatletter
\providecommand \@ifxundefined [1]{%
 \@ifx{#1\undefined}
}%
\providecommand \@ifnum [1]{%
 \ifnum #1\expandafter \@firstoftwo
 \else \expandafter \@secondoftwo
 \fi
}%
\providecommand \@ifx [1]{%
 \ifx #1\expandafter \@firstoftwo
 \else \expandafter \@secondoftwo
 \fi
}%
\providecommand \natexlab [1]{#1}%
\providecommand \enquote  [1]{``#1''}%
\providecommand \bibnamefont  [1]{#1}%
\providecommand \bibfnamefont [1]{#1}%
\providecommand \citenamefont [1]{#1}%
\providecommand \href@noop [0]{\@secondoftwo}%
\providecommand \href [0]{\begingroup \@sanitize@url \@href}%
\providecommand \@href[1]{\@@startlink{#1}\@@href}%
\providecommand \@@href[1]{\endgroup#1\@@endlink}%
\providecommand \@sanitize@url [0]{\catcode `\\12\catcode `\$12\catcode `\&12\catcode `\#12\catcode `\^12\catcode `\_12\catcode `\%12\relax}%
\providecommand \@@startlink[1]{}%
\providecommand \@@endlink[0]{}%
\providecommand \url  [0]{\begingroup\@sanitize@url \@url }%
\providecommand \@url [1]{\endgroup\@href {#1}{\urlprefix }}%
\providecommand \urlprefix  [0]{URL }%
\providecommand \Eprint [0]{\href }%
\providecommand \doibase [0]{https://doi.org/}%
\providecommand \selectlanguage [0]{\@gobble}%
\providecommand \bibinfo  [0]{\@secondoftwo}%
\providecommand \bibfield  [0]{\@secondoftwo}%
\providecommand \translation [1]{[#1]}%
\providecommand \BibitemOpen [0]{}%
\providecommand \bibitemStop [0]{}%
\providecommand \bibitemNoStop [0]{.\EOS\space}%
\providecommand \EOS [0]{\spacefactor3000\relax}%
\providecommand \BibitemShut  [1]{\csname bibitem#1\endcsname}%
\let\auto@bib@innerbib\@empty
\bibitem [{\citenamefont {Brand{\~a}o}\ \emph {et~al.}(2013)\citenamefont {Brand{\~a}o}, \citenamefont {Horodecki}, \citenamefont {Oppenheim}, \citenamefont {Renes},\ and\ \citenamefont {Spekkens}}]{brandao2013resource}%
  \BibitemOpen
  \bibfield  {author} {\bibinfo {author} {\bibfnamefont {F.~G. S.~L.}\ \bibnamefont {Brand{\~a}o}}, \bibinfo {author} {\bibfnamefont {M.}~\bibnamefont {Horodecki}}, \bibinfo {author} {\bibfnamefont {J.}~\bibnamefont {Oppenheim}}, \bibinfo {author} {\bibfnamefont {J.~M.}\ \bibnamefont {Renes}},\ and\ \bibinfo {author} {\bibfnamefont {R.~W.}\ \bibnamefont {Spekkens}},\ }\bibfield  {title} {\bibinfo {title} {Resource theory of quantum states out of thermal equilibrium},\ }\href {https://doi.org/10.1103/PhysRevLett.111.250404} {\bibfield  {journal} {\bibinfo  {journal} {Physical Review Letters}\ }\textbf {\bibinfo {volume} {111}},\ \bibinfo {pages} {250404} (\bibinfo {year} {2013})}\BibitemShut {NoStop}%
\bibitem [{\citenamefont {Horodecki}\ and\ \citenamefont {Oppenheim}(2013)}]{horodecki2013fundamental}%
  \BibitemOpen
  \bibfield  {author} {\bibinfo {author} {\bibfnamefont {M.}~\bibnamefont {Horodecki}}\ and\ \bibinfo {author} {\bibfnamefont {J.}~\bibnamefont {Oppenheim}},\ }\bibfield  {title} {\bibinfo {title} {Fundamental limitations for quantum and nanoscale thermodynamics},\ }\href {https://doi.org/10.1038/ncomms3059} {\bibfield  {journal} {\bibinfo  {journal} {Nature Communications}\ }\textbf {\bibinfo {volume} {4}},\ \bibinfo {pages} {2059} (\bibinfo {year} {2013})}\BibitemShut {NoStop}%
\bibitem [{\citenamefont {Lostaglio}(2019)}]{lostaglio2019introductory}%
  \BibitemOpen
  \bibfield  {author} {\bibinfo {author} {\bibfnamefont {M.}~\bibnamefont {Lostaglio}},\ }\bibfield  {title} {\bibinfo {title} {An introductory review of the resource theory approach to thermodynamics},\ }\href {https://doi.org/10.1088/1361-6633/ab46e5} {\bibfield  {journal} {\bibinfo  {journal} {Reports on Progress in Physics}\ }\textbf {\bibinfo {volume} {82}},\ \bibinfo {pages} {114001} (\bibinfo {year} {2019})}\BibitemShut {NoStop}%
\bibitem [{\citenamefont {Chitambar}\ and\ \citenamefont {Gour}(2019)}]{chitambar2019quantum}%
  \BibitemOpen
  \bibfield  {author} {\bibinfo {author} {\bibfnamefont {E.}~\bibnamefont {Chitambar}}\ and\ \bibinfo {author} {\bibfnamefont {G.}~\bibnamefont {Gour}},\ }\bibfield  {title} {\bibinfo {title} {Quantum resource theories},\ }\href {https://doi.org/10.1103/RevModPhys.91.025001} {\bibfield  {journal} {\bibinfo  {journal} {Reviews of Modern Physics}\ }\textbf {\bibinfo {volume} {91}},\ \bibinfo {pages} {025001} (\bibinfo {year} {2019})}\BibitemShut {NoStop}%
\bibitem [{\citenamefont {Skrzypczyk}\ \emph {et~al.}(2014)\citenamefont {Skrzypczyk}, \citenamefont {Short},\ and\ \citenamefont {Popescu}}]{skrzypczyk2014work}%
  \BibitemOpen
  \bibfield  {author} {\bibinfo {author} {\bibfnamefont {P.}~\bibnamefont {Skrzypczyk}}, \bibinfo {author} {\bibfnamefont {A.~J.}\ \bibnamefont {Short}},\ and\ \bibinfo {author} {\bibfnamefont {S.}~\bibnamefont {Popescu}},\ }\bibfield  {title} {\bibinfo {title} {Work extraction and thermodynamics for individual quantum systems},\ }\href {https://doi.org/10.1038/ncomms5185} {\bibfield  {journal} {\bibinfo  {journal} {Nature Communications}\ }\textbf {\bibinfo {volume} {5}},\ \bibinfo {pages} {4185} (\bibinfo {year} {2014})}\BibitemShut {NoStop}%
\bibitem [{\citenamefont {Gour}\ \emph {et~al.}(2018)\citenamefont {Gour}, \citenamefont {Jennings}, \citenamefont {Buscemi}, \citenamefont {Duan},\ and\ \citenamefont {Marvian}}]{gour2018quantum}%
  \BibitemOpen
  \bibfield  {author} {\bibinfo {author} {\bibfnamefont {G.}~\bibnamefont {Gour}}, \bibinfo {author} {\bibfnamefont {D.}~\bibnamefont {Jennings}}, \bibinfo {author} {\bibfnamefont {F.}~\bibnamefont {Buscemi}}, \bibinfo {author} {\bibfnamefont {R.}~\bibnamefont {Duan}},\ and\ \bibinfo {author} {\bibfnamefont {I.}~\bibnamefont {Marvian}},\ }\bibfield  {title} {\bibinfo {title} {Quantum majorization and a complete set of entropic conditions for quantum thermodynamics},\ }\href {https://doi.org/10.1038/s41467-018-06261-7} {\bibfield  {journal} {\bibinfo  {journal} {Nature Communications}\ }\textbf {\bibinfo {volume} {9}},\ \bibinfo {pages} {5352} (\bibinfo {year} {2018})}\BibitemShut {NoStop}%
\bibitem [{\citenamefont {{\'C}wikli{\'n}ski}\ \emph {et~al.}(2015)\citenamefont {{\'C}wikli{\'n}ski}, \citenamefont {Studzi{\'n}ski}, \citenamefont {Horodecki},\ and\ \citenamefont {Oppenheim}}]{cwiklinski2015limitations}%
  \BibitemOpen
  \bibfield  {author} {\bibinfo {author} {\bibfnamefont {P.}~\bibnamefont {{\'C}wikli{\'n}ski}}, \bibinfo {author} {\bibfnamefont {M.}~\bibnamefont {Studzi{\'n}ski}}, \bibinfo {author} {\bibfnamefont {M.}~\bibnamefont {Horodecki}},\ and\ \bibinfo {author} {\bibfnamefont {J.}~\bibnamefont {Oppenheim}},\ }\bibfield  {title} {\bibinfo {title} {Limitations on the evolution of quantum coherences: Towards fully quantum second laws of thermodynamics},\ }\href {https://doi.org/10.1103/PhysRevLett.115.210403} {\bibfield  {journal} {\bibinfo  {journal} {Physical Review Letters}\ }\textbf {\bibinfo {volume} {115}},\ \bibinfo {pages} {210403} (\bibinfo {year} {2015})}\BibitemShut {NoStop}%
\bibitem [{\citenamefont {Brand{\~a}o}\ \emph {et~al.}(2015)\citenamefont {Brand{\~a}o}, \citenamefont {Horodecki}, \citenamefont {Ng}, \citenamefont {Oppenheim},\ and\ \citenamefont {Wehner}}]{brandao2015second}%
  \BibitemOpen
  \bibfield  {author} {\bibinfo {author} {\bibfnamefont {F.}~\bibnamefont {Brand{\~a}o}}, \bibinfo {author} {\bibfnamefont {M.}~\bibnamefont {Horodecki}}, \bibinfo {author} {\bibfnamefont {N.}~\bibnamefont {Ng}}, \bibinfo {author} {\bibfnamefont {J.}~\bibnamefont {Oppenheim}},\ and\ \bibinfo {author} {\bibfnamefont {S.}~\bibnamefont {Wehner}},\ }\bibfield  {title} {\bibinfo {title} {The second laws of quantum thermodynamics},\ }\href {https://doi.org/10.1073/pnas.1411728112} {\bibfield  {journal} {\bibinfo  {journal} {Proceedings of the National Academy of Sciences}\ }\textbf {\bibinfo {volume} {112}},\ \bibinfo {pages} {3275} (\bibinfo {year} {2015})}\BibitemShut {NoStop}%
\bibitem [{\citenamefont {Lostaglio}\ \emph {et~al.}(2015{\natexlab{a}})\citenamefont {Lostaglio}, \citenamefont {Jennings},\ and\ \citenamefont {Rudolph}}]{lostaglio2015description}%
  \BibitemOpen
  \bibfield  {author} {\bibinfo {author} {\bibfnamefont {M.}~\bibnamefont {Lostaglio}}, \bibinfo {author} {\bibfnamefont {D.}~\bibnamefont {Jennings}},\ and\ \bibinfo {author} {\bibfnamefont {T.}~\bibnamefont {Rudolph}},\ }\bibfield  {title} {\bibinfo {title} {Description of quantum coherence in thermodynamic processes requires constraints beyond free energy},\ }\href {https://doi.org/10.1038/ncomms7383} {\bibfield  {journal} {\bibinfo  {journal} {Nature Communications}\ }\textbf {\bibinfo {volume} {6}},\ \bibinfo {pages} {6383} (\bibinfo {year} {2015}{\natexlab{a}})}\BibitemShut {NoStop}%
\bibitem [{\citenamefont {Lostaglio}\ \emph {et~al.}(2015{\natexlab{b}})\citenamefont {Lostaglio}, \citenamefont {Korzekwa}, \citenamefont {Jennings},\ and\ \citenamefont {Rudolph}}]{lostaglio2015quantum}%
  \BibitemOpen
  \bibfield  {author} {\bibinfo {author} {\bibfnamefont {M.}~\bibnamefont {Lostaglio}}, \bibinfo {author} {\bibfnamefont {K.}~\bibnamefont {Korzekwa}}, \bibinfo {author} {\bibfnamefont {D.}~\bibnamefont {Jennings}},\ and\ \bibinfo {author} {\bibfnamefont {T.}~\bibnamefont {Rudolph}},\ }\bibfield  {title} {\bibinfo {title} {Quantum coherence, time-translation symmetry, and thermodynamics},\ }\href {https://doi.org/10.1103/PhysRevX.5.021001} {\bibfield  {journal} {\bibinfo  {journal} {Physical Review X}\ }\textbf {\bibinfo {volume} {5}},\ \bibinfo {pages} {021001} (\bibinfo {year} {2015}{\natexlab{b}})}\BibitemShut {NoStop}%
\bibitem [{\citenamefont {Sapienza}\ \emph {et~al.}(2019)\citenamefont {Sapienza}, \citenamefont {Cerisola},\ and\ \citenamefont {Roncaglia}}]{sapienza2019correlations}%
  \BibitemOpen
  \bibfield  {author} {\bibinfo {author} {\bibfnamefont {F.}~\bibnamefont {Sapienza}}, \bibinfo {author} {\bibfnamefont {F.}~\bibnamefont {Cerisola}},\ and\ \bibinfo {author} {\bibfnamefont {A.~J.}\ \bibnamefont {Roncaglia}},\ }\bibfield  {title} {\bibinfo {title} {Correlations as a resource in quantum thermodynamics},\ }\href {https://doi.org/10.1038/s41467-019-10572-8} {\bibfield  {journal} {\bibinfo  {journal} {Nature Communications}\ }\textbf {\bibinfo {volume} {10}},\ \bibinfo {pages} {2492} (\bibinfo {year} {2019})}\BibitemShut {NoStop}%
\bibitem [{\citenamefont {Marvian}(2020)}]{marvian2020coherence}%
  \BibitemOpen
  \bibfield  {author} {\bibinfo {author} {\bibfnamefont {I.}~\bibnamefont {Marvian}},\ }\bibfield  {title} {\bibinfo {title} {Coherence distillation machines are impossible in quantum thermodynamics},\ }\href {https://doi.org/10.1038/s41467-019-13846-3} {\bibfield  {journal} {\bibinfo  {journal} {Nature Communications}\ }\textbf {\bibinfo {volume} {11}},\ \bibinfo {pages} {25} (\bibinfo {year} {2020})}\BibitemShut {NoStop}%
\bibitem [{\citenamefont {Marvian}(2022)}]{marvian2022operational}%
  \BibitemOpen
  \bibfield  {author} {\bibinfo {author} {\bibfnamefont {I.}~\bibnamefont {Marvian}},\ }\bibfield  {title} {\bibinfo {title} {Operational interpretation of quantum fisher information in quantum thermodynamics},\ }\href {https://doi.org/10.1103/PhysRevLett.129.190502} {\bibfield  {journal} {\bibinfo  {journal} {Physical Review Letters}\ }\textbf {\bibinfo {volume} {129}},\ \bibinfo {pages} {190502} (\bibinfo {year} {2022})}\BibitemShut {NoStop}%
\bibitem [{\citenamefont {Gour}(2022)}]{gour2022role}%
  \BibitemOpen
  \bibfield  {author} {\bibinfo {author} {\bibfnamefont {G.}~\bibnamefont {Gour}},\ }\bibfield  {title} {\bibinfo {title} {Role of quantum coherence in thermodynamics},\ }\href {https://doi.org/10.1103/PRXQuantum.3.040323} {\bibfield  {journal} {\bibinfo  {journal} {PRX Quantum}\ }\textbf {\bibinfo {volume} {3}},\ \bibinfo {pages} {040323} (\bibinfo {year} {2022})}\BibitemShut {NoStop}%
\bibitem [{\citenamefont {Tajima}\ and\ \citenamefont {Takagi}(2025)}]{tajima2025gibbspreserving}%
  \BibitemOpen
  \bibfield  {author} {\bibinfo {author} {\bibfnamefont {H.}~\bibnamefont {Tajima}}\ and\ \bibinfo {author} {\bibfnamefont {R.}~\bibnamefont {Takagi}},\ }\bibfield  {title} {\bibinfo {title} {Gibbs-preserving operations requiring infinite amount of quantum coherence},\ }\href {https://doi.org/10.1103/PhysRevLett.134.170201} {\bibfield  {journal} {\bibinfo  {journal} {Physical Review Letters}\ }\textbf {\bibinfo {volume} {134}},\ \bibinfo {pages} {170201} (\bibinfo {year} {2025})}\BibitemShut {NoStop}%
\bibitem [{\citenamefont {{\v S}afr{\'a}nek}\ \emph {et~al.}(2023)\citenamefont {{\v S}afr{\'a}nek}, \citenamefont {Rosa},\ and\ \citenamefont {Binder}}]{safranek2023work}%
  \BibitemOpen
  \bibfield  {author} {\bibinfo {author} {\bibfnamefont {D.}~\bibnamefont {{\v S}afr{\'a}nek}}, \bibinfo {author} {\bibfnamefont {D.}~\bibnamefont {Rosa}},\ and\ \bibinfo {author} {\bibfnamefont {F.~C.}\ \bibnamefont {Binder}},\ }\bibfield  {title} {\bibinfo {title} {Work extraction from unknown quantum sources},\ }\href {https://doi.org/10.1103/PhysRevLett.130.210401} {\bibfield  {journal} {\bibinfo  {journal} {Physical Review Letters}\ }\textbf {\bibinfo {volume} {130}},\ \bibinfo {pages} {210401} (\bibinfo {year} {2023})}\BibitemShut {NoStop}%
\bibitem [{\citenamefont {Watanabe}\ and\ \citenamefont {Takagi}(2024)}]{watanabe2024black}%
  \BibitemOpen
  \bibfield  {author} {\bibinfo {author} {\bibfnamefont {K.}~\bibnamefont {Watanabe}}\ and\ \bibinfo {author} {\bibfnamefont {R.}~\bibnamefont {Takagi}},\ }\bibfield  {title} {\bibinfo {title} {Black box work extraction and composite hypothesis testing},\ }\href {https://doi.org/10.1103/PhysRevLett.133.250401} {\bibfield  {journal} {\bibinfo  {journal} {Physical Review Letters}\ }\textbf {\bibinfo {volume} {133}},\ \bibinfo {pages} {250401} (\bibinfo {year} {2024})}\BibitemShut {NoStop}%
\bibitem [{\citenamefont {Watanabe}\ and\ \citenamefont {Takagi}(2026)}]{watanabe2026universal}%
  \BibitemOpen
  \bibfield  {author} {\bibinfo {author} {\bibfnamefont {K.}~\bibnamefont {Watanabe}}\ and\ \bibinfo {author} {\bibfnamefont {R.}~\bibnamefont {Takagi}},\ }\bibfield  {title} {\bibinfo {title} {Universal work extraction in quantum thermodynamics},\ }\href {https://doi.org/10.1038/s41467-026-69143-3} {\bibfield  {journal} {\bibinfo  {journal} {Nature Communications}\ }\textbf {\bibinfo {volume} {17}},\ \bibinfo {pages} {1857} (\bibinfo {year} {2026})}\BibitemShut {NoStop}%
\bibitem [{\citenamefont {Giovannetti}\ \emph {et~al.}(2011)\citenamefont {Giovannetti}, \citenamefont {Lloyd},\ and\ \citenamefont {Maccone}}]{giovannetti2011advances}%
  \BibitemOpen
  \bibfield  {author} {\bibinfo {author} {\bibfnamefont {V.}~\bibnamefont {Giovannetti}}, \bibinfo {author} {\bibfnamefont {S.}~\bibnamefont {Lloyd}},\ and\ \bibinfo {author} {\bibfnamefont {L.}~\bibnamefont {Maccone}},\ }\bibfield  {title} {\bibinfo {title} {Advances in quantum metrology},\ }\href {https://doi.org/10.1038/nphoton.2011.35} {\bibfield  {journal} {\bibinfo  {journal} {Nature Photonics}\ }\textbf {\bibinfo {volume} {5}},\ \bibinfo {pages} {222} (\bibinfo {year} {2011})}\BibitemShut {NoStop}%
\bibitem [{\citenamefont {Montenegro}\ \emph {et~al.}(2025)\citenamefont {Montenegro}, \citenamefont {Mukhopadhyay}, \citenamefont {Yousefjani}, \citenamefont {Sarkar}, \citenamefont {Mishra}, \citenamefont {Paris},\ and\ \citenamefont {Bayat}}]{montenegro2025review}%
  \BibitemOpen
  \bibfield  {author} {\bibinfo {author} {\bibfnamefont {V.}~\bibnamefont {Montenegro}}, \bibinfo {author} {\bibfnamefont {C.}~\bibnamefont {Mukhopadhyay}}, \bibinfo {author} {\bibfnamefont {R.}~\bibnamefont {Yousefjani}}, \bibinfo {author} {\bibfnamefont {S.}~\bibnamefont {Sarkar}}, \bibinfo {author} {\bibfnamefont {U.}~\bibnamefont {Mishra}}, \bibinfo {author} {\bibfnamefont {M.~G.~A.}\ \bibnamefont {Paris}},\ and\ \bibinfo {author} {\bibfnamefont {A.}~\bibnamefont {Bayat}},\ }\bibfield  {title} {\bibinfo {title} {Review: Quantum metrology and sensing with many-body systems},\ }\href {https://doi.org/10.1016/j.physrep.2025.05.005} {\bibfield  {journal} {\bibinfo  {journal} {Physics Reports}\ }\textbf {\bibinfo {volume} {1134}},\ \bibinfo {pages} {1} (\bibinfo {year} {2025})}\BibitemShut {NoStop}%
\bibitem [{\citenamefont {Gerster}\ \emph {et~al.}(2022)\citenamefont {Gerster}, \citenamefont {{Mart{\'i}nez-Garc{\'i}a}}, \citenamefont {Hrmo}, \citenamefont {{van Mourik}}, \citenamefont {Wilhelm}, \citenamefont {Vodola}, \citenamefont {M{\"u}ller}, \citenamefont {Blatt}, \citenamefont {Schindler},\ and\ \citenamefont {Monz}}]{gerster2022experimental}%
  \BibitemOpen
  \bibfield  {author} {\bibinfo {author} {\bibfnamefont {L.}~\bibnamefont {Gerster}}, \bibinfo {author} {\bibfnamefont {F.}~\bibnamefont {{Mart{\'i}nez-Garc{\'i}a}}}, \bibinfo {author} {\bibfnamefont {P.}~\bibnamefont {Hrmo}}, \bibinfo {author} {\bibfnamefont {M.~W.}\ \bibnamefont {{van Mourik}}}, \bibinfo {author} {\bibfnamefont {B.}~\bibnamefont {Wilhelm}}, \bibinfo {author} {\bibfnamefont {D.}~\bibnamefont {Vodola}}, \bibinfo {author} {\bibfnamefont {M.}~\bibnamefont {M{\"u}ller}}, \bibinfo {author} {\bibfnamefont {R.}~\bibnamefont {Blatt}}, \bibinfo {author} {\bibfnamefont {P.}~\bibnamefont {Schindler}},\ and\ \bibinfo {author} {\bibfnamefont {T.}~\bibnamefont {Monz}},\ }\bibfield  {title} {\bibinfo {title} {Experimental bayesian calibration of trapped-ion entangling operations},\ }\href {https://doi.org/10.1103/PRXQuantum.3.020350} {\bibfield  {journal} {\bibinfo  {journal} {PRX Quantum}\ }\textbf {\bibinfo {volume} {3}},\ \bibinfo {pages} {020350} (\bibinfo {year} {2022})}\BibitemShut {NoStop}%
\bibitem [{\citenamefont {Berritta}\ \emph {et~al.}(2025)\citenamefont {Berritta}, \citenamefont {Benestad}, \citenamefont {Pahl}, \citenamefont {Mathews}, \citenamefont {Krzywda}, \citenamefont {Assouly}, \citenamefont {Sung}, \citenamefont {Kim}, \citenamefont {Niedzielski}, \citenamefont {Serniak}, \citenamefont {Schwartz}, \citenamefont {Yoder}, \citenamefont {Chatterjee}, \citenamefont {Grover}, \citenamefont {Danon}, \citenamefont {Oliver},\ and\ \citenamefont {Kuemmeth}}]{berritta2025efficient}%
  \BibitemOpen
  \bibfield  {author} {\bibinfo {author} {\bibfnamefont {F.}~\bibnamefont {Berritta}}, \bibinfo {author} {\bibfnamefont {J.}~\bibnamefont {Benestad}}, \bibinfo {author} {\bibfnamefont {L.}~\bibnamefont {Pahl}}, \bibinfo {author} {\bibfnamefont {M.}~\bibnamefont {Mathews}}, \bibinfo {author} {\bibfnamefont {J.~A.}\ \bibnamefont {Krzywda}}, \bibinfo {author} {\bibfnamefont {R.}~\bibnamefont {Assouly}}, \bibinfo {author} {\bibfnamefont {Y.}~\bibnamefont {Sung}}, \bibinfo {author} {\bibfnamefont {D.~K.}\ \bibnamefont {Kim}}, \bibinfo {author} {\bibfnamefont {B.~M.}\ \bibnamefont {Niedzielski}}, \bibinfo {author} {\bibfnamefont {K.}~\bibnamefont {Serniak}}, \bibinfo {author} {\bibfnamefont {M.~E.}\ \bibnamefont {Schwartz}}, \bibinfo {author} {\bibfnamefont {J.~L.}\ \bibnamefont {Yoder}}, \bibinfo {author} {\bibfnamefont {A.}~\bibnamefont {Chatterjee}}, \bibinfo {author} {\bibfnamefont {J.~A.}\ \bibnamefont {Grover}}, \bibinfo {author} {\bibfnamefont {J.}~\bibnamefont {Danon}}, \bibinfo {author} {\bibfnamefont {W.~D.}\ \bibnamefont {Oliver}},\ and\ \bibinfo {author} {\bibfnamefont {F.}~\bibnamefont {Kuemmeth}},\ }\bibfield  {title} {\bibinfo {title} {Efficient qubit calibration by binary-search hamiltonian tracking},\ }\href {https://doi.org/10.1103/77qg-p68k} {\bibfield  {journal} {\bibinfo  {journal} {PRX Quantum}\ }\textbf {\bibinfo {volume} {6}},\ \bibinfo {pages} {030335} (\bibinfo {year} {2025})}\BibitemShut {NoStop}%
\bibitem [{\citenamefont {Huang}\ \emph {et~al.}(2023)\citenamefont {Huang}, \citenamefont {Tong}, \citenamefont {Fang},\ and\ \citenamefont {Su}}]{huang2023learning}%
  \BibitemOpen
  \bibfield  {author} {\bibinfo {author} {\bibfnamefont {H.-Y.}\ \bibnamefont {Huang}}, \bibinfo {author} {\bibfnamefont {Y.}~\bibnamefont {Tong}}, \bibinfo {author} {\bibfnamefont {D.}~\bibnamefont {Fang}},\ and\ \bibinfo {author} {\bibfnamefont {Y.}~\bibnamefont {Su}},\ }\bibfield  {title} {\bibinfo {title} {Learning many-body hamiltonians with heisenberg-limited scaling},\ }\href {https://doi.org/10.1103/PhysRevLett.130.200403} {\bibfield  {journal} {\bibinfo  {journal} {Physical Review Letters}\ }\textbf {\bibinfo {volume} {130}},\ \bibinfo {pages} {200403} (\bibinfo {year} {2023})}\BibitemShut {NoStop}%
\bibitem [{\citenamefont {Hangleiter}\ \emph {et~al.}(2024)\citenamefont {Hangleiter}, \citenamefont {Roth}, \citenamefont {Fuksa}, \citenamefont {Eisert},\ and\ \citenamefont {Roushan}}]{hangleiter2024robustly}%
  \BibitemOpen
  \bibfield  {author} {\bibinfo {author} {\bibfnamefont {D.}~\bibnamefont {Hangleiter}}, \bibinfo {author} {\bibfnamefont {I.}~\bibnamefont {Roth}}, \bibinfo {author} {\bibfnamefont {J.}~\bibnamefont {Fuksa}}, \bibinfo {author} {\bibfnamefont {J.}~\bibnamefont {Eisert}},\ and\ \bibinfo {author} {\bibfnamefont {P.}~\bibnamefont {Roushan}},\ }\bibfield  {title} {\bibinfo {title} {Robustly learning the hamiltonian dynamics of a superconducting quantum processor},\ }\href {https://doi.org/10.1038/s41467-024-52629-3} {\bibfield  {journal} {\bibinfo  {journal} {Nature Communications}\ }\textbf {\bibinfo {volume} {15}},\ \bibinfo {pages} {9595} (\bibinfo {year} {2024})}\BibitemShut {NoStop}%
\bibitem [{\citenamefont {Proctor}\ \emph {et~al.}(2020)\citenamefont {Proctor}, \citenamefont {Revelle}, \citenamefont {Nielsen}, \citenamefont {Rudinger}, \citenamefont {Lobser}, \citenamefont {Maunz}, \citenamefont {{Blume-Kohout}},\ and\ \citenamefont {Young}}]{proctor2020detecting}%
  \BibitemOpen
  \bibfield  {author} {\bibinfo {author} {\bibfnamefont {T.}~\bibnamefont {Proctor}}, \bibinfo {author} {\bibfnamefont {M.}~\bibnamefont {Revelle}}, \bibinfo {author} {\bibfnamefont {E.}~\bibnamefont {Nielsen}}, \bibinfo {author} {\bibfnamefont {K.}~\bibnamefont {Rudinger}}, \bibinfo {author} {\bibfnamefont {D.}~\bibnamefont {Lobser}}, \bibinfo {author} {\bibfnamefont {P.}~\bibnamefont {Maunz}}, \bibinfo {author} {\bibfnamefont {R.}~\bibnamefont {{Blume-Kohout}}},\ and\ \bibinfo {author} {\bibfnamefont {K.}~\bibnamefont {Young}},\ }\bibfield  {title} {\bibinfo {title} {Detecting and tracking drift in quantum information processors},\ }\href {https://doi.org/10.1038/s41467-020-19074-4} {\bibfield  {journal} {\bibinfo  {journal} {Nature Communications}\ }\textbf {\bibinfo {volume} {11}},\ \bibinfo {pages} {5396} (\bibinfo {year} {2020})}\BibitemShut {NoStop}%
\bibitem [{\citenamefont {Capannelli}\ \emph {et~al.}(2025)\citenamefont {Capannelli}, \citenamefont {Undseth}, \citenamefont {{Fern{\'a}ndez de Fuentes}}, \citenamefont {Raymenants}, \citenamefont {Unseld}, \citenamefont {{Pietx-Casas}}, \citenamefont {Philips}, \citenamefont {M{\k a}dzik}, \citenamefont {Amitonov}, \citenamefont {Tryputen}, \citenamefont {Scappucci},\ and\ \citenamefont {Vandersypen}}]{capannelli2025tracking}%
  \BibitemOpen
  \bibfield  {author} {\bibinfo {author} {\bibfnamefont {K.}~\bibnamefont {Capannelli}}, \bibinfo {author} {\bibfnamefont {B.}~\bibnamefont {Undseth}}, \bibinfo {author} {\bibfnamefont {I.}~\bibnamefont {{Fern{\'a}ndez de Fuentes}}}, \bibinfo {author} {\bibfnamefont {E.}~\bibnamefont {Raymenants}}, \bibinfo {author} {\bibfnamefont {F.~K.}\ \bibnamefont {Unseld}}, \bibinfo {author} {\bibfnamefont {O.}~\bibnamefont {{Pietx-Casas}}}, \bibinfo {author} {\bibfnamefont {S.~G.~J.}\ \bibnamefont {Philips}}, \bibinfo {author} {\bibfnamefont {M.~T.}\ \bibnamefont {M{\k a}dzik}}, \bibinfo {author} {\bibfnamefont {S.~V.}\ \bibnamefont {Amitonov}}, \bibinfo {author} {\bibfnamefont {L.}~\bibnamefont {Tryputen}}, \bibinfo {author} {\bibfnamefont {G.}~\bibnamefont {Scappucci}},\ and\ \bibinfo {author} {\bibfnamefont {L.~M.~K.}\ \bibnamefont {Vandersypen}},\ }\bibfield  {title} {\bibinfo {title} {Tracking spin qubit frequency variations over 912 days},\ }\href {https://doi.org/10.1038/s41534-025-01134-6} {\bibfield  {journal} {\bibinfo  {journal} {npj Quantum Information}\ }\textbf {\bibinfo {volume} {11}},\ \bibinfo {pages} {192} (\bibinfo {year} {2025})}\BibitemShut {NoStop}%
\bibitem [{\citenamefont {{Google Quantum AI and Collaborators}}(2025)}]{acharya2025quantum}%
  \BibitemOpen
  \bibfield  {author} {\bibinfo {author} {\bibnamefont {{Google Quantum AI and Collaborators}}},\ }\bibfield  {title} {\bibinfo {title} {Quantum error correction below the surface code threshold},\ }\href {https://doi.org/10.1038/s41586-024-08449-y} {\bibfield  {journal} {\bibinfo  {journal} {Nature}\ }\textbf {\bibinfo {volume} {638}},\ \bibinfo {pages} {920} (\bibinfo {year} {2025})}\BibitemShut {NoStop}%
\bibitem [{\citenamefont {Bluvstein}\ \emph {et~al.}(2026)\citenamefont {Bluvstein}, \citenamefont {Geim}, \citenamefont {Li}, \citenamefont {Evered}, \citenamefont {Bonilla~Ataides}, \citenamefont {Baranes}, \citenamefont {Gu}, \citenamefont {Manovitz}, \citenamefont {Xu}, \citenamefont {Kalinowski}, \citenamefont {Majidy}, \citenamefont {Kokail}, \citenamefont {Maskara}, \citenamefont {Trapp}, \citenamefont {Stewart}, \citenamefont {Hollerith}, \citenamefont {Zhou}, \citenamefont {Gullans}, \citenamefont {Yelin}, \citenamefont {Greiner}, \citenamefont {Vuleti{\'c}}, \citenamefont {Cain},\ and\ \citenamefont {Lukin}}]{bluvstein2026faulttolerant}%
  \BibitemOpen
  \bibfield  {author} {\bibinfo {author} {\bibfnamefont {D.}~\bibnamefont {Bluvstein}}, \bibinfo {author} {\bibfnamefont {A.~A.}\ \bibnamefont {Geim}}, \bibinfo {author} {\bibfnamefont {S.~H.}\ \bibnamefont {Li}}, \bibinfo {author} {\bibfnamefont {S.~J.}\ \bibnamefont {Evered}}, \bibinfo {author} {\bibfnamefont {J.~P.}\ \bibnamefont {Bonilla~Ataides}}, \bibinfo {author} {\bibfnamefont {G.}~\bibnamefont {Baranes}}, \bibinfo {author} {\bibfnamefont {A.}~\bibnamefont {Gu}}, \bibinfo {author} {\bibfnamefont {T.}~\bibnamefont {Manovitz}}, \bibinfo {author} {\bibfnamefont {M.}~\bibnamefont {Xu}}, \bibinfo {author} {\bibfnamefont {M.}~\bibnamefont {Kalinowski}}, \bibinfo {author} {\bibfnamefont {S.}~\bibnamefont {Majidy}}, \bibinfo {author} {\bibfnamefont {C.}~\bibnamefont {Kokail}}, \bibinfo {author} {\bibfnamefont {N.}~\bibnamefont {Maskara}}, \bibinfo {author} {\bibfnamefont {E.~C.}\ \bibnamefont {Trapp}}, \bibinfo {author} {\bibfnamefont {L.~M.}\ \bibnamefont {Stewart}}, \bibinfo {author} {\bibfnamefont {S.}~\bibnamefont {Hollerith}}, \bibinfo {author} {\bibfnamefont {H.}~\bibnamefont {Zhou}}, \bibinfo {author} {\bibfnamefont {M.~J.}\ \bibnamefont {Gullans}}, \bibinfo {author} {\bibfnamefont {S.~F.}\ \bibnamefont {Yelin}}, \bibinfo {author} {\bibfnamefont {M.}~\bibnamefont {Greiner}}, \bibinfo {author} {\bibfnamefont {V.}~\bibnamefont {Vuleti{\'c}}}, \bibinfo {author} {\bibfnamefont {M.}~\bibnamefont {Cain}},\ and\ \bibinfo {author} {\bibfnamefont {M.~D.}\ \bibnamefont {Lukin}},\ }\bibfield  {title} {\bibinfo {title} {A fault-tolerant neutral-atom architecture for universal quantum computation},\ }\href {https://doi.org/10.1038/s41586-025-09848-5} {\bibfield  {journal} {\bibinfo  {journal} {Nature}\ }\textbf {\bibinfo {volume} {649}},\ \bibinfo {pages} {39} (\bibinfo {year} {2026})}\BibitemShut {NoStop}%
\bibitem [{\citenamefont {Mok}\ \emph {et~al.}(2021)\citenamefont {Mok}, \citenamefont {Bharti}, \citenamefont {Kwek},\ and\ \citenamefont {Bayat}}]{mok2021optimal}%
  \BibitemOpen
  \bibfield  {author} {\bibinfo {author} {\bibfnamefont {W.-K.}\ \bibnamefont {Mok}}, \bibinfo {author} {\bibfnamefont {K.}~\bibnamefont {Bharti}}, \bibinfo {author} {\bibfnamefont {L.-C.}\ \bibnamefont {Kwek}},\ and\ \bibinfo {author} {\bibfnamefont {A.}~\bibnamefont {Bayat}},\ }\bibfield  {title} {\bibinfo {title} {Optimal probes for global quantum thermometry},\ }\href {https://doi.org/10.1038/s42005-021-00572-w} {\bibfield  {journal} {\bibinfo  {journal} {Communications Physics}\ }\textbf {\bibinfo {volume} {4}},\ \bibinfo {pages} {62} (\bibinfo {year} {2021})}\BibitemShut {NoStop}%
\bibitem [{\citenamefont {Liu}\ \emph {et~al.}(2019)\citenamefont {Liu}, \citenamefont {Bu},\ and\ \citenamefont {Takagi}}]{liu2019oneshot}%
  \BibitemOpen
  \bibfield  {author} {\bibinfo {author} {\bibfnamefont {Z.-W.}\ \bibnamefont {Liu}}, \bibinfo {author} {\bibfnamefont {K.}~\bibnamefont {Bu}},\ and\ \bibinfo {author} {\bibfnamefont {R.}~\bibnamefont {Takagi}},\ }\bibfield  {title} {\bibinfo {title} {One-shot operational quantum resource theory},\ }\href {https://doi.org/10.1103/PhysRevLett.123.020401} {\bibfield  {journal} {\bibinfo  {journal} {Physical Review Letters}\ }\textbf {\bibinfo {volume} {123}},\ \bibinfo {pages} {020401} (\bibinfo {year} {2019})}\BibitemShut {NoStop}%
\bibitem [{\citenamefont {Horodecki}\ \emph {et~al.}(1998)\citenamefont {Horodecki}, \citenamefont {Horodecki},\ and\ \citenamefont {Horodecki}}]{horodecki1998mixedstate}%
  \BibitemOpen
  \bibfield  {author} {\bibinfo {author} {\bibfnamefont {M.}~\bibnamefont {Horodecki}}, \bibinfo {author} {\bibfnamefont {P.}~\bibnamefont {Horodecki}},\ and\ \bibinfo {author} {\bibfnamefont {R.}~\bibnamefont {Horodecki}},\ }\bibfield  {title} {\bibinfo {title} {Mixed-state entanglement and distillation: Is there a ``bound'' entanglement in nature?},\ }\href {https://doi.org/10.1103/PhysRevLett.80.5239} {\bibfield  {journal} {\bibinfo  {journal} {Physical Review Letters}\ }\textbf {\bibinfo {volume} {80}},\ \bibinfo {pages} {5239} (\bibinfo {year} {1998})}\BibitemShut {NoStop}%
\bibitem [{zha()}]{zhang2026sm}%
  \BibitemOpen
  \href@noop {} {}\bibinfo {note} {See Supplemental Material for background material, technical derivations, and extended discussions of the main results.}\BibitemShut {Stop}%
\bibitem [{\citenamefont {Janzing}\ \emph {et~al.}(2000)\citenamefont {Janzing}, \citenamefont {Wocjan}, \citenamefont {Zeier}, \citenamefont {Geiss},\ and\ \citenamefont {Beth}}]{janzing2000thermodynamic}%
  \BibitemOpen
  \bibfield  {author} {\bibinfo {author} {\bibfnamefont {D.}~\bibnamefont {Janzing}}, \bibinfo {author} {\bibfnamefont {P.}~\bibnamefont {Wocjan}}, \bibinfo {author} {\bibfnamefont {R.}~\bibnamefont {Zeier}}, \bibinfo {author} {\bibfnamefont {R.}~\bibnamefont {Geiss}},\ and\ \bibinfo {author} {\bibfnamefont {{\relax Th}.}~\bibnamefont {Beth}},\ }\bibfield  {title} {\bibinfo {title} {Thermodynamic cost of reliability and low temperatures: Tightening landauer's principle and the second law},\ }\href {https://doi.org/10.1023/A:1026422630734} {\bibfield  {journal} {\bibinfo  {journal} {International Journal of Theoretical Physics}\ }\textbf {\bibinfo {volume} {39}},\ \bibinfo {pages} {2717} (\bibinfo {year} {2000})}\BibitemShut {NoStop}%
\bibitem [{\citenamefont {Faist}\ \emph {et~al.}(2015)\citenamefont {Faist}, \citenamefont {Oppenheim},\ and\ \citenamefont {Renner}}]{faist2015gibbspreserving}%
  \BibitemOpen
  \bibfield  {author} {\bibinfo {author} {\bibfnamefont {P.}~\bibnamefont {Faist}}, \bibinfo {author} {\bibfnamefont {J.}~\bibnamefont {Oppenheim}},\ and\ \bibinfo {author} {\bibfnamefont {R.}~\bibnamefont {Renner}},\ }\bibfield  {title} {\bibinfo {title} {Gibbs-preserving maps outperform thermal operations in the quantum regime},\ }\href {https://doi.org/10.1088/1367-2630/17/4/043003} {\bibfield  {journal} {\bibinfo  {journal} {New Journal of Physics}\ }\textbf {\bibinfo {volume} {17}},\ \bibinfo {pages} {043003} (\bibinfo {year} {2015})}\BibitemShut {NoStop}%
\bibitem [{\citenamefont {Faist}\ and\ \citenamefont {Renner}(2018)}]{faist2018fundamental}%
  \BibitemOpen
  \bibfield  {author} {\bibinfo {author} {\bibfnamefont {P.}~\bibnamefont {Faist}}\ and\ \bibinfo {author} {\bibfnamefont {R.}~\bibnamefont {Renner}},\ }\bibfield  {title} {\bibinfo {title} {Fundamental work cost of quantum processes},\ }\href {https://doi.org/10.1103/PhysRevX.8.021011} {\bibfield  {journal} {\bibinfo  {journal} {Physical Review X}\ }\textbf {\bibinfo {volume} {8}},\ \bibinfo {pages} {021011} (\bibinfo {year} {2018})}\BibitemShut {NoStop}%
\bibitem [{\citenamefont {Faist}\ \emph {et~al.}(2019)\citenamefont {Faist}, \citenamefont {Berta},\ and\ \citenamefont {Brand{\~a}o}}]{faist2019thermodynamic}%
  \BibitemOpen
  \bibfield  {author} {\bibinfo {author} {\bibfnamefont {P.}~\bibnamefont {Faist}}, \bibinfo {author} {\bibfnamefont {M.}~\bibnamefont {Berta}},\ and\ \bibinfo {author} {\bibfnamefont {F.}~\bibnamefont {Brand{\~a}o}},\ }\bibfield  {title} {\bibinfo {title} {Thermodynamic capacity of quantum processes},\ }\href {https://doi.org/10.1103/PhysRevLett.122.200601} {\bibfield  {journal} {\bibinfo  {journal} {Physical Review Letters}\ }\textbf {\bibinfo {volume} {122}},\ \bibinfo {pages} {200601} (\bibinfo {year} {2019})}\BibitemShut {NoStop}%
\bibitem [{\citenamefont {Shiraishi}\ and\ \citenamefont {Sagawa}(2021)}]{shiraishi2021quantum}%
  \BibitemOpen
  \bibfield  {author} {\bibinfo {author} {\bibfnamefont {N.}~\bibnamefont {Shiraishi}}\ and\ \bibinfo {author} {\bibfnamefont {T.}~\bibnamefont {Sagawa}},\ }\bibfield  {title} {\bibinfo {title} {Quantum thermodynamics of correlated-catalytic state conversion at small scale},\ }\href {https://doi.org/10.1103/PhysRevLett.126.150502} {\bibfield  {journal} {\bibinfo  {journal} {Physical Review Letters}\ }\textbf {\bibinfo {volume} {126}},\ \bibinfo {pages} {150502} (\bibinfo {year} {2021})}\BibitemShut {NoStop}%
\bibitem [{\citenamefont {Alhambra}\ \emph {et~al.}(2016)\citenamefont {Alhambra}, \citenamefont {Oppenheim},\ and\ \citenamefont {Perry}}]{alhambra2016fluctuating}%
  \BibitemOpen
  \bibfield  {author} {\bibinfo {author} {\bibfnamefont {{\'A}.~M.}\ \bibnamefont {Alhambra}}, \bibinfo {author} {\bibfnamefont {J.}~\bibnamefont {Oppenheim}},\ and\ \bibinfo {author} {\bibfnamefont {C.}~\bibnamefont {Perry}},\ }\bibfield  {title} {\bibinfo {title} {Fluctuating states: What is the probability of a thermodynamical transition?},\ }\href {https://doi.org/10.1103/PhysRevX.6.041016} {\bibfield  {journal} {\bibinfo  {journal} {Physical Review X}\ }\textbf {\bibinfo {volume} {6}},\ \bibinfo {pages} {041016} (\bibinfo {year} {2016})}\BibitemShut {NoStop}%
\bibitem [{\citenamefont {Takagi}\ \emph {et~al.}(2024)\citenamefont {Takagi}, \citenamefont {Yuan}, \citenamefont {Regula},\ and\ \citenamefont {Gu}}]{takagi2024virtual}%
  \BibitemOpen
  \bibfield  {author} {\bibinfo {author} {\bibfnamefont {R.}~\bibnamefont {Takagi}}, \bibinfo {author} {\bibfnamefont {X.}~\bibnamefont {Yuan}}, \bibinfo {author} {\bibfnamefont {B.}~\bibnamefont {Regula}},\ and\ \bibinfo {author} {\bibfnamefont {M.}~\bibnamefont {Gu}},\ }\bibfield  {title} {\bibinfo {title} {Virtual quantum resource distillation: General framework and applications},\ }\href {https://doi.org/10.1103/PhysRevA.109.022403} {\bibfield  {journal} {\bibinfo  {journal} {Physical Review A}\ }\textbf {\bibinfo {volume} {109}},\ \bibinfo {pages} {022403} (\bibinfo {year} {2024})}\BibitemShut {NoStop}%
\bibitem [{\citenamefont {Yuan}\ \emph {et~al.}(2024)\citenamefont {Yuan}, \citenamefont {Regula}, \citenamefont {Takagi},\ and\ \citenamefont {Gu}}]{yuan2024virtual}%
  \BibitemOpen
  \bibfield  {author} {\bibinfo {author} {\bibfnamefont {X.}~\bibnamefont {Yuan}}, \bibinfo {author} {\bibfnamefont {B.}~\bibnamefont {Regula}}, \bibinfo {author} {\bibfnamefont {R.}~\bibnamefont {Takagi}},\ and\ \bibinfo {author} {\bibfnamefont {M.}~\bibnamefont {Gu}},\ }\bibfield  {title} {\bibinfo {title} {Virtual quantum resource distillation},\ }\href {https://doi.org/10.1103/PhysRevLett.132.050203} {\bibfield  {journal} {\bibinfo  {journal} {Physical Review Letters}\ }\textbf {\bibinfo {volume} {132}},\ \bibinfo {pages} {050203} (\bibinfo {year} {2024})}\BibitemShut {NoStop}%
\bibitem [{\citenamefont {Zhu}\ \emph {et~al.}(2024)\citenamefont {Zhu}, \citenamefont {Mo}, \citenamefont {Chen},\ and\ \citenamefont {Wang}}]{zhu2024reversing}%
  \BibitemOpen
  \bibfield  {author} {\bibinfo {author} {\bibfnamefont {C.}~\bibnamefont {Zhu}}, \bibinfo {author} {\bibfnamefont {Y.}~\bibnamefont {Mo}}, \bibinfo {author} {\bibfnamefont {Y.-A.}\ \bibnamefont {Chen}},\ and\ \bibinfo {author} {\bibfnamefont {X.}~\bibnamefont {Wang}},\ }\bibfield  {title} {\bibinfo {title} {Reversing unknown quantum processes via virtual combs for channels with limited information},\ }\href {https://doi.org/10.1103/PhysRevLett.133.030801} {\bibfield  {journal} {\bibinfo  {journal} {Physical Review Letters}\ }\textbf {\bibinfo {volume} {133}},\ \bibinfo {pages} {030801} (\bibinfo {year} {2024})}\BibitemShut {NoStop}%
\bibitem [{\citenamefont {Fang}\ and\ \citenamefont {Liu}(2020)}]{fang2020nogo}%
  \BibitemOpen
  \bibfield  {author} {\bibinfo {author} {\bibfnamefont {K.}~\bibnamefont {Fang}}\ and\ \bibinfo {author} {\bibfnamefont {Z.-W.}\ \bibnamefont {Liu}},\ }\bibfield  {title} {\bibinfo {title} {No-go theorems for quantum resource purification},\ }\href {https://doi.org/10.1103/PhysRevLett.125.060405} {\bibfield  {journal} {\bibinfo  {journal} {Physical Review Letters}\ }\textbf {\bibinfo {volume} {125}},\ \bibinfo {pages} {060405} (\bibinfo {year} {2020})}\BibitemShut {NoStop}%
\bibitem [{\citenamefont {Fang}\ and\ \citenamefont {Liu}(2022)}]{fang2022nogo}%
  \BibitemOpen
  \bibfield  {author} {\bibinfo {author} {\bibfnamefont {K.}~\bibnamefont {Fang}}\ and\ \bibinfo {author} {\bibfnamefont {Z.-W.}\ \bibnamefont {Liu}},\ }\bibfield  {title} {\bibinfo {title} {No-go theorems for quantum resource purification: New approach and channel theory},\ }\href {https://doi.org/10.1103/PRXQuantum.3.010337} {\bibfield  {journal} {\bibinfo  {journal} {PRX Quantum}\ }\textbf {\bibinfo {volume} {3}},\ \bibinfo {pages} {010337} (\bibinfo {year} {2022})}\BibitemShut {NoStop}%
\bibitem [{\citenamefont {Liu}\ \emph {et~al.}(2025)\citenamefont {Liu}, \citenamefont {Du}, \citenamefont {Cai},\ and\ \citenamefont {Liu}}]{liu2025no}%
  \BibitemOpen
  \bibfield  {author} {\bibinfo {author} {\bibfnamefont {Z.}~\bibnamefont {Liu}}, \bibinfo {author} {\bibfnamefont {Z.}~\bibnamefont {Du}}, \bibinfo {author} {\bibfnamefont {Z.}~\bibnamefont {Cai}},\ and\ \bibinfo {author} {\bibfnamefont {Z.-W.}\ \bibnamefont {Liu}},\ }\href {https://doi.org/10.48550/arXiv.2509.21111} {\bibinfo {title} {No universal purification in quantum mechanics}} (\bibinfo {year} {2025}),\ \Eprint {https://arxiv.org/abs/2509.21111} {arXiv:2509.21111 [quant-ph]} \BibitemShut {NoStop}%
\bibitem [{\citenamefont {Matsumoto}\ and\ \citenamefont {Hayashi}(2007)}]{matsumoto2007universal}%
  \BibitemOpen
  \bibfield  {author} {\bibinfo {author} {\bibfnamefont {K.}~\bibnamefont {Matsumoto}}\ and\ \bibinfo {author} {\bibfnamefont {M.}~\bibnamefont {Hayashi}},\ }\bibfield  {title} {\bibinfo {title} {Universal distortion-free entanglement concentration},\ }\href {https://doi.org/10.1103/PhysRevA.75.062338} {\bibfield  {journal} {\bibinfo  {journal} {Physical Review A}\ }\textbf {\bibinfo {volume} {75}},\ \bibinfo {pages} {062338} (\bibinfo {year} {2007})}\BibitemShut {NoStop}%
\bibitem [{Note1()}]{Note1}%
  \BibitemOpen
  \bibinfo {note} {An equivalent description is $V({{\protect \mathscr {K}}}) = \protect \operatorname {aff}({{\protect \mathscr {K}}}) - \tau _0$ for any fixed $\tau _0 \in {{\protect \mathscr {K}}}$.}\BibitemShut {Stop}%
\bibitem [{\citenamefont {Fang}\ \emph {et~al.}(2025)\citenamefont {Fang}, \citenamefont {Fawzi},\ and\ \citenamefont {Fawzi}}]{fang2025generalized}%
  \BibitemOpen
  \bibfield  {author} {\bibinfo {author} {\bibfnamefont {K.}~\bibnamefont {Fang}}, \bibinfo {author} {\bibfnamefont {H.}~\bibnamefont {Fawzi}},\ and\ \bibinfo {author} {\bibfnamefont {O.}~\bibnamefont {Fawzi}},\ }\href {https://doi.org/10.48550/arXiv.2411.04035} {\bibinfo {title} {Generalized quantum asymptotic equipartition}} (\bibinfo {year} {2025}),\ \Eprint {https://arxiv.org/abs/2411.04035} {arXiv:2411.04035 [quant-ph]} \BibitemShut {NoStop}%
\bibitem [{\citenamefont {Datta}(2009)}]{datta2009min}%
  \BibitemOpen
  \bibfield  {author} {\bibinfo {author} {\bibfnamefont {N.}~\bibnamefont {Datta}},\ }\bibfield  {title} {\bibinfo {title} {Min- and max-relative entropies and a new entanglement monotone},\ }\href {https://doi.org/10.1109/TIT.2009.2018325} {\bibfield  {journal} {\bibinfo  {journal} {IEEE Transactions on Information Theory}\ }\textbf {\bibinfo {volume} {55}},\ \bibinfo {pages} {2816} (\bibinfo {year} {2009})}\BibitemShut {NoStop}%
\bibitem [{\citenamefont {George}\ and\ \citenamefont {Chitambar}(2024)}]{george2024conerestricted}%
  \BibitemOpen
  \bibfield  {author} {\bibinfo {author} {\bibfnamefont {I.}~\bibnamefont {George}}\ and\ \bibinfo {author} {\bibfnamefont {E.}~\bibnamefont {Chitambar}},\ }\bibfield  {title} {\bibinfo {title} {Cone-restricted information theory},\ }\href {https://doi.org/10.1088/1751-8121/ad52d5} {\bibfield  {journal} {\bibinfo  {journal} {Journal of Physics A: Mathematical and Theoretical}\ }\textbf {\bibinfo {volume} {57}},\ \bibinfo {pages} {265302} (\bibinfo {year} {2024})}\BibitemShut {NoStop}%
\bibitem [{\citenamefont {Wang}\ and\ \citenamefont {Wilde}(2019)}]{wang2019resource}%
  \BibitemOpen
  \bibfield  {author} {\bibinfo {author} {\bibfnamefont {X.}~\bibnamefont {Wang}}\ and\ \bibinfo {author} {\bibfnamefont {M.~M.}\ \bibnamefont {Wilde}},\ }\bibfield  {title} {\bibinfo {title} {Resource theory of asymmetric distinguishability},\ }\href {https://doi.org/10.1103/PhysRevResearch.1.033170} {\bibfield  {journal} {\bibinfo  {journal} {Physical Review Research}\ }\textbf {\bibinfo {volume} {1}},\ \bibinfo {pages} {033170} (\bibinfo {year} {2019})}\BibitemShut {NoStop}%
\bibitem [{\citenamefont {Renes}(2016)}]{renes2016relative}%
  \BibitemOpen
  \bibfield  {author} {\bibinfo {author} {\bibfnamefont {J.~M.}\ \bibnamefont {Renes}},\ }\bibfield  {title} {\bibinfo {title} {Relative submajorization and its use in quantum resource theories},\ }\href {https://doi.org/10.1063/1.4972295} {\bibfield  {journal} {\bibinfo  {journal} {Journal of Mathematical Physics}\ }\textbf {\bibinfo {volume} {57}},\ \bibinfo {pages} {122202} (\bibinfo {year} {2016})}\BibitemShut {NoStop}%
\bibitem [{\citenamefont {{Lipka-Bartosik}}\ \emph {et~al.}(2024)\citenamefont {{Lipka-Bartosik}}, \citenamefont {Chubb}, \citenamefont {Renes}, \citenamefont {Tomamichel},\ and\ \citenamefont {Korzekwa}}]{lipka-bartosik2024quantum}%
  \BibitemOpen
  \bibfield  {author} {\bibinfo {author} {\bibfnamefont {P.}~\bibnamefont {{Lipka-Bartosik}}}, \bibinfo {author} {\bibfnamefont {C.~T.}\ \bibnamefont {Chubb}}, \bibinfo {author} {\bibfnamefont {J.~M.}\ \bibnamefont {Renes}}, \bibinfo {author} {\bibfnamefont {M.}~\bibnamefont {Tomamichel}},\ and\ \bibinfo {author} {\bibfnamefont {K.}~\bibnamefont {Korzekwa}},\ }\bibfield  {title} {\bibinfo {title} {Quantum dichotomies and coherent thermodynamics beyond first-order asymptotics},\ }\href {https://doi.org/10.1103/PRXQuantum.5.020335} {\bibfield  {journal} {\bibinfo  {journal} {PRX Quantum}\ }\textbf {\bibinfo {volume} {5}},\ \bibinfo {pages} {020335} (\bibinfo {year} {2024})}\BibitemShut {NoStop}%
\bibitem [{\citenamefont {Regula}\ \emph {et~al.}(2020)\citenamefont {Regula}, \citenamefont {Bu}, \citenamefont {Takagi},\ and\ \citenamefont {Liu}}]{regula2020benchmarking}%
  \BibitemOpen
  \bibfield  {author} {\bibinfo {author} {\bibfnamefont {B.}~\bibnamefont {Regula}}, \bibinfo {author} {\bibfnamefont {K.}~\bibnamefont {Bu}}, \bibinfo {author} {\bibfnamefont {R.}~\bibnamefont {Takagi}},\ and\ \bibinfo {author} {\bibfnamefont {Z.-W.}\ \bibnamefont {Liu}},\ }\bibfield  {title} {\bibinfo {title} {Benchmarking one-shot distillation in general quantum resource theories},\ }\href {https://doi.org/10.1103/PhysRevA.101.062315} {\bibfield  {journal} {\bibinfo  {journal} {Physical Review A}\ }\textbf {\bibinfo {volume} {101}},\ \bibinfo {pages} {062315} (\bibinfo {year} {2020})}\BibitemShut {NoStop}%
\bibitem [{\citenamefont {Regula}\ and\ \citenamefont {Takagi}(2021)}]{regula2021oneshot}%
  \BibitemOpen
  \bibfield  {author} {\bibinfo {author} {\bibfnamefont {B.}~\bibnamefont {Regula}}\ and\ \bibinfo {author} {\bibfnamefont {R.}~\bibnamefont {Takagi}},\ }\bibfield  {title} {\bibinfo {title} {One-shot manipulation of dynamical quantum resources},\ }\href {https://doi.org/10.1103/PhysRevLett.127.060402} {\bibfield  {journal} {\bibinfo  {journal} {Physical Review Letters}\ }\textbf {\bibinfo {volume} {127}},\ \bibinfo {pages} {060402} (\bibinfo {year} {2021})}\BibitemShut {NoStop}%
\bibitem [{\citenamefont {Fang}\ \emph {et~al.}(2019)\citenamefont {Fang}, \citenamefont {Wang}, \citenamefont {Tomamichel},\ and\ \citenamefont {Duan}}]{fang2019nonasymptotic}%
  \BibitemOpen
  \bibfield  {author} {\bibinfo {author} {\bibfnamefont {K.}~\bibnamefont {Fang}}, \bibinfo {author} {\bibfnamefont {X.}~\bibnamefont {Wang}}, \bibinfo {author} {\bibfnamefont {M.}~\bibnamefont {Tomamichel}},\ and\ \bibinfo {author} {\bibfnamefont {R.}~\bibnamefont {Duan}},\ }\bibfield  {title} {\bibinfo {title} {Non-asymptotic entanglement distillation},\ }\href {https://doi.org/10.1109/TIT.2019.2914688} {\bibfield  {journal} {\bibinfo  {journal} {IEEE Transactions on Information Theory}\ }\textbf {\bibinfo {volume} {65}},\ \bibinfo {pages} {6454} (\bibinfo {year} {2019})}\BibitemShut {NoStop}%
\bibitem [{\citenamefont {Regula}\ \emph {et~al.}(2018)\citenamefont {Regula}, \citenamefont {Fang}, \citenamefont {Wang},\ and\ \citenamefont {Adesso}}]{regula2018oneshot}%
  \BibitemOpen
  \bibfield  {author} {\bibinfo {author} {\bibfnamefont {B.}~\bibnamefont {Regula}}, \bibinfo {author} {\bibfnamefont {K.}~\bibnamefont {Fang}}, \bibinfo {author} {\bibfnamefont {X.}~\bibnamefont {Wang}},\ and\ \bibinfo {author} {\bibfnamefont {G.}~\bibnamefont {Adesso}},\ }\bibfield  {title} {\bibinfo {title} {One-shot coherence distillation},\ }\href {https://doi.org/10.1103/PhysRevLett.121.010401} {\bibfield  {journal} {\bibinfo  {journal} {Physical Review Letters}\ }\textbf {\bibinfo {volume} {121}},\ \bibinfo {pages} {010401} (\bibinfo {year} {2018})}\BibitemShut {NoStop}%
\bibitem [{\citenamefont {{Hisaharu Umegaki}}(1954)}]{umegaki1954conditional}%
  \BibitemOpen
  \bibfield  {author} {\bibinfo {author} {\bibnamefont {{Hisaharu Umegaki}}},\ }\bibfield  {title} {\bibinfo {title} {Conditional expectation in an operator algebra},\ }\href {https://doi.org/10.2748/tmj/1178245177} {\bibfield  {journal} {\bibinfo  {journal} {Tohoku Mathematical Journal}\ }\textbf {\bibinfo {volume} {6}},\ \bibinfo {pages} {177} (\bibinfo {year} {1954})}\BibitemShut {NoStop}%
\bibitem [{\citenamefont {Fang}\ and\ \citenamefont {Hayashi}(2025)}]{fang2025error}%
  \BibitemOpen
  \bibfield  {author} {\bibinfo {author} {\bibfnamefont {K.}~\bibnamefont {Fang}}\ and\ \bibinfo {author} {\bibfnamefont {M.}~\bibnamefont {Hayashi}},\ }\href {https://doi.org/10.48550/arXiv.2508.12901} {\bibinfo {title} {Error exponents of quantum state discrimination with composite correlated hypotheses}} (\bibinfo {year} {2025}),\ \Eprint {https://arxiv.org/abs/2508.12901} {arXiv:2508.12901 [quant-ph]} \BibitemShut {NoStop}%
\end{thebibliography}%

\clearpage
\onecolumngrid

\newgeometry{lmargin=1.05in,rmargin=1.05in,tmargin=1.in,bmargin=1in}

\let\addcontentsline\oldaddcontentsline

\setcounter{secnumdepth}{2}
\setcounter{tocdepth}{2}

\setcounter{section}{0}
\setcounter{figure}{0}
\setcounter{table}{0}
\setcounter{equation}{0}
\setcounter{theorem}{0}
\setcounter{definition}{0}

\renewcommand{\thesection}{S\arabic{section}}
\renewcommand{\thefigure}{S\arabic{figure}}
\renewcommand{\thetable}{S\arabic{table}}
\renewcommand{\thetheorem}{S\arabic{theorem}}
\renewcommand{\thedefinition}{S\arabic{definition}}

\begin{center}
    {\large \textbf{Supplemental Material}}
\end{center}

\vspace{0.5cm}

In this Supplemental Material, we provide more detailed expositions, proofs and discussions of the results in the main text. We may reiterate some of the steps and statements to ensure that the Supplemental Material are explicit and self-contained.

\tableofcontents

\section{Preliminaries}

\subsection{Notation}

The sets of real and natural numbers are denoted by $\RR$ and $\NN$, respectively. Unless otherwise specified, all logarithms are taken to base two and written as $\log$. Quantum systems are labeled by capital letters such as $A$, $B$, and $C$, and the finite-dimensional Hilbert space associated with system $A$ is denoted by $\cH_A$. The set of linear operators on $\cH_A$ is denoted by $\sL(A)$. The sets of Hermitian and positive semidefinite operators on $\cH_A$ are denoted by $\HERM(A)$ and $\PSD(A)$, respectively. The set of density operators (i.e., positive semidefinite operators with unit trace) on $\cH_A$ is denoted by $\density(A)$. When the underlying system is clear from context, we omit the system label and simply write $\sL, \HERM, \PSD$ and $\density$. For $X,Y\in\sL$, we write $X\geq Y$ if $X-Y \in \PSD$. Script letters such as $\sP, \sE$, and $\sK$ denote sets of linear operators, or sequences of such sets when the context is clear. For any set of density operators $\sP \subseteq \density$, we define its convex hull and affine hull by
\begin{align}
    \conv(\sP) & := \left\{\sum_{i=1}^k c_i \rho_i: k\in\NN,\, \rho_i \in \sP,  c_i \geq 0,\,\, \sum_{i=1}^k c_i = 1\right\},\\
    \aff(\sP) & := \left\{\sum_{i=1}^k a_i \rho_i : k\in\NN, \rho_i \in \sP, a_i \in \RR,\, \sum_{i=1}^k a_i = 1\right\},
\end{align}
respectively. For $X, Y \in \sL$, their trace distance is defined as 
\begin{align} 
    T(X,Y) := \tfrac{1}{2}\|X - Y\|_1.
\end{align} 
Given $\sP \subseteq \density$ and $\ve \in [0,1)$, we define the $\ve$-ball around $\sP$ as 
\begin{align} 
    \sB_\ve(\sP) := \{\omega \in \density : T(\omega,\rho) \leq \ve \text{ for some } \rho \in \sP\}.
\end{align}

We use $\fL(A\to A^\prime)$ to denote the set of all linear maps from $\sL(A)$ to $\sL(A^\prime)$, and $\fF(A\to A^\prime)$ to denote a specified subset of such maps. The set of completely positive trace-preserving maps from $\sL(A)$ to $\sL(A^\prime)$ is denoted by $\CPTP(A\to A^\prime)$. When the input and output systems are clear from context, we simply write $\fL$, $\fF$ and $\CPTP$. Calligraphic letters like $\cL, \cF$ and $\cE$ denote individual maps in these classes.

\subsection{Max- and min-relative entropies}

Throughout this work, we used the standard smoothed max- and min-relative entropies between two sets of quantum states, together with their new subspace-constrained variants. For convenience, we collect the definitions of the standard smoothed max- and min-relative entropies in this section. The subspace-constrained variants will be introduced later when they are required.

\begin{definition}[Smoothed max-relative entropy between two sets~\cite{fang2025generalized}]
    \label{def:smoothed max relative entropy}
    Let $\sP, \sE \subseteq \density$ be two sets of quantum states. For any $\ve \in [0,1)$, the smoothed max-relative entropy between $\sP$ and $\sE$ is defined as
    \begin{align}\label{eq:smoothed max relative entropy}
        D_{\max,\ve}(\sP\|\sE):= \log\inf_{\substack{\tau\in\sE \\ \omega\in\sB_{\ve}(\sP)}}\left\{M > 1: M\tau - \omega \in \PSD \right\}.
    \end{align}
\end{definition}

When both $\sP$ and $\sE$ are singletons, this definition reduces to the standard smoothed max-relative entropy between two states~\cite{datta2009min}.

\begin{definition}[Smoothed min-relative entropy between two sets~\cite{watanabe2024black,fang2025generalized}]
    \label{def:smoothed min relative entropy}
    Let $\sP, \sE \subseteq \density$ be two sets of quantum states. For any $\ve \in [0,1)$, the smoothed min-relative entropy between $\sP$ and $\sE$ is defined as
    \begin{align}
        \label{eq:smoothed min relative entropy}
        D_{\min,\ve}(\sP\|\sE) := -\log \min_{\substack{0 \leq E \leq I}} \left\{\sup_{\tau \in \sE}\tr[\tau E] : \sup_{\rho \in \sP} \tr[\rho(I-E)] \leq \ve\right\}.
    \end{align}
\end{definition}

It is straightforward to verify that $D_{\min,\ve}(\sP\|\sE)$ is unchanged when $\sP$ and $\sE$ are replaced by their convex hulls, i.e., $D_{\min,\ve}(\sP\|\sE) = D_{\min,\ve}(\conv(\sP)\|\conv(\sE))$. In case where $\sP = \{\rho\}$ and $\sE =\{\tau\}$ are singletons, this definition recovers the standard smoothed min-relative entropy (also known as hypothesis testing relative entropy) between two states $D_{\min,\ve}(\rho\|\tau)$. 

We will also use the same optimization formula in an extended sense, where the second argument is allowed to be a Hermitian trace-one operator, or more generally a set of such operators. If the optimized value inside the logarithm is nonpositive, we set the corresponding quantity to be $+\infty$. The following lemma establishes a faithfulness property for this extended quantity.

\begin{lemma}[Faithfulness of min-relative entropy]
\label{lem:faithfulness of min-relative entropy}
    Let $\ve \in (0,1)$, $\rho \in \density$, and $X \in \HERM$ with $\tr[X]=1$. Then $D_{\min,\ve}(\rho\|X) = -\log(1-\ve)$ if and only if $\rho = X$.
\end{lemma}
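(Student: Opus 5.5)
The plan is to compute the optimal value
\[
\beta^\ast:=\min\{\tr[XE]:0\leq E\leq I,\ \tr[\rho(I-E)]\leq\ve\}
\]
directly. This minimum exists because the feasible set is compact and nonempty, and the objective is linear. The claim is then equivalent to showing that $\beta^\ast=1-\ve$ if and only if $\rho=X$. If $\beta^\ast\leq 0$, the quantity is $+\infty$ by convention, so it cannot equal the finite number $-\log(1-\ve)$; that case can only occur on the ``$\rho\neq X$'' side. Throughout I will use the test $E_0:=(1-\ve)I$. It is feasible, since $0\leq E_0\leq I$ and $\tr[\rho(I-E_0)]=\ve$. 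It also gives $\tr[XE_0]=(1-\ve)\tr[X]=1-\ve$, so $\beta^\ast\leq 1-\ve$ always holds.

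For the direction ``$\rho=X$ implies equality'', I will argue as follows. Every feasible $E$ satisfies $\tr[XE]=\tr[\rho E]=1-\tr[\rho(I-E)]\geq 1-\ve$. Hence $\beta^\ast\geq 1-\ve$, and together with the bound from $E_0$ this gives $\beta^\ast=1-\ve$, that is, $D_{\min,\ve}(\rho\|X)=-\log(1-\ve)$.

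For the converse, I will assume $\rho\neq X$ and exhibit a feasible test with objective value strictly below $1-\ve$, by perturbing $E_0$. Set $\Delta:=X-\rho\neq 0$, which is Hermitian, and define
\[
H:=-\Delta+\tr[\rho\Delta]\,I .
\]
This $H$ is Hermitian with $\tr[\rho H]=0$, and
\[
\tr[XH]=-\tr[X\Delta]+\tr[\rho\Delta]=-\tr[\Delta^2]<0 .
\]
Now take $E_t:=(1-\ve)I+tH$ for small $t>0$. Since $\ve\in(0,1)$, the operator $E_0$ lies strictly inside the operator interval $[0,I]$, so $0\leq E_t\leq I$ for all sufficiently small $t$. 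The constraint holds because $\tr[\rho(I-E_t)]=\ve-t\,\tr[\rho H]=\ve$. The objective is $\tr[XE_t]=1-\ve-t\,\tr[\Delta^2]<1-\ve$. Therefore $\beta^\ast<1-\ve$, so $D_{\min,\ve}(\rho\|X)$ is either $+\infty$ or strictly larger than $-\log(1-\ve)$, and in both cases it differs from $-\log(1-\ve)$.

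The only delicate point is choosing the perturbation direction. It must keep the type-I constraint tight, which is the role of the shift by $\tr[\rho\Delta]I$, while strictly decreasing the objective under $X$. The identity $\tr[XH]-\tr[\rho H]=-\|\Delta\|_2^2$ handles both requirements at once. The hypothesis $\ve>0$ is exactly what lets the perturbation stay within $0\leq E\leq I$.
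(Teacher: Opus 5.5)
Your proof is correct and matches the paper's argument: the same feasible test $(1-\ve)I$ settles the direction $\rho=X$, and the same perturbation $E_t=(1-\ve)I+tH$ with $\tr[XH]<0$ handles the contrapositive. The only difference is your choice of $H$, which keeps the type-I error exactly at $\ve$ via $\tr[\rho H]=0$, whereas the paper's symmetric shift gives $\tr[H\rho]=\tfrac12\tr[\Delta^2]>0$ and hence strict slack; either choice works.
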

\begin{proof}
    ($\Rightarrow$, by contrapositive) Suppose that $\rho\neq X$. Define $\Delta:=\rho-X\neq 0$ and $H:=\Delta - \frac{\tr[\Delta\rho]+\tr[\Delta X]}{2}\,I$. Since $\tr[\rho]=\tr[X]=1$, a direct calculation shows that
    \begin{align}
        \tr[H\rho]=\tfrac{1}{2} \tr[\Delta^2]>0, \qquad \tr[H X]=-\tfrac{1}{2}\tr[\Delta^2]<0.
    \end{align}
    For $t > 0$, define the operator $E_t:= (1-\ve)I+tH$. Since $H$ is Hermitian and bounded, we have
    \begin{align}
        (1-\ve -t\|H\|) I \leq E_t \leq (1-\ve + t\|H\|) I.
    \end{align}
    Thus, choosing any $0< t < \min\left\{\frac{1-\ve}{\|H\|},\,\frac{\ve}{\|H\|}\right\}$ ensures that $0\leq E_t \leq I$. For any such $t$, we have
    \begin{align}
        \tr[(I-E_t)\rho] = \tr[(\ve I- tH)\rho] = \ve - t\tr[H\rho] < \ve,
    \end{align}
    so $E_t$ is a feasible test operator. Now let
    \begin{align}
        \alpha := \min_{\substack{0\leq E\leq I \\ \tr[(I-E)\rho]\leq\ve}} \tr[E X].
    \end{align}
    Since $E_t$ is feasible, 
    \begin{align}
        \alpha \leq \tr[E_t X] = (1-\ve) + t\tr[H X]  < 1-\ve
    \end{align} 
    If $\alpha \leq 0$, then by convention $D_{\min,\ve}(\rho\|X) = +\infty$. If $\alpha > 0$, then 
    \begin{align}
        D_{\min,\ve}(\rho\|X) = -\log \alpha > -\log(1-\ve).
    \end{align}
    Thus, in either case, $D_{\min,\ve}(\rho\|X) \neq -\log(1-\ve)$. This proves the contrapositive. \smallskip

    ($\Leftarrow$) If $\rho = X$, then for every $0\leq E\leq I$ with $\tr[(I-E)\rho]\leq\ve$,
    \begin{align}
        \tr[E X] = \tr[E\rho] = 1-\tr[(I-E)\rho] \geq 1-\ve.
    \end{align}
    Moreover, the choice $E=(1-\ve)I$ is feasible and achieves $\tr[E X]=1-\ve$. Hence, $D_{\min,\ve}(\rho\|X) = -\log(1-\ve)$. This completes the proof.
\end{proof}

\subsection{Standard quantum thermodynamic framework}

In this section, we briefly review the standard resource-theoretic framework of quantum thermodynamics, which motivates some definitions used in the main text and serves as the basis for our subsequent extension to scenarios with uncertainty. \medskip 

\paragraph*{Free states.} Consider a quantum system $A$ with Hamiltonian $H^A$ in contact with a heat bath at inverse temperature $\beta = 1/(k_B T)$, where $T$ is the bath temperature and $k_B$ is the Boltzmann constant. In this setting, the unique free state is the thermal equilibrium state, or Gibbs state, defined as
\begin{align}
    \tau^A := \frac{e^{-\beta H^A}}{\tr[e^{-\beta H^A}]}.
\end{align}

\paragraph*{Free operations.} The golden rule of any well-defined resource theory is that free operations map free states to free states. In quantum thermodynamics, the class of free operations that most directly reflects physical implementability is given by thermal operations (TO)~\cite{brandao2013resource,horodecki2013fundamental,janzing2000thermodynamic}. A CPTP map $\cE\in\CPTP(A\to A^\prime)$ is a thermal operation if it admits the form
\begin{align}
    \cE(\cdot) = \tr_{(AE)\setminus A^\prime}[U(\cdot \ox \tau^E)U^\dagger],
\end{align}
where $\tau^E$ is the Gibbs state of an ancillary system $E$ with Hamiltonian $H^E$, and $U$ is a joint unitary that preserves total energy:
\begin{align}
    [U, H^A \ox I^E + I^A \ox H^E] = 0.
\end{align}
Every thermal operation is Gibbs-preserving $\cE(\tau^A) = \tau^{A^\prime}$, and time-translation covariant, i.e.,
\begin{align}
    \cE(e^{-iH^A t}\rho^A e^{iH^A t}) = e^{-iH^{A^\prime}t} \cE(\rho^A)e^{iH^{A^\prime}t}, \qquad \forall t\in\RR.
\end{align}
These two properties motivate successively larger classes of free operations. The class of Gibbs-preserving covariant operations (GPC) consists of all CPTP maps satisfying both Gibbs preservation and time-translation covariance. Dropping the covariance requirement yields the broader class of Gibbs-preserving operations (GPO),
\begin{align}
    \GPO(A\to A^\prime) := \left\{\cE \in \CPTP(A\to A^\prime): \cE(\tau^A) = \tau^{A^\prime}\right\}.
\end{align}

\paragraph*{Athermal states.} In this framework, the thermodynamic resource of a system $A$ is not determined solely by its density operator $\rho^A$, but also by the equilibrium state $\tau^A$ against which the deviation is measured. An athermal state is therefore represented by a pair $(\rho^A,\tau^A)$ of nonequilibrium and equilibrium states. Under this convention, a free state is represented by $(\tau^A, \tau^A)$. \medskip

\paragraph*{Athermality transformations.} Given a class of free operations $\fF(A \to A^\prime)$, let $(\rho, \tau) \in \density(A) \times \density(A)$ and $(\rho^\prime, \tau^\prime) \in \density(A^\prime) \times \density(A^\prime)$ be two athermal states. For any $\ve \in [0,1)$, we write
\begin{align}
    (\rho,\tau) \xrightarrow[]{\cF,\; \ve} (\rho^\prime, \tau^\prime),
\end{align}
if there exists $\cF \in \fF(A \to A^\prime)$ such that $\cF(\rho)\in \sB_\ve(\rho^\prime)$ and $\cF(\tau) = \tau^\prime$. \medskip

\paragraph*{(Clean) battery.} To connect athermal state transformations with an operational notion of work, we introduce an explicit work storage system, i.e., a battery. Following Refs.~\cite{gour2022role,watanabe2024black}, we model the battery as a two-level system $B$ with Hilbert space $\cH_B = \operatorname{span}\{\ket{0}, \ket{1}\}$ and Hamiltonian parameterized by a real number $M > 1$,
\begin{align}
    H^B_M = E_{M,0} \ket{0}\bra{0}^B + E_{M,1} \ket{1}\bra{1}^B, \qquad E_{M,1} - E_{M,0} = \frac{1}{\beta}\log(M-1).
\end{align}
With this choice, the Gibbs state of the battery takes the form
\begin{align}
    \pi_M^B = \left(1-\frac{1}{M}\right) \ket{0}\bra{0}^B + \frac{1}{M}\ket{1}\bra{1}^B. 
\end{align}
For a fixed inverse temperature $\beta$, the parameter $M$ determines the energy gap of the battery Hamiltonian and thereby quantifies its work capacity. Using the above notation, a fully charged battery is represented by the athermal state $(\ket{1}\bra{1}^B, \pi_M^B)$, whereas an empty battery corresponds to the free state $(\pi_M^B, \pi_M^B)$. We refer to this model as a ``clean'' battery, since its Gibbs state is exactly known. \medskip

\paragraph*{Work transformations.} With the battery model in place, we can formulate work transformation tasks. We focus on two converse processes: work extraction and work of formation. Together, these tasks provide an operational way to quantify the usefulness of athermality resources and to assess the reversibility of thermodynamic processes. Concretely, consider a primary system $P$ with finite-dimensional Hilbert space $\cH_P$ and Hamiltonian $H^P$, together with the battery system $B$ introduced above. We assume that the total Hamiltonian of the joint system $PB$ is non-interacting both initially and finally, i.e.,
\begin{align}
    H^{PB}_M = H^P \ox I^B + I^P \ox H_M^B.
\end{align}
Thus the corresponding Gibbs state factorizes as
\begin{align}
    \tau^{PB}_M = \tau^P \ox \pi_M^B.
\end{align}
Suppose the primary system is initially in state $\rho^P$ and the battery starts in its Gibbs state $\pi_M^B$. If there exists a free operation $\cF \in \fF(PB \to PB)$ such that
\begin{align}
    (\rho^P \ox \pi_M^B, \tau^P \ox \pi_M^B) \xrightarrow[]{\cF, \;\ve} (\tau^P\ox\ket{1}\bra{1}^B, \tau^P\ox \pi^B_M),
\end{align}
then the nonequilibrium resource of $\rho^P$ has been converted into the excitation of the battery, and we say that the battery with capacity $M$ can be charged by $\rho^P$ under $\fF$. Conversely, if there exists a free operation $\cF \in \fF(PB \to PB)$ such that
\begin{align}
    (\tau^P\ox\ket{1}\bra{1}^B, \tau^P\ox \pi^B_M) \xrightarrow[]{\cF, \;\ve} (\rho^P \ox \pi_M^B, \tau^P \ox \pi_M^B),
\end{align}
then the resource stored in the battery is consumed to prepare the state $\rho^P$, and we say that $\rho^P$ can be formed from a battery with capacity $M$ under $\fF$. 

Since appending and discarding Gibbs subsystems are Gibbs-preserving operations, the above description is equivalent to a reduced formulation in which the transformation acts directly between the primary system and the battery~\cite{gour2022role}. This is the formulation adopted in the main text. In this reduced picture, the work extraction and work of formation tasks take the form
\begin{align}
    (\rho^P, \tau^P) \xrightarrow[]{\cF, \; \ve} (\ket{1}\bra{1}^B, \pi_M^B) \quad \text{and} \quad (\ket{1}\bra{1}^B, \pi_M^B) \xrightarrow[]{\cF^\prime, \; \ve} (\rho^P, \tau^P),
\end{align}
where the two transformations are implemented by operations $\cF\in\fF(P\to B)$ and $\cF^\prime\in\fF(B\to P)$, respectively. When the input and output systems are clear from context, we omit superscripts for simplicity.

\section{Quantum thermodynamics with uncertainty}

We now introduce our framework for quantum thermodynamics with uncertainty. In contrast to the standard setting, we consider scenarios in which the experimenter has only partial information about both the nonequilibrium and equilibrium states of the primary system $P$.

\subsection{Framework with uncertainty}

\paragraph*{Uncertain nonequilibrium.} Such uncertainty may arise in several ways. The nonequilibrium state $\rho$ may be prepared imperfectly, so that the realized state deviates from the intended one. Alternatively, the experimenter may have no direct access to the preparation procedure; for instance, the system may be prepared by an external agent and later handed to the experimenter, who is only informed that the state belongs to a prescribed set of states $\sP\subseteq\density$. This description is known as a black-box or state-agnostic setting studied in recent works~\cite{watanabe2024black,watanabe2026universal}. \medskip

\paragraph*{Uncertain thermal equilibrium.} The equilibrium reference may itself be uncertain. Owing to unavoidable noise and incomplete knowledge of the experimental conditions, the underlying Hamiltonian may be known only to belong to a given class, or the bath temperature may be known only to lie within a prescribed interval. In either case, the corresponding Gibbs state is no longer uniquely specified and must instead be regarded as an element of a set of candidate equilibrium states $\sE\subseteq\density$. This description is a central ingredient of our framework and is essential for capturing the limitations of thermodynamic protocols under realistic conditions. \medskip

\paragraph*{Uncertain athermal state.} Since athermality is defined relative to an equilibrium reference, uncertainty in both the nonequilibrium state and the equilibrium state implies that the resource can no longer be represented by a single pair $(\rho,\tau)$. Instead, it must be described by a family of candidate pairs. To capture this situation, we model an uncertain athermal state as a black box specified by two sets of density operators $(\sP,\sE)$, where $\sP\subseteq\density$ and $\sE\subseteq\density$ encode the knowledge of the possible nonequilibrium and equilibrium states, respectively. A comparison of the standard athermality and uncertain athermality is given in Fig.~\ref{fig:uncertainty}. \medskip

\begin{figure}[H]
    \centering
    \includegraphics[width=0.45\linewidth]{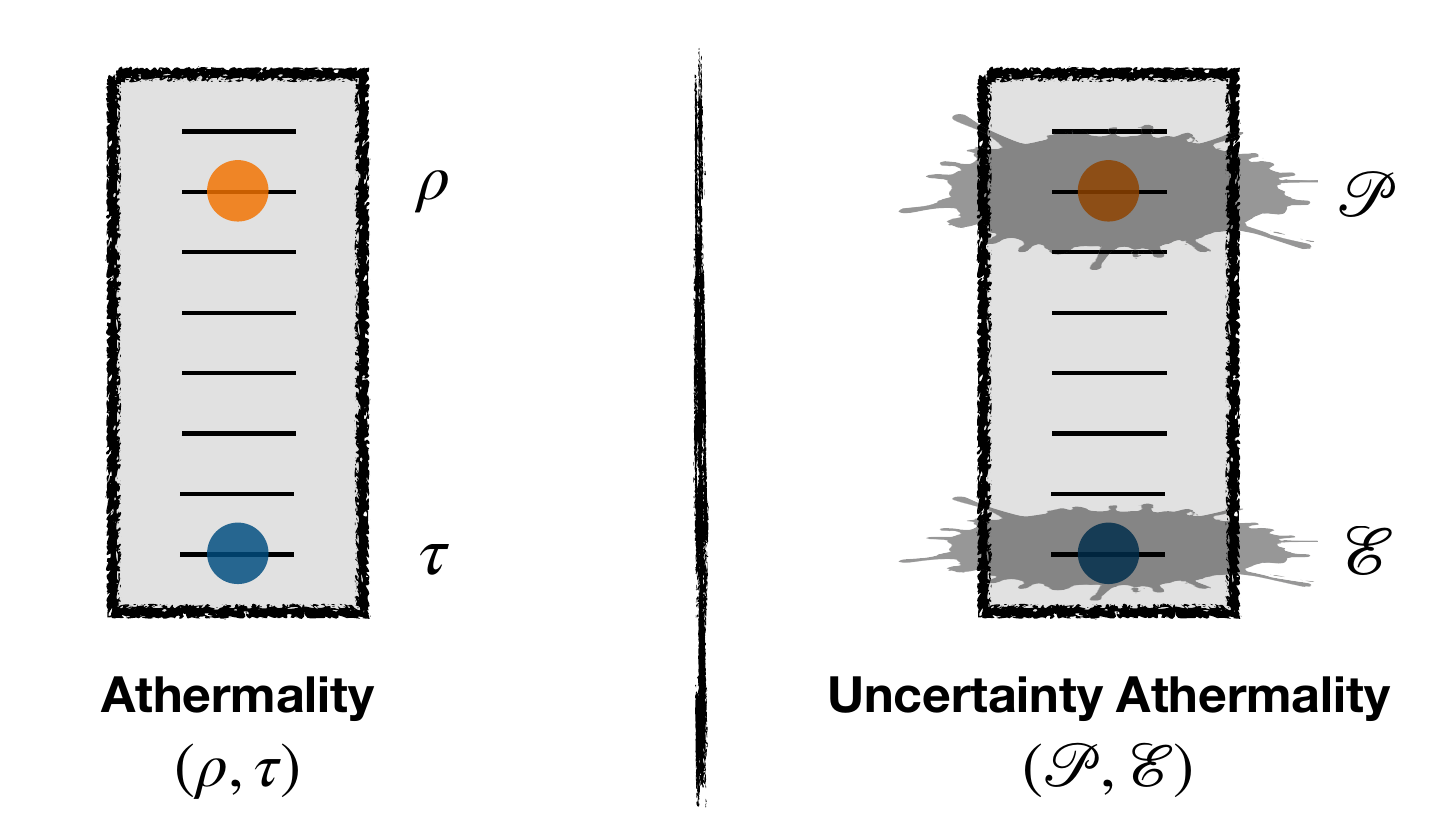}
    \caption{Comparison of the standard and uncertain athermal states. (Left) A standard athermal state is described by a single pair $(\rho, \tau) \in \density\times \density$ of nonequilibrium and equilibrium states. (Right) An uncertain athermal state is described by a pair of candidate sets $(\sP, \sE)$, where $\sP\subseteq\density$ and $\sE\subseteq\density$ represent the experimenter's partial knowledge of the possible nonequilibrium and equilibrium states, respectively.}
    \label{fig:uncertainty}
\end{figure}

\paragraph*{(Dirty) battery.} In practice, the physical system used to implement the battery may itself be imperfectly calibrated or subject to parameter drift. This motivates the notion of a \emph{dirty battery}, whose equilibrium state is not specified exactly but is instead described by a set of candidate Gibbs states. In particular, we consider the dirty battery
\begin{align}
    (\ket{1}\bra{1}, \Pi_M), \quad \Pi_M := \{\pi_{M^\prime}: M^\prime \in [M, \infty)\}.
\end{align}
Operationally, this one-sided uncertainty means that the experimenter can certify only a lower bound on the battery's work capacity, while the exact value is unknown. \medskip

\paragraph*{Uncertain athermality transformation.} Since the realized input $(\rho,\tau)\in\sP\times\sE$ is not known in advance, a valid transformation must be implemented by a single operation that works for all admissible candidates. Formally, let $\sP, \sE \subseteq \density(A)$ and $\sP^\prime, \sE^\prime \subseteq \density(A^\prime)$. Given a class of free operations $\fF(A \to A^\prime)$ and an error tolerance $\ve \in [0,1)$, we write
\begin{align}\label{eq:sets conversion}
    (\sP,\sE) \xrightarrow[]{\cF,\; \ve} (\sP^\prime, \sE^\prime),
\end{align}
if there exists a single operation $\cF\in\fF$ such that, for every $(\rho, \tau) \in \sP \times \sE$, one can find a target pair $(\rho^\prime, \tau^\prime) \in \sP^\prime \times \sE^\prime$ satisfying $(\rho,\tau) \xrightarrow[]{\cF,\; \ve} (\rho^\prime, \tau^\prime)$. In particular, if any of the sets involved is a singleton, we identify it with its unique element whenever this causes no confusion. For example, we write $\rho$ in place of $\{\rho\}$ for simplicity. 

As in the standard setting, work transformations can be formulated within this general framework. An illustration is provided in Figure~\ref{fig:settings}. For a clean battery, the work extraction and work of formation tasks are,
\begin{align}
    (\sP, \sE) \xrightarrow[]{\cF, \;\ve} (\ket{1}\bra{1}, \pi_M),  \quad \text{and} \quad (\ket{1}\bra{1}, \pi_M) \xrightarrow[]{\cF^\prime, \;\ve} (\sP, \sE),
\end{align}
respectively. For a dirty battery, the corresponding tasks take the form
\begin{align}
    (\sP, \sE) \xrightarrow[]{\cF, \;\ve} (\ket{1}\bra{1}, \Pi_M),  \quad \text{and} \quad (\ket{1}\bra{1}, \Pi_M) \xrightarrow[]{\cF^\prime, \;\ve} (\sP, \sE).
\end{align}

\begin{figure}[H]
    \centering
    \includegraphics[width=0.85\linewidth]{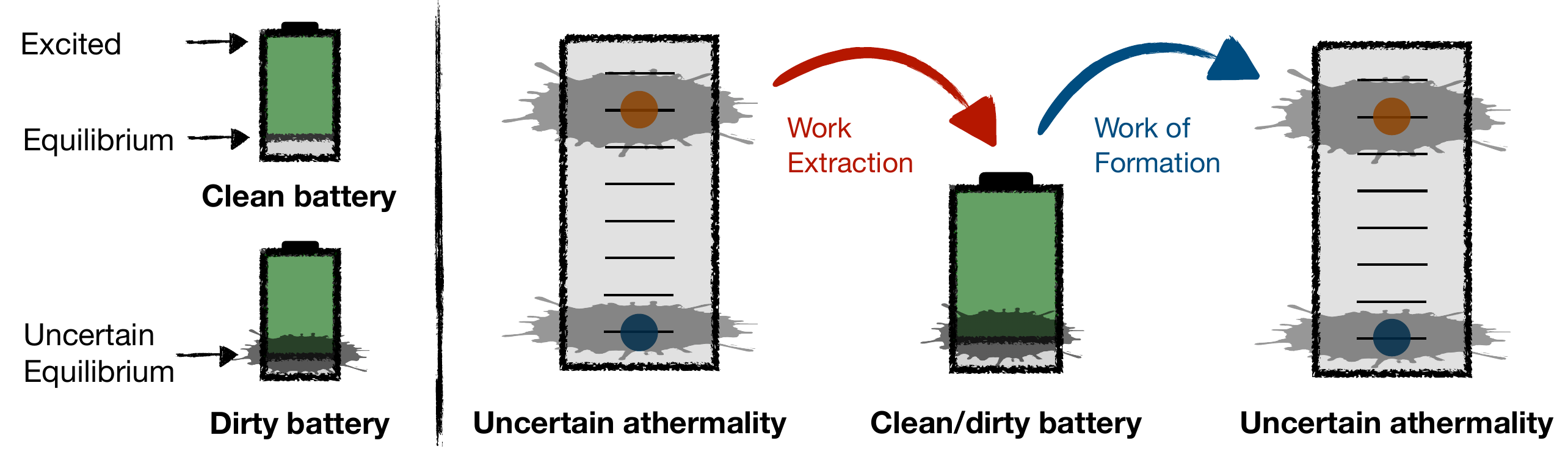}
    \caption{Illustration of work transformations with clean and dirty batteries. (Left) A clean battery has precisely known equilibrium state, while a dirty battery has uncertain equilibrium state. (Right) Work extraction converts an athermal state into an excited battery, whereas work of formation consumes an excited battery to prepare an athermal state.}
    \label{fig:settings}
\end{figure}

\paragraph*{Gibbs-preserving linear maps.} To establish no-go limitations in maximal generality, we impose only the minimal structural requirements on the allowed transformations. Since Gibbs preservation is the minimal thermodynamic constraint underlying the standard classes of free operations, we introduce the class of \emph{Gibbs-preserving linear maps} (GPL), which retains this requirement while dropping all additional operational constraints, including complete positivity, trace preservation, and time-translation covariance.

\begin{definition}[Gibbs-preserving linear maps]
    For an input system $A$ and an output system $A^\prime$, the class of Gibbs-preserving linear maps from $A$ to $A^\prime$ is defined as
    \begin{align}
        \GPL(A\to A^\prime) := \left\{\cL \in \fL(A\to A^\prime): \cL(\tau^A) = \tau^{A^\prime}\right\}.
    \end{align}
\end{definition}

Any impossibility result established under GPL immediately rules out the same transformation under all smaller, physically motivated classes of free operations. Moreover, GPL naturally accommodates potential extensions of free operations, such as post-selected probabilistic protocols~\cite{alhambra2016fluctuating} and virtual operations that combine quantum operations with classical statistical post-processing~\cite{takagi2024virtual,yuan2024virtual,zhu2024reversing}. Together with the standard classes introduced above, one has the inclusion hierarchy
\begin{align}
    \TO(A\to A^\prime) \subsetneq \GPC(A\to A^\prime) \subsetneq \GPO(A\to A^\prime) \subsetneq \GPL(A\to A^\prime).
\end{align}
Throughout, our no-go results are established for the largest class GPL, ensuring they hold a fortiori for all smaller classes. Conversely, our achievability results are constructed within conventional classes such as $\TO$ or $\GPO$.

\subsection{No-go theorem for athermality ``purification''}

\begin{restate}{Theorem}{1}[No-go theorem for athermality ``purification'']
    \label{thm:no-go theorem for transformation with uncertainty}
    Let $\ve \in [0, 1)$, and let $(\rho^\prime, \tau^\prime) \in \density \times \density$. Suppose that $\sP, \sE \subseteq \density$ satisfy $\conv(\sP) \cap \aff(\sE) \neq \emptyset$. Then $(\sP, \sE) \xrightarrow[]{\cL,\;\ve} (\rho^\prime, \tau^\prime)$ is achievable by some $\cL \in \GPL$ if and only if $\ve \geq T(\rho^\prime, \tau^\prime)$. 
\end{restate}
\begin{proof}
    ($\Rightarrow$) Suppose that there exists a state $\omega \in \conv(\sP) \cap \aff(\sE)$, so $\omega$ admits both convex decomposition over $\sP$ and affine decomposition over $\sE$,
    \begin{align}
        \omega = \sum_i c_i \rho_i = \sum_j a_j \tau_j,
    \end{align}
    where $\rho_i \in \sP$, $\tau_j \in \sE$, $c_i \ge 0$, and $\textstyle\sum_i c_i = \textstyle\sum_j a_j = 1$. Now suppose there exists a Gibbs-preserving linear map $\cL \in \GPL$ such that for any pair $(\rho, \tau) \in \sP \times \sE$,
    \begin{align}
        T(\cL(\rho), \rho^\prime) \leq \ve, \qquad \cL(\tau) = \tau^\prime.
    \end{align}
    Applying $\cL$ to the convex decomposition of $\omega$ over $\sP$, and using linearity of $\cL$ together with convexity of the trace distance, we obtain
    \begin{align}
        T(\cL(\omega), \rho^\prime) = T\left(\sum_i c_i \cL(\rho_i), \rho^\prime\right) \leq \sum_i c_i T(\cL(\rho_i), \rho^\prime) \leq \sum_i c_i \ve =\ve.
    \end{align}
    Similarly, applying $\cL$ to $\omega$ with the affine decomposition over $\sE$ and using the linearity of $\cL$ gives
    \begin{align}
        \cL(\omega) = \sum_j a_j \cL(\tau_j) = \sum_j a_j \tau^\prime = \tau^\prime.
    \end{align}
    Combining the above two relations, we get
    \begin{align}
        T(\rho^\prime, \tau^\prime) = T(\rho^\prime, \cL(\omega)) \leq \ve.
    \end{align}
    
   ($\Leftarrow$) Suppose $T(\rho^\prime, \tau^\prime) \leq \ve$ and consider the replacer map $\cL(\cdot) = \tr(\cdot)\tau^\prime$. For any $(\rho, \tau) \in \sP \times \sE$, it holds that
    \begin{align}
        T(\cL(\rho), \rho^\prime) = T(\tau^\prime, \rho^\prime) \leq \ve, \qquad \cL(\tau) = \tau^\prime.
    \end{align}
    Thus the desired conversion is achievable. This completes the proof.
\end{proof}

\begin{remark}
    We claim that the replacer map $\cL(\cdot) = \tr(\cdot)\tau^\prime$ is a thermal operation. To see this, choose an ancillary system $E$ whose Gibbs state is $\tau^E = \tau^\prime$. Then, for any input state $\rho^P$,
    \begin{align}
        \cL(\rho^P) = \tr_P[I(\rho^P\ox \tau^E)I] = \tau^E =\tau^\prime
    \end{align}
    where $I$ is the identity operator, which trivially satisfies $[I, H^P\ox I^E+I^P\ox H^E] = 0$. 
\end{remark}

\begin{remark}
    For exact conversion ($\ve = 0$), the condition can be relaxed to $\aff(\sP) \cap \aff(\sE) \neq \emptyset$. The proof is the same except that the convex decomposition of $\omega$ over $\sP$ is replaced by an affine decomposition, and the convexity of trace distance is not longer needed.
\end{remark}

This theorem yields a sharp limitation on athermality ``purification'': under the stated geometric condition, the conversion is either trivial or impossible, with no intermediate tradeoff. An illustration is shown in Fig.~\ref{fig:no-go}. Specifically, if the target athermal state satisfies $T(\rho^\prime, \tau^\prime) \leq \ve$, then
\begin{align}
    (\rho^\prime, \tau^\prime) \xrightarrow[]{\id,\;\ve} (\tau^\prime, \tau^\prime), \qquad (\tau^\prime, \tau^\prime) \xrightarrow[]{\id,\;\ve} (\rho^\prime, \tau^\prime),
\end{align}
where $\id$ is the identity operation. Hence, the achievable target $(\rho^\prime,\tau^\prime)$ is operationally equivalent, up to error $\ve$, to the free state $(\tau^\prime,\tau^\prime)$. In this case, the conversion can be achieved by the trivial thermal operation $\cL(\cdot) = \tr(\cdot) \tau^\prime$, which simply discards the input and prepares $\tau^\prime$. Conversely, for any target with nontrivial athermality $T(\rho^\prime,\tau^\prime) > \ve$, Theorem~\ref{thm:no-go theorem for transformation with uncertainty} precludes conversion from any uncertain primitive $(\sP,\sE)$ satisfying the geometric condition. In particular, since any achievable conversion to a target $(\rho^\prime,\tau^\prime)$ must incur error at least $T(\rho^\prime,\tau^\prime)$, exact purification to any non-free target is impossible. 

\begin{figure}[H]
    \centering
    \includegraphics[width=0.65\linewidth]{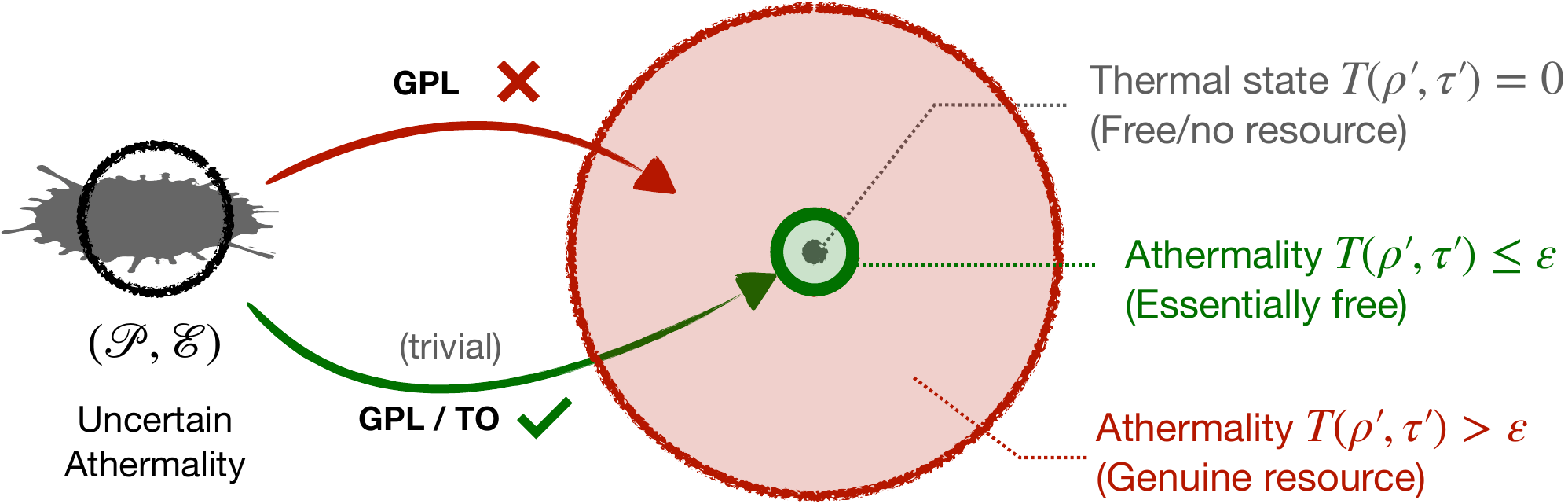}
    \caption{Illustration of the no-go theorem for athermality ``purification''. The conversion is achievable if and only if the target state is $\ve$-close to the thermal state (free state).}
    \label{fig:no-go}
\end{figure}

The geometric condition $\conv(\sP) \cap \aff(\sE) \neq \emptyset$ may appear technical at first sight, but it can be very generic in practice. As shown in the following example, even when the nominal Hamiltonian is known, unavoidable fluctuations or incomplete knowledge of the control parameters may leave an arbitrarily small but full-dimensional uncertainty region around it. The induced family of Gibbs states then has full affine span, $\aff(\sE)=\aff(\density)$, making the condition automatic.

\begin{example}
    Consider a qubit coupled to an external field with Hamiltonian
    \begin{align}
        H(\bm{h}) = -\bm{h}\cdot\bm{\sigma} = -h_x \sigma_x - h_y \sigma_y - h_z \sigma_z,
    \end{align}
    where $\bm{h} \in\RR^3$ denotes the field vector, and $\sigma_x, \sigma_y, \sigma_z$ are the Pauli matrices. Let
    \begin{align}
        |\bm{h}|=\sqrt{h_x^2+h_y^2+h_z^2}, \qquad \hat{\bm r}=\frac{\bm h}{|\bm h|}=\frac{1}{|\bm h|}(h_x,h_y,h_z),
    \end{align}
    so that $\hat{\bm r}$ is a unit vector and $\bm h\cdot \bm \sigma = |\bm h|\,\hat{\bm r}\cdot \bm \sigma$. Since $\hat{\bm r}$ is a unit vector, we have the identity $(\hat{\bm r}\cdot\bm\sigma)^2=I$, which implies $(\hat{\bm r}\cdot\bm\sigma)^{2k}=I$ and $(\hat{\bm r}\cdot\bm\sigma)^{2k+1}=\hat{\bm r}\cdot\bm\sigma$ for every non-negative integer $k$. Expanding the matrix exponential as a power series and separating even and odd powers, we obtain
    \begin{align}
        e^{x \hat{\bm r}\cdot\bm\sigma} = \sum_{n=0}^{\infty}\frac{x^n}{n!}(\hat{\bm r}\cdot\bm\sigma)^n = \sum_{k=0}^{\infty}\frac{x^{2k}}{(2k)!}\,I + \sum_{k=0}^{\infty}\frac{x^{2k+1}}{(2k+1)!}\,\hat{\bm r}\cdot\bm\sigma = \cosh(x)\,I+\sinh(x)\,\hat{\bm r}\cdot\bm\sigma.
    \end{align}
    Hence,
    \begin{align}
        e^{-\beta H(\bm h)}
        =e^{\beta \bm h\cdot \bm\sigma}
        =e^{\beta |\bm h|\,\hat{\bm r}\cdot\bm\sigma}
        =\cosh(\beta|\bm h|)I+\sinh(\beta|\bm h|)\,\hat{\bm r}\cdot\bm\sigma.
    \end{align}
    Taking the trace and using $\tr[\sigma_x]=\tr[\sigma_y]=\tr[\sigma_z]=0$, we find
    \begin{align}
        \tr[e^{-\beta H(\bm h)}]
        =2\cosh(\beta|\bm h|).
    \end{align}
    Therefore, the Gibbs state at inverse temperature $\beta$ is given by
    \begin{align}
        \tau(\bm h) = \frac{e^{-\beta H(\bm h)}}{\tr[e^{-\beta H(\bm h)}]} = \frac{1}{2}\left(I+\tanh(\beta|\bm h|)\,\frac{\bm h}{|\bm h|}\cdot\bm\sigma \right),
    \end{align}
    with Bloch vector
    \begin{align}
        \bm{r}(\bm{h}) = \tanh(\beta|\bm{h}|) \frac{\bm{h}}{|\bm{h}|}.
    \end{align}
    
    Suppose that the experimenter intends to implement the nominal field $h_0 \hat{\bm z}$, corresponding to the nominal Gibbs state $\tau(h_0 \hat{\bm z})$ with Bloch vector $\bm{r}(h_0 \hat{\bm z}) = (0,0,\tanh(\beta h_0))$. In practice, however, the actual field may deviate slightly from this nominal value due to fluctuations, calibration errors, or incomplete knowledge of the control parameters. It is therefore natural to describe the possible fields by a small Euclidean ball around $h_0 \hat{\bm z}$,
    \begin{align}
        B_{\delta}(h_0 \hat{\bm z}) = \{\bm{h} \in \RR^3: |\bm{h} - h_0 \hat{\bm z}| \leq \delta\},
    \end{align}
    where $\delta > 0$ may be arbitrarily small. As illustrated in Fig.~\ref{fig:qubit field example}, the map $\bm{h} \mapsto \bm{r}(\bm{h})$ sends this ball of candidate field vectors to a neighborhood of Gibbs states in the Bloch ball around the nominal Gibbs state $\tau(h_0 \hat{\bm z})$. Since the uncertainty in $\bm{h}$ spans three independent directions, the resulting family of Gibbs states $\sE$ is locally three-dimensional, and its affine hull therefore covers the full state space $\aff(\sE) = \aff(\density)$. Consequently, even if the nonequilibrium state is known exactly, $\sP = \{\rho\}$, the geometric condition is satisfied automatically, and nontrivial athermality purification is impossible. Notably, this conclusion holds for arbitrarily small but nonzero uncertainty $\delta$.

    \begin{figure}[H]
        \centering
        \includegraphics[width=0.58\linewidth]{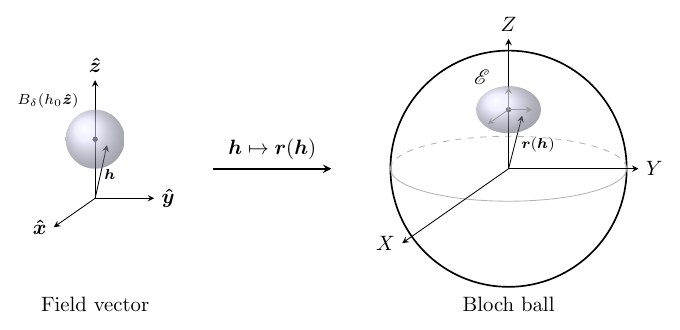}
        \caption{Geometric illustration of the Hamiltonian uncertainty in the qubit-field example. A small ball of possible field vectors around $h_0\bm{\hat{z}}$ is mapped to a neighborhood of Gibbs states in the Bloch ball around $\tau(h_0\bm{\hat{z}})$. The resulting set $\sE$ spans three independent directions, so its affine hull $\aff(\sE)$ covers the full state space.}
        \label{fig:qubit field example}
    \end{figure}
\end{example}

\subsection{No-go theorem for work extraction into a clean battery}

A particularly important instance of athermality ``purification'' is work extraction into a clean battery. The following corollary is obtained directly from Theorem~\ref{thm:no-go theorem for transformation with uncertainty} by setting the target athermal state to be the clean battery $(\ket{1}\bra{1}, \pi_M)$.

\begin{restate}{Corollary}{2}[No-go theorem for work extraction into a clean battery]
    \label{cor:no-go for work extraction to clean battery}
    Let $\ve \in [0,1)$. Suppose that $\sP, \sE \subseteq \density$ satisfies $\conv(\sP) \cap \aff(\sE) \neq \emptyset$. Then $(\sP, \sE) \xrightarrow[]{\cL,\; \ve} (\ket{1}\bra{1}, \pi_M)$ is achievable by some $\cL \in \GPL$ if and only if $\ve \geq 1 -1/M$. Moreover, whenever the transformation is achievable, the optimal work extraction is trivially realized by the thermal operation $\cL(\cdot) = \tr[\cdot] \pi_{1/(1-\ve)}$.
\end{restate} 

An illustration of Corollary~\ref{cor:no-go for work extraction to clean battery} is given in Fig.~\ref{fig:forbidden region precise battery}. The region $\ve < 1-\frac{1}{M}$ is completely forbidden, excluding the the operationally desirable regime of simultaneously small error and large extractable work. In particular, any nontrivial work extraction with $M>1$ necessarily incurs a nonzero error. Outside this region, the transformation can be achieved by the trivial thermal operation $\cL(\cdot) = \tr[\cdot]\pi_{M}$. Thus the boundary $\ve = 1-\frac{1}{M}$ is sharp: work extraction into a clean battery is either impossible or trivial, with no intermediate work-error tradeoff.

\begin{figure}[H]
    \centering
    \includegraphics[width=0.36\linewidth]{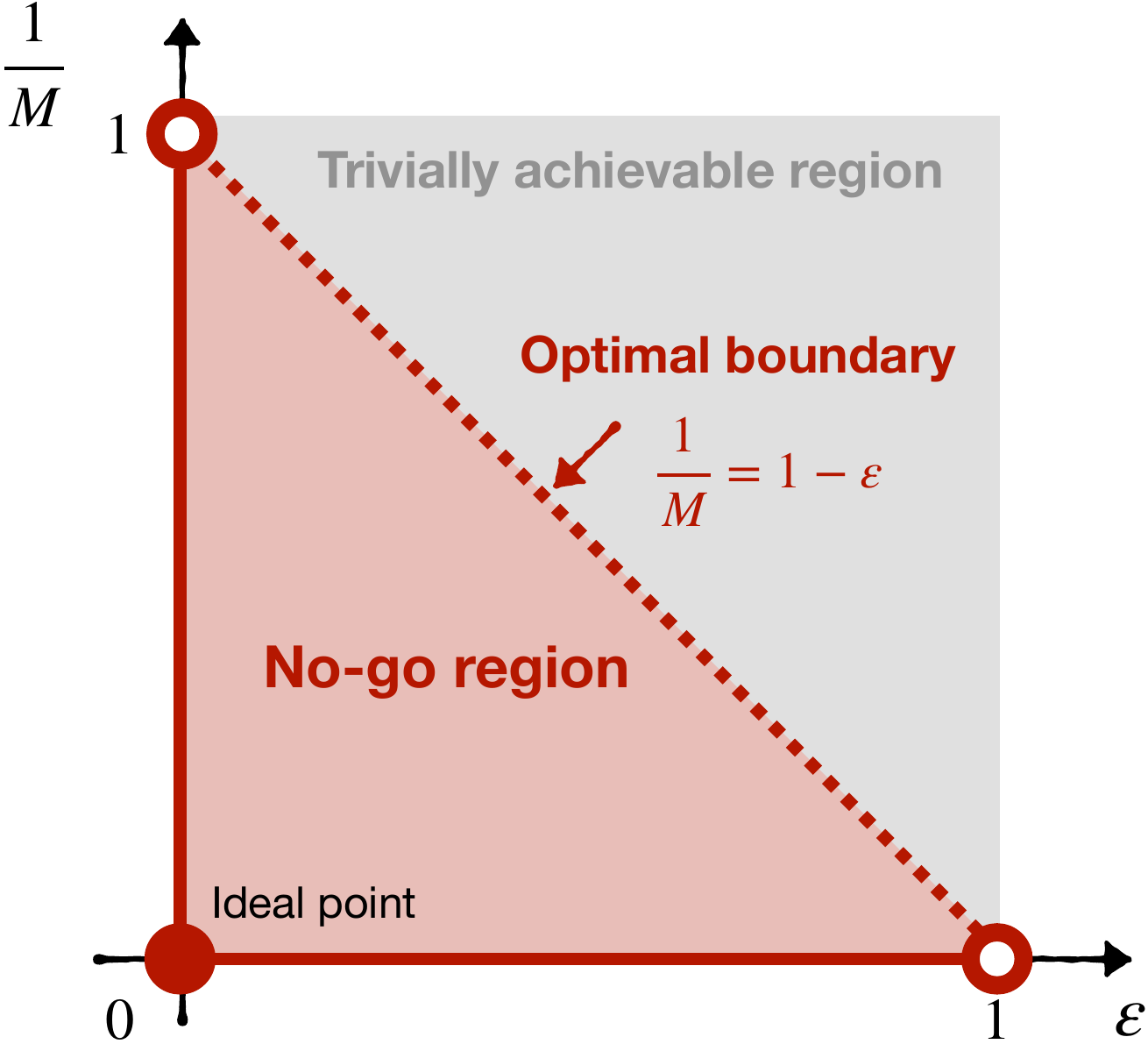}
    \caption{Illustration of the no-go theorem for work extraction into a clean battery. The region $1/M < 1-\ve$ is forbidden, while the complementary region $1/M \geq 1-\ve$ is trivially achievable.}
    \label{fig:forbidden region precise battery}
\end{figure}

When $\sE = \{\tau\}$ is a singleton, the geometric condition $\conv(\sP) \cap \aff(\sE) \neq \emptyset$ reduces to $\tau \in \conv(\sP)$. In this black-box setting, the optimal extractable work under GPO is given by~\cite[Theorem 2]{watanabe2024black}
\begin{align}
    D_{\min,\ve}(\sP\|\tau) = D_{\min,\ve}(\conv(\sP)\|\tau) = \inf_{\rho\in\conv(\sP)} D_{\min,\ve}(\rho\|\tau),
\end{align}
where the second equality follows from~\cite[Lemma 31]{fang2025generalized}. Since $\tau \in \conv(\sP)$, the faithfulness of the smoothed min-relative entropy in Lemma~\ref{lem:faithfulness of min-relative entropy} then implies $D_{\min,\ve}(\conv(\sP)\|\tau) = -\log(1-\ve)$. The present theorem extends this conclusion to general equilibrium uncertainty sets $\sE$ and, moreover, to the larger class $\GPL$. Since the optimal extractable work is already achieved by a thermal operation whenever $\conv(\sP) \cap \aff(\sE) \neq \emptyset$, all free operation classes TO, GPC, GPO, GPL yield the same clean-battery extractable work. In this sense, equilibrium uncertainty collapses the usual operational distinctions among free-operation classes and eliminates the meaningful work extraction regime familiar from the standard theory~\cite{horodecki2013fundamental,gour2022role,watanabe2024black}.

\subsection{No-go theorem for battery energy truncation}

The implications of the aforementioned no-go theorems become particularly transparent when applied to battery systems. In the standard setting, a battery with larger capacity is always at least as useful as one with smaller capacity since any surplus energy can be discarded by a free truncation operation. This intuition is formalized in Lemma~\ref{lem:yesgo truncation}. Under equilibrium uncertainty, however, this monotonicity breaks down completely. As shown in Corollary~\ref{col:no-go truncation}, a universal energy truncation from an uncertain battery to a precisely calibrated one becomes strictly impossible outside of the trivial regime, regardless of how small the uncertainty may be. The contrast between the two scenarios is illustrated in Fig.~\ref{fig:truncation}.

\begin{figure}[H]
    \centering
    \includegraphics[width=0.6\linewidth]{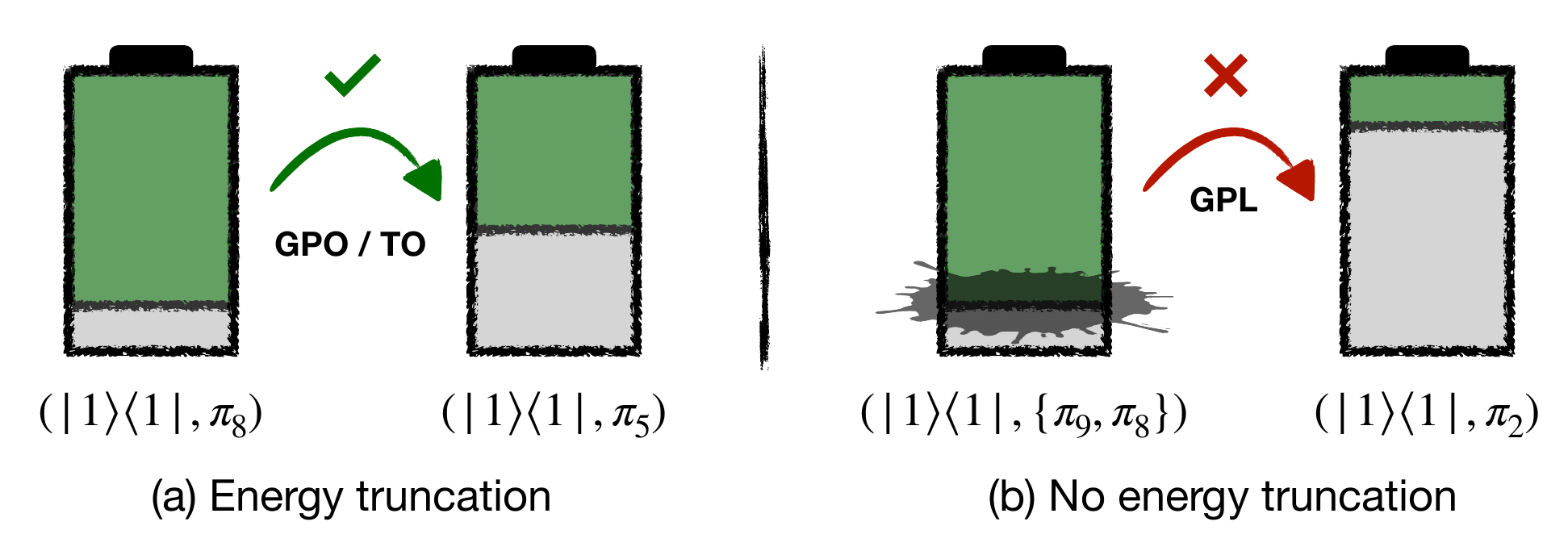}
    \caption{Illustration of the no-go theorem for battery energy truncation. (a) In the standard setting, a battery with higher capacity can be freely truncated to one with lower capacity. (b) Under equilibrium uncertainty, such truncation is impossible for any nontrivial target capacity.}
    \label{fig:truncation}
\end{figure}

\begin{lemma}[Energy truncation for clean batteries]
    \label{lem:yesgo truncation}
    Let $\ve \in [0,1)$ and $M, N > 1$. The energy truncation $(\ket{1}\bra{1}, \pi_M) \xrightarrow[]{\cF,\; \ve} (\ket{1}\bra{1}, \pi_N)$ is achievable by some $\cF\in \TO$ or $\GPO$ if and only if $M\geq N(1-\ve)$. 
\end{lemma}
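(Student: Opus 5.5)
This is a classical two-level (qubit-diagonal) problem, and I plan to prove the two directions separately. Achievability will come from an explicit thermal operation. The converse will come from a hypothesis-testing-type monotonicity argument, which holds for all of $\GPO$ and hence for $\TO$.

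\emph{Converse.} Let $\cF\in\GPO$ satisfy $\cF(\pi_M)=\pi_N$ and $T(\cF(\ket{1}\bra{1}),\ket{1}\bra{1})\le\ve$. I will use the dual (Heisenberg) picture with the test $E=\ket{1}\bra{1}$ on the output. Set $E':=\cF^\dagger(\ket{1}\bra{1})$; since $\cF^\dagger$ is positive and unital, $0\le E'\le I$. The error condition gives
\begin{align}
\tr[E'\ket{1}\bra{1}]=\bra{1}\cF(\ket{1}\bra{1})\ket{1}\ge 1-\ve .
\end{align}
The Gibbs-preservation condition gives $\tr[E'\pi_M]=\tr[\ket{1}\bra{1}\pi_N]=1/N$. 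On the other hand, positivity of $E'$ and $\bra{1}E'\ket{1}\ge 1-\ve$ give
\begin{align}
\tr[E'\pi_M]\ \ge\ \tfrac{1}{M}\bra{1}E'\ket{1}\ \ge\ \tfrac{1-\ve}{M}.
\end{align}
Combining, $1/N\ge(1-\ve)/M$, that is, $M\ge N(1-\ve)$. Equivalently, this is data processing for $D_{\min,\ve}$, together with the facts $D_{\min,\ve}(\ket{1}\bra{1}\|\pi_M)=\log\frac{M}{1-\ve}$ and $D_{\min,0}(\ket{1}\bra{1}\|\pi_N)\ge\log N$.

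\emph{Achievability.} Assume $M\ge N(1-\ve)$. I will build a classical stochastic map on $\{0,1\}$ that sends $\pi_M\mapsto\pi_N$ and $\ket{1}\bra{1}\mapsto p\ket{1}\bra{1}+(1-p)\ket{0}\bra{0}$ with $p\ge1-\ve$. Write $r:=P(1\to1)$ and $s:=P(0\to1)$. Gibbs preservation requires
\begin{align}
\tfrac1M r+\left(1-\tfrac1M\right)s=\tfrac1N .
\end{align}
If $M\ge N$, take $r=1$ and $s=(M/N-1)/(M-1)\in[0,1]$, which gives zero error. If $N(1-\ve)\le M<N$, take $s=0$ and $r=M/N\in[1-\ve,1)$. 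In both cases the output is diagonal, so the trace distance to $\ket{1}\bra{1}$ is $1-r\le\ve$. Extending the map by dephasing makes it a CPTP, Gibbs-preserving, covariant map on the incoherent sector, so it lies in $\GPO$.

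\emph{Main obstacle.} The nontrivial step is showing that this classical channel is actually a thermal operation, not just a GPO. For qubits I will invoke the known fact that every Gibbs-preserving stochastic matrix between diagonal states is realizable by thermal operations. To make this concrete, I will realize the two cases explicitly. For $M\ge N$, a partial thermalization of the ground level is implemented by coupling to a bath qubit with a degenerate energy-conserving swap, possibly composed with a full-swap to a bath of the target Hamiltonian. For $M<N$, the map is a partial mixing of $\ket{1}\bra{1}$ with $\pi$. Since input and output Hamiltonians differ, I will use the standard trick of swapping the system with an ancilla prepared in a Gibbs state $\pi_N$, applied conditionally with an energy-conserving unitary. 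If that route becomes messy, I will fall back on the known characterization of $\TO$ for qubits/classical states via thermomajorization~\cite{horodecki2013fundamental}. There, $\ket{1}\bra{1}$ relative to $\pi_M$ thermomajorizes the target exactly when the Lorenz-curve elbow condition $(1-\ve)/N^{-1}\le M$ holds, which is the same inequality.
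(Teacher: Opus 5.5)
Your argument is correct, but it is organized differently from the paper's.

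\textbf{Converse.} This is the paper's GPO converse written in the Heisenberg picture: the paper uses $M\pi_N\geq\cF(\ket{1}\bra{1})$ followed by dephasing, and you use $E'=\cF^\dagger(\ket{1}\bra{1})\geq 0$. You then note that $\TO\subseteq\GPO$, so the TO converse comes for free. The paper instead settles both TO directions at once by citing the smoothed relative-submajorization criterion $\beta_x(\ket{1}\bra{1}\|\pi_M)\leq\beta_{x-\ve}(\ket{1}\bra{1}\|\pi_N)$. Your organization confines the external input to TO achievability alone.

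\textbf{GPO achievability.} You use extreme-point channels: $r=1$ with zero error when $M\geq N$, and $s=0$, $r=M/N$ when $M<N$. The paper always fixes the error at exactly $\ve$ through $\gamma_1$ and handles $\ve\geq 1-1/N$ separately with the replacer. Both constructions are valid, and yours avoids that extra case.

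\textbf{TO achievability.} Drop the claim that every Gibbs-preserving stochastic matrix is a thermal operation. It is unnecessary, because a TO is automatically Gibbs-preserving, so only the state conversion $\ket{1}\bra{1}\mapsto\sigma$ matters. It is also delicate for extreme-point channels with a change of Hamiltonian, since exact realizations typically need infinite baths or a closure. The clean route is this: your classical channel is itself a witness of the relative majorization $(\ket{1}\bra{1},\pi_M)\succ(\sigma,\pi_N)$ for its output $\sigma$, and thermomajorization with changing Hamiltonians then supplies the TO.

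Be careful about which target you mean in the Lorenz-curve fallback. The elbow condition $N(1-\ve)\leq M$ is the relevant one only for $\sigma=(1-\ve)\ket{1}\bra{1}+\ve\ket{0}\bra{0}$ with $\ve\leq 1-1/N$. For $\ve>1-1/N$ the Lorenz ordering of that $\sigma$ flips, and thermomajorization can fail even though $M\geq N(1-\ve)$; take $N=10$, $\ve=0.95$, $M=1.01$, where the source curve gives $0.909<0.95$ at $x=0.9$. In that regime, use the output of your own channel or the replacer $\tr[\cdot]\pi_N$, which is a TO. The paper's smoothed $\beta_x$ criterion avoids this issue because it optimizes over all $\ve$-close targets automatically.
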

\begin{proof}
    ($\Leftrightarrow$ for TO) Since the transformation is between two semi-classical states, the condition for convertibility under TO is known to be equivalent to (see~\cite{renes2016relative} or~\cite[Eq.~(6)]{lipka-bartosik2024quantum}),
    \begin{align}
        \beta_x(\ket{1}\bra{1}\|\pi_M) \leq \beta_{x-\ve}(\ket{1}\bra{1}\|\pi_N) \quad \text{for all } x \in (\ve,1),
    \end{align}
     where
    \begin{align}
        \beta_x(\rho\|\sigma) = \min \{\tr(\sigma Q): 0\leq Q \leq I,\; \tr(\rho Q) \geq 1-x\}.
    \end{align}
    A straightforward calculation yields $\beta_x(\ket{1}\bra{1}\|\pi_M) = (1-x)/M$. Hence, the condition becomes
    \begin{align}
        \frac{1-x}{M} \leq \frac{1-(x-\ve)}{N} \quad \text{for all } x \in (\ve,1).
    \end{align}
    It is easy to verify that this inequality holds for all such $x$ if and only if $M \geq N(1-\ve)$. \medskip

    ($\Rightarrow$ for GPO) Suppose that there exists $\cF\in\GPO$ such that $T(\cF(\ket{1}\bra{1}),\ket{1}\bra{1})\le \ve$ and $\cF(\pi_M)=\pi_N$. Since
    \begin{align}
        \pi_M-\frac{1}{M}\ket{1}\bra{1}=\frac{M-1}{M}\ket{0}\bra{0}\ge 0,
    \end{align}
    positivity and linearity of $\cF$ imply
    \begin{align}
        M\pi_N-\cF(\ket{1}\bra{1})\ge 0.
    \end{align}
    Let $\Delta$ be the completely dephasing channel in the computational basis, and write
    \begin{align}
        \Delta(\cF(\ket{1}\bra{1}))=(1-x)\ket{0}\bra{0}+x\ket{1}\bra{1}
    \end{align}
    for some $x\in[0,1]$. By contractivity of the trace distance under $\Delta$,
    \begin{align}
        1-x
        =
        T(\Delta(\cF(\ket{1}\bra{1})),\ket{1}\bra{1})
        \le
        T(\cF(\ket{1}\bra{1}),\ket{1}\bra{1})
        \le \ve,
    \end{align}
    so $x\ge 1-\ve$. On the other hand, dephasing the operator inequality above gives
    \begin{align}
        M\pi_N-\bigl[(1-x)\ket{0}\bra{0}+x\ket{1}\bra{1}\bigr]\ge 0,
    \end{align}
    hence in particular $x\le M/N$. So $1-\ve\le x\le M/N$, which implies $M\ge N(1-\ve)$. \medskip
    
    ($\Leftarrow$ for GPO) Assume $M\ge N(1-\ve)$. If $\ve\ge 1-\frac{1}{N}$, then the replacer channel $\cF(\cdot)=\tr[\cdot]\pi_N$ already satisfies the requirements. It remains to consider the case $\ve<1-\frac{1}{N}$. Define
    \begin{align}
        q:=\frac{M-N(1-\ve)}{N(M-1)}, \quad
        \gamma_0:=(1-q)\ket{0}\bra{0}+q\ket{1}\bra{1}, \quad
        \gamma_1:=\ve\ket{0}\bra{0}+(1-\ve)\ket{1}\bra{1}.
    \end{align}
    One checks that $q\in[0,1)$: the lower bound follows from $M\ge N(1-\ve)$, and the upper bound $q<1$ is equivalent to $N\ve<(N-1)M$, which holds since $\ve<1-\tfrac{1}{N}$ gives $N\ve<N-1\le(N-1)M$.
    Now consider the measure-and-prepare channel $\cF(\cdot) := \tr[\ket{0}\bra{0}(\cdot)]\gamma_0 + \tr[\ket{1}\bra{1}(\cdot)]\gamma_1$. Since both $\gamma_0$ and $\gamma_1$ are valid quantum states, $\cF$ is CPTP. It is straightforward to verify that
    \begin{align}
        T(\cF(\ket{1}\bra{1}),\ket{1}\bra{1})=T(\gamma_1,\ket{1}\bra{1})=\ve,
    \end{align}
    and
    \begin{align}
        \cF(\pi_M)=\frac{M-1}{M}\gamma_0+\frac{1}{M}\gamma_1.
    \end{align}
    By the choice of $q$, we have $\cF(\pi_M)=\pi_N$. This completes the proof.
\end{proof}

We now consider the same task when the input battery capacity is not exactly known. Suppose the experimenter is given an excited battery whose nonequilibrium state is known to be $\sP = \{\ket{1}\bra{1}\}$, but whose exact capacity is uncertain, for instance due to slow drift in the parameter of the battery Hamiltonian. Assume the actual capacity is either $M_1$ or $M_2$, with $M_2 > M_1$, so that the candidate Gibbs state is either $\pi_{M_1}$ or $\pi_{M_2}$. This gives the dirty battery $(\ket{1}\bra{1}, \{\pi_{M_1}, \pi_{M_2}\})$. If both possible capacities satisfy $M_1, M_2 \geq N$, then each candidate battery is individually sufficient, in the exact setting ($\ve = 0$), to produce an excited clean battery of capacity $N$ by Lemma~\ref{lem:yesgo truncation}. One might therefore expect that the excess energy in the dirty battery can be discarded uniformly, yielding the lower-capacity clean battery $(\ket{1}\bra{1}, \pi_N)$. The following corollary shows that this intuition fails: equilibrium uncertainty rules out any nontrivial universal energy truncation.

\begin{restate}{Corollary}{3}[No universal energy truncation for dirty batteries]
    \label{col:no-go truncation}
    Let $\ve\in[0,1)$ and $M_2 > M_1 \geq N > 1$. Then the universal energy truncation $(\ket{1}\bra{1}, \{\pi_{M_1}, \pi_{M_2}\}) \xrightarrow[]{\cL,\;\ve} (\ket{1}\bra{1}, \pi_N)$ is achievable by some $\cL \in \GPL$ if and only if $\ve \geq 1-1/N$.
\end{restate}

This result follows directly from Corollary~\ref{cor:no-go for work extraction to clean battery} since $\ket{1}\bra{1} \in \aff(\{\pi_{M_1}, \pi_{M_2}\})$. Its operational implication, however, is striking: even if the nonequilibrium state is known exactly, an \emph{arbitrarily small} uncertainty in the equilibrium reference, namely any $M_2 - M_1 > 0$, is sufficient to eliminate any nontrivial energy truncation. No matter how large the possible input capacities $M_1$ and $M_2$ are, or how modest the target capacity $N$ is, the conversion is possible only in the trivial regime where the target battery's excited state is already $\ve$-close to its own Gibbs state. This behavior sharply contrasts with the black-box setting of Ref.~\cite{watanabe2024black}, where uncertainty is confined to the nonequilibrium state and nontrivial energy truncation can still be possible. The comparison reveals a fundamental asymmetry between the two sources of uncertainty in quantum thermodynamics: equilibrium uncertainty can be far more detrimental than nonequilibrium-state uncertainty. It also contrasts with entanglement transformation, where universal distillation from an unknown resource can be achieved at the optimal rate~\cite{matsumoto2007universal}, and universal entanglement truncation is therefore possible. This highlights a structural difference between the resource theories of entanglement and athermality.

The no-go theorems in this section motivate a more flexible treatment of battery models when exploring the possibilities and limitations of work transformation under uncertainty. As summarized in Table~\ref{tab:one-shot summary}, we study two settings: a clean battery, whose equilibrium state is precisely known, and a dirty battery, whose equilibrium state is itself uncertain. For both scenarios, we derive exact one-shot entropic characterizations of work extraction and work of formation. These characterizations involve both the standard smoothed min- and max-relative entropies, shown in cyan, and the new subspace-constrained variants introduced in this work, shown in orange.

\vspace{-0.8em}

\begin{table}[H]
    \centering         
    \renewcommand{\arraystretch}{1.5} 
    
    \smallskip
    \begin{tabular}{cccc}
        \hline\hline
        & \cuscol{4cm}{Extractable work} & \cuscol{4cm}{Work cost} &  \cuscol{3.8cm}{Reversibility} \\
        \hline
        \cuscol{3.8cm}{Clean battery} & \cellcolor{myorange!18} $D_{\min,\ve}^{\sE}(\sP\|\sE)$ & \cellcolor{mycyan!18} $D_{\max,\ve}(\sP\|\sE)$ & No \\
        \hline
        \cuscol{3.8cm}{Dirty battery} & \cellcolor{mycyan!18} $D_{\min,\ve}(\sP\|\sE)$ & \cellcolor{myorange!18} $D_{\max,\ve}^{\,\sE}(\sP\|\sE)$ & No \\
        \hline\hline
    \end{tabular}
    \caption{One-shot entropic characterizations of work transformations under uncertain athermality $(\sP, \sE)$ where both $\sP$ and $\sE$ are sets of quantum states. Here $D_{\min,\varepsilon}$ and $D_{\max,\varepsilon}$ denote the standard smoothed min- and max-relative entropies, while $D_{\min,\varepsilon}^{\sE}$ and $D_{\max,\varepsilon}^{\sE}$ denote their subspace-constrained variants introduced in this work. See Sections~\ref{sec:clean-battery} and~\ref{sec:dirty-battery} for definitions.}
    \label{tab:one-shot summary}
\end{table}

A notable feature is that the conventional correspondence between operational tasks and entropic quantities~\cite{liu2019oneshot} breaks down: the standard relative entropies no longer govern both tasks within a fixed battery model. For a clean battery, formation is characterized by the standard max-relative entropy, whereas extraction involves the subspace-constrained min-relative entropy. For a dirty battery, the pattern is reversed: extraction is governed by the standard min-relative entropy, while formation involves the subspace-constrained max-relative entropy. We discuss the clean-battery and dirty-battery scenarios in Sections~\ref{sec:clean-battery} and~\ref{sec:dirty-battery}, respectively. Finally, in Section~\ref{sec: asymptotic} we provide asymptotic analyses and give an explicit example demonstrating the irreversibility of work transformations under both models.

\section{Clean-battery work transformations}
\label{sec:clean-battery}

This section concerns work transformations through a clean battery and provides detailed proofs of the corresponding results stated in the main text.

\subsection{Work extraction into a clean battery}

We start with work extraction from an uncertain athermal state into a clean battery.

\begin{definition}[One-shot extractable work into a clean battery]
    \label{def:one-shot clean battery extractable work}
    Let $\ve \in [0,1)$, and let $(\sP, \sE)$ be an uncertain athermal state with $\sP,\sE\subseteq\density$. Given a class of free operations $\fF$, the one-shot extractable work from $(\sP,\sE)$ into a clean battery is defined as
    \begin{align}
        \beta W_{\fF,\ve}(\sP,\sE) := \log\sup_{\cF\in\fF}\left\{M:(\sP,\sE)\xrightarrow[]{\cF,\;\ve} (\ket{1}\bra{1}, \pi_M)\right\}.
    \end{align}
\end{definition}

This definition reduces to the standard one-shot extractable work~\cite{horodecki2013fundamental,brandao2015second,gour2022role} when both $\sP$ and $\sE$ are singletons, and to black-box work extraction~\cite{watanabe2024black} when only $\sE$ is a singleton. 

To characterize the extractable work in this setting, we introduce the following quantity, which can be regarded as a constrained version of the standard smoothed min-relative entropy.

\begin{definition}[Subspace-constrained min-relative entropy]
\label{def:constrained min relative entropy}
    Let $\ve \in [0,1)$, $\sP, \sE \subseteq \density$, and let $\sK \subseteq \sL$. The subspace-constrained min-relative entropy between $\sP$ and $\sE$ with respect to $\sK$ is defined as
    \begin{align}
        D_{\min,\ve}^{\sK}(\sP\|\sE) := -\log \min_{\substack{0 \leq E \leq I \\ E \perp V(\sK)}} \left\{\sup_{\tau \in \sE}\tr[\tau E] : \sup_{\rho \in \sP} \tr[\rho(I-E)] \leq \ve\right\}.
    \end{align}
    Here $V(\sK) := \operatorname{span}\{\tau - \tau^\prime : \tau, \tau^\prime \in \sK\}$ is the subspace spanned by the differences of elements in $\sK$, and the condition $E \perp V(\sK)$ means that $\tr[E X] = 0$ for all $X \in V(\sK)$.
\end{definition}

The subspace $V(\sK)$ can equivalently be written as
\begin{align}
    V(\sK) = \aff(\sK) - \tau_0 := \{X-\tau_0 : X \in \aff(\sK)\}
\end{align}
for any fixed $\tau_0 \in \sK$. In particular, if $\sK$ is a singleton, then $V(\sK) = \{0\}$, and the constraint $E \perp V(\sK)$ becomes vacuous. In this case, this quantity reduces to the standard smoothed min-relative entropy between two sets of quantum states in Definition~\ref{def:smoothed min relative entropy}. Another case of particular interest is $\sK = \sE$, for which the quantity acquires a natural interpretation as a \emph{constrained} hypothesis testing problem. Recall that in standard hypothesis testing between two sets $\sP$ and $\sE$, one seeks a test operator $0 \leq E \leq I$ that minimizes the worst-case type-II error $\sup_{\tau \in \sE}\tr[\tau E]$ subject to a constant threshold on the worst-case type-I error $\sup_{\rho \in \sP} \tr[\rho(I - E)] \leq \ve$. Since $V(\sE) = \aff(\sE) - \tau_0$ for any fixed $\tau_0 \in \sE$, the additional condition $E \perp V(\sE)$ here is equivalent to requiring $\tr[\tau E] = \tr[\tau_0 E]$ for all $\tau \in \sE$, and hence the type-II error is identical for every $\tau \in \sE$. In other words, the test must remain oblivious to which particular candidate $\tau \in \sE$ is realized.

The following theorem provides an exact characterization of the one-shot extractable work to a clean battery under GPO in terms of the subspace-constrained min-relative entropy.

\begin{restate}{Theorem}{4}[One-shot extractable work to a clean battery under GPO]
    \label{thm:one-shot clean battery work extraction}
    Let $\ve \in [0,1)$, and let $(\sP, \sE)$ be an uncertain athermal state with $\sP,\sE\subseteq\density$. The one-shot extractable work from $(\sP,\sE)$ into a clean battery under GPO is given by
    \begin{align}
        \beta W_{\GPO, \ve}(\sP, \sE) = D_{\min,\ve}^{\sE}(\sP\|\sE).
    \end{align}
\end{restate}
\begin{proof}
We show that the conversion $(\sP, \sE) \xrightarrow[]{\cF,\;\ve} (\ket{1}\bra{1}, \pi_M)$ is achievable by some $\cF \in \GPO$ if and only if there exists a test operator $0 \leq E \leq I$ satisfying $E \perp V(\sE)$, $\tr[\tau E] = 1/M$ for all $\tau \in \sE$, and $\sup_{\rho \in \sP}\tr[\rho(I - E)] \leq \ve$. The desired entropic formula then follows by optimizing over $M$. \smallskip

($\Rightarrow$) Suppose that $\cF \in \GPO$ achieves $(\sP, \sE) \xrightarrow[]{\cF,\;\ve} (\ket{1}\bra{1}, \pi_M)$. Then for every $(\rho,\tau)\in\sP\times\sE$, we have
\begin{align}
    T(\cF(\rho), \ket{1}\bra{1}) \leq \ve,\qquad\cF(\tau) = \pi_M.
\end{align}
Let $\Delta$ denote the completely dephasing channel in the computational basis $\{\ket{0}, \ket{1}\}$ and define $\cF^\prime := \Delta \circ \cF$. Since both $\cF$ and $\Delta$ are CPTP, the map $\cF^\prime$ is also CPTP. Moreover, for every $\tau \in \sE$,
\begin{align}
    \cF^\prime(\tau) = \Delta(\cF(\tau)) = \Delta(\pi_M) = \pi_M.
\end{align}
For every $\rho \in \sP$, contractivity of the trace distance under $\Delta$ gives
\begin{align}
    T(\cF^\prime(\rho), \ket{1}\bra{1}) = T(\Delta(\cF(\rho)), \Delta(\ket{1}\bra{1})) \leq T(\cF(\rho), \ket{1}\bra{1}) \leq \ve.
\end{align}
Now define $E := (\cF^\prime)^\dagger(\ket{1}\bra{1})$, where $(\cF^\prime)^\dagger$ denotes the adjoint map. Since $\cF^\prime$ is CPTP, its adjoint is completely positive and unital, and therefore $0 \leq E \leq I$. It remains to verify that $E$ satisfies the required conditions. \smallskip

First, for every $\tau\in\sE$,
\begin{align}
    \tr[E\tau] = \tr[(\cF^\prime)^\dagger(\ket{1}\bra{1})\tau] = \tr[\ket{1}\bra{1}\cF^\prime(\tau)] = \bra{1}\pi_M\ket{1} = \frac{1}{M}.
\end{align}
Hence $\tr[E\tau]$ is independent of $\tau \in \sE$. It follows that $\tr[(\tau - \tau^\prime)E] = 0$ for all $\tau, \tau^\prime \in \sE$, and therefore $E \perp V(\sE)$. \smallskip

Second, observe that $\cF^\prime(\rho)$ is diagonal in the computational basis, so a direct calculation gives
\begin{align}
    T(\cF^\prime(\rho), \ket{1}\bra{1}) & = \frac{1}{2} \bigl\| \tr[\cF^\prime(\rho)\ket{1}\bra{1}] \ket{1}\bra{1} + \tr[\cF^\prime(\rho)\ket{0}\bra{0}] \ket{0}\bra{0} - \ket{1}\bra{1} \bigr\|_1 \\
    & = \frac{1}{2}\bigl\|\tr[\rho E]\ket{1}\bra{1} + \tr[\rho(I-E)]\ket{0}\bra{0} -\ket{1}\bra{1}\bigr\|_1 \\
    & = \frac{1}{2}\bigl\|({\tr[\rho E] - 1})\ket{1}\bra{1} + \tr[\rho(I-E)]\ket{0}\bra{0}\bigr\|_1 \\
    & = \tr[\rho(I-E)].
\end{align}
Since $T(\cF^\prime(\rho), \ket{1}\bra{1}) \leq \ve$ for all $\rho \in \sP$, we have $\sup_{\rho \in \sP}\tr[\rho(I-E)] \leq \ve$. \medskip

($\Leftarrow$) Suppose there exists $0 \leq E \leq I$ such that $E \perp V(\sE)$, $\tr[\tau_0 E] = 1/M$ for some (hence all) $\tau_0 \in \sE$, and $\sup_{\rho \in \sP}\tr[\rho(I-E)] \leq \ve$. Define the measure-and-prepare channel $\cE(\cdot) := \tr[(I - E)(\cdot)]\ket{0}\bra{0} + \tr[E(\cdot)]\ket{1}\bra{1}$. This is CPTP by construction. We now verify the desired properties. For every $\tau \in \sE$, 
\begin{align} 
    \cE(\tau) = \tr[\tau E]\ket{1}\bra{1} + (1 - \tr[\tau E])\ket{0}\bra{0} = \frac{1}{M}\ket{1}\bra{1} + \left(1 - \frac{1}{M}\right)\ket{0}\bra{0} = \pi_M.
\end{align}
Moreover, for every $\rho \in \sP$, the same trace-distance computation as above yields $T(\cE(\rho), \ket{1}\bra{1}) = \tr[\rho(I-E)] \leq \ve$. Thus $\cE \in \GPO$ achieves $(\sP, \sE) \xrightarrow[]{\cE,\;\ve} (\ket{1}\bra{1}, \pi_M)$. This proves the claimed equivalence, and the theorem follows by optimizing over $M$.
\end{proof}

\begin{remark}
    The subspace constraint $E\perp V(\sE)$ appearing in $D_{\min,\ve}^{\sE}(\sP\|\sE)$ can be equivalently encoded by replacing the second argument $\sE$ with its affine hull. More precisely, we have
    \begin{align} 
        \beta W_{\GPO, \ve}(\sP, \sE) = D_{\min,\ve}^{\sE}(\sP\|\sE) = D_{\min,\ve}(\sP\|\aff(\sE)),
    \end{align}
    where the last quantity is interpreted using the same formula as in Definition~\ref{def:smoothed min relative entropy}, with the second argument extended from sets of states to affine sets of Hermitian trace-one operators. To see the equivalence, fix a test operator $E$. If there exist $\tau_1,\tau_2\in\sE$ such that $\tr(E\tau_1) \neq \tr(E\tau_2)$, then the affine line $\tau_{\lambda} = \lambda \tau_1 + (1-\lambda) \tau_2$ for $\lambda \in \mathbb{R}$ is contained in $\aff(\sE)$, and the objective function $\tr[E\tau_\lambda]$ is unbounded above as $\lambda$ varies. Hence any test operator $E$ with finite objective value must satisfy $\tr(E\tau_1) = \tr(E\tau_2)$ for all $\tau_1, \tau_2 \in \sE$, which is precisely the condition $E\perp V(\sE)$. Conversely, if $E\perp V(\sE)$, then $\tr[E\tau]$ is constant over $\aff(\sE)$ and therefore coincides with its value on $\sE$. Similar affine-hull formulation also appears in discussions of channel resource distillation~\cite{regula2020benchmarking,regula2021oneshot,watanabe2024black}. \smallskip

    This reformulation endows the entropic quantity $D_{\min,\ve}(\sP\|\aff(\sE))$ with a concrete operational meaning: it equals the one-shot extractable work to a clean battery under GPO in the presence of equilibrium uncertainty. Notably, the second argument $\aff(\sE)$ generally contains operators that are not positive semidefinite, so this divergence falls outside the standard domain of quantum relative entropies. While divergences with nonpositive semidefinite second arguments have appeared in specific contexts such as entanglement distillation~\cite{fang2019nonasymptotic} and coherence distillation~\cite{regula2018oneshot}, the present result provides a fully general operational interpretation for arbitrary sets $\sP$ and $\sE$. \smallskip
    
    This affine-hull reformulation also explains why the geometric condition $\conv(\sP) \cap \aff(\sE) \neq \emptyset$ arises naturally in Theorems~\ref{thm:no-go theorem for transformation with uncertainty} and Corollary~\ref{cor:no-go for work extraction to clean battery}. Indeed, we have
    \begin{align} 
        D_{\min,\ve}(\sP\|\aff(\sE)) = D_{\min,\ve}(\conv(\sP)\|\aff(\sE)) = \inf_{\substack{\rho \in \conv(\sP)\\ X \in \aff(\sE)}} D_{\min,\ve}(\rho\|X),
    \end{align}
    where the second equality follows by the same argument as in Ref.~\cite[Lemma 31]{fang2025generalized} and $D_{\min,\ve}(\rho\|X)$ is understood in the extended sense. Combining this identity with the faithfulness of the smoothed min-relative entropy in Lemma~\ref{lem:faithfulness of min-relative entropy} shows that the clean-battery extractable work reduces to the trivial value $\beta W_{\GPO, \ve}(\sP, \sE) = -\log(1-\ve)$ whenever $\conv(\sP) \cap \aff(\sE) \neq \emptyset$. However, it is worth mentioning that the no-go theorems are stronger than this faithfulness argument, since they are established for the larger class $\GPL$ rather than $\GPO$.
\end{remark}

Theorem~\ref{thm:one-shot clean battery work extraction} extends the known connections between work extraction and hypothesis testing~\cite{liu2019oneshot,gour2022role,watanabe2024black} to the general framework developed here. When $\sE$ is a singleton, $V(\sE) = \{0\}$, so the orthogonality constraint becomes vacuous, and the theorem reduces to the black-box work extraction~\cite[Theorem 2]{watanabe2024black}. If $\sP$ is also a singleton, it further reduces to the standard work extraction formula~\cite[Eq.~(50)]{gour2022role}. In the presence of equilibrium uncertainty, however, the same operational task is governed by a constrained hypothesis testing problem, where the test operator must satisfy the additional condition $E \perp V(\sE)$. This condition encodes the penalty imposed by equilibrium uncertainty: it forces the test to be calibration-free, and hence unable to exploit knowledge of which equilibrium state is realized. As shown later in Theorem~\ref{thm:one-shot dirty battery work extraction}, removing this constraint recovers the standard min-relative entropy, which corresponds to work extraction into a dirty battery. The subspace constraint therefore exactly captures the distinction between these two models.

\subsection{Work of formation from a clean battery}

We now turn to the reverse task of preparing an uncertain athermal state from a clean battery.

\begin{definition}[One-shot work cost from a clean battery]
    \label{def:one-shot clean battery work cost}
    Let $\ve \in [0,1)$, and let $(\sP, \sE)$ be an uncertain athermal state with $\sP,\sE\subseteq\density$. Given a class of free operations $\fF$, the one-shot work cost of preparing $(\sP, \sE)$ from a clean battery is defined as
    \begin{align}\label{eq:one-shot work cost}
        \beta C_{\fF,\ve}(\sP, \sE) & := \log \inf_{\cF \in \fF} \left\{M: (\ket{1}\bra{1}, \pi_{M}) \xrightarrow[]{\cF,\; \ve} (\sP, \sE)\right\}. 
    \end{align}
\end{definition}

This definition recovers the standard one-shot work cost~\cite{horodecki2013fundamental,gour2022role} when both $\sP$ and $\sE$ are singletons. The following theorem gives an exact characterization of the clean-battery work cost under GPO in terms of the smoothed max-relative entropy between two sets, thereby giving a clear operational interpretation to this quantity.

\begin{restate}{Theorem}{5}[One-shot work cost from a clean battery under GPO]
    \label{thm:one-shot clean battery work cost}
    Let $\ve \in [0,1)$, and let $(\sP, \sE)$ be an uncertain athermal state with $\sP,\sE\subseteq\density$. The one-shot work cost of preparing $(\sP, \sE)$ from a clean battery under $\GPO$ is given by
    \begin{align}
        \beta C_{\GPO, \ve}(\sP,\sE) = D_{\max, \ve}(\sP\|\sE).
    \end{align}
\end{restate}
\begin{proof}
    By definition of uncertain athermality resource transformation in Eq.~\eqref{eq:sets conversion} and the definition of $\beta C_{\GPO, \ve}(\sP,\sE)$ in Eq.~\eqref{eq:one-shot work cost}, we have
    \begin{align}
        \beta C_{\GPO,\ve}(\sP, \sE) & = \log \inf_{\cF \in \GPO} \left\{M: (\ket{1}\bra{1}, \pi_{M}) \xrightarrow[]{\cF,\; \ve} (\sP, \sE)\right\} \\
        & = \log \inf_{(\rho, \tau) \in \sP\times\sE} \inf_{\cF\in\GPO} \left\{M:(\ket{1}\bra{1}, \pi_M) \xrightarrow[]{\cF,\;\ve} (\rho, \tau)\right\} \\
        & = \inf_{(\rho, \tau) \in \sP\times\sE} \log \inf_{\cF\in\GPO} \left\{M:(\ket{1}\bra{1}, \pi_M) \xrightarrow[]{\cF,\;\ve} (\rho, \tau)\right\} \\
        & = \inf_{(\rho, \tau) \in \sP\times\sE} D_{\max,\ve}(\rho\|\tau) \\
        & = D_{\max,\ve}(\sP\|\sE),
    \end{align}
    where the penultimate equality follows from~\cite[Eq.~(48)]{wang2019resource}. This concludes the proof.
\end{proof}

When both $\sP$ and $\sE$ are singletons, this theorem recovers the standard one-shot work-cost formula~\cite[Eq.~(50)]{gour2022role}. More generally, it shows that the work cost of preparing an uncertain athermal state is determined by the easiest candidate within the target set. It also gives the smoothed max-relative entropy between two sets of quantum states, previously studied as a mathematical quantity in Ref.~\cite{fang2025generalized}, a direct operational interpretation in quantum thermodynamics under uncertainty.

\section{Dirty-battery work transformations}
\label{sec:dirty-battery}

In this section, we consider work transformations through a dirty battery and provide detailed proofs of the corresponding results stated in the main text.

\subsection{Work extraction into a dirty battery}

The preceding no-go results and clean-battery analysis reveal a sharp obstruction: under equilibrium uncertainty, the subspace constraint on the test can severely limit work extraction and, in generic cases (Corollary~\ref{cor:no-go for work extraction to clean battery}), collapse the usual work-error tradeoff. Since uncertainty in the equilibrium state of the input system is already unavoidable, it is reasonable to allow uncertainty in the target system as well and ask whether this obstruction can be circumvented by using a dirty battery.

\begin{definition}[One-shot extractable work into a dirty battery]
    \label{def:one-shot dirty battery extractable work}
    Let $\ve \in [0,1)$, and let $(\sP, \sE)$ be an uncertain athermal state with $\sP,\sE\subseteq\density$. Given a class of free operations $\fF$, the one-shot extractable work from $(\sP,\sE)$ into a dirty battery is defined as
    \begin{align}
        \label{eq:one-shot dirty battery extractable work}
        \beta \widebar{W}_{\fF,\ve}(\sP, \sE) := \log \sup_{\cF \in \fF} \left\{M: (\sP, \sE) \xrightarrow[]{\cF,\; \ve} (\ket{1}\bra{1}, \Pi_M)\right\}.
    \end{align}
\end{definition}
As in the clean-battery setting, this definition reduces to the standard one-shot extractable work~\cite{horodecki2013fundamental,brandao2015second,gour2022role} when both $\sP$ and $\sE$ are singletons, and to black-box work extraction~\cite{watanabe2024black} when only $\sE$ is a singleton. Therefore, it remains a reasonable extension of work extraction to the setting with equilibrium uncertainty.
The following theorem gives an exact characterization of the one-shot extractable work from a dirty battery under GPO. In contrast to the clean-battery case, the relevant quantity is the standard smoothed min-relative entropy between two sets of quantum states.

\begin{restate}{Theorem}{6}[One-shot extractable work to a dirty battery under GPO]
    \label{thm:one-shot dirty battery work extraction}
    Let $\ve \in [0,1)$, and let $(\sP, \sE)$ be an uncertain athermal state with $\sP,\sE\subseteq\density$. The one-shot extractable work from $(\sP,\sE)$ into a dirty battery under $\GPO$ is given by
    \begin{align}
        \beta \widebar{W}_{\GPO,\ve}(\sP, \sE) = D_{\min, \ve}(\sP\|\sE).
    \end{align}
\end{restate}
\begin{proof}
    ($\geq$) Let $0\leq E \leq I$ be any test operator satisfying $\sup_{\rho\in\sP}\tr[(I-E)\rho] \leq \varepsilon$.\smallskip

    In the special case where $\sup_{\tau \in \sE}\tr[E\tau] = 0$, fix an arbitrary $M_0 > 1$ and define the measure-and-prepare channel $\cE_{M_0}(\cdot) := \tr[(I-E)(\cdot)]\, \pi_{M_0} + \tr[E(\cdot)]\, \ket{1}\bra{1}$. Then for any $\tau\in\sE$, $\cE_{M_0}(\tau) = \pi_{M_0}$. For any $\rho \in \sP$, let $x = \tr[E\rho]$. Then
    \begin{align}
        \cE_{M_0}(\rho) & = (1-x)(1-1/M_0)\ket{0}\bra{0} + \bigl[(1-x)/M_0 + x\bigr]\ket{1}\bra{1} \\
        & = (1-x)(1-1/M_0)\ket{0}\bra{0} + \bigl[1 - (1-x)(1-1/M_0)\bigr]\ket{1}\bra{1}.
    \end{align}
    Therefore, $\cE_{M_0}(\rho) - \ket{1}\bra{1} = (1-x)(1-1/M_0)\big(\ket{0}\bra{0} - \ket{1}\bra{1}\big)$, and hence
    \begin{align}
        T(\cE_{M_0}(\rho), \ket{1}\bra{1}) = \tfrac{1}{2}\|\cE_{M_0}(\rho) - \ket{1}\bra{1}\|_1 = (1-x)(1-1/M_0) \leq 1-x \leq \ve,
    \end{align}
    where the last inequality follows from the feasibility of $E$. This shows that the conversion $(\sP, \sE) \xrightarrow[]{\cE_{M_0},\;\ve} (\ket{1}\bra{1}, \Pi_{M_0})$ is achievable by $\cE_{M_0} \in \GPO$ for any $M_0 > 1$. Taking $M_0\to\infty$ gives
    \begin{align}
        \beta \widebar{W}_{\GPO,\ve}(\sP, \sE) = \infty = D_{\min, \ve}(\sP\|\sE).
    \end{align}
    
    If $\sup_{\tau \in \sE}\tr[E\tau] > 0$, we define the measure-and-prepare channel $\cE(\cdot) = \tr[(I-E)(\cdot)] \ket{0}\bra{0} + \tr[E(\cdot)] \ket{1}\bra{1}$. For any $\rho \in \sP$,
    \begin{align}
        T(\cE(\rho), \ket{1}\bra{1}) = \frac{1}{2}\|\tr[(I-E)\rho] \ket{0}\bra{0} + \tr[E\rho] \ket{1}\bra{1} - \ket{1}\bra{1}\|_1 = \tr[(I-E)\rho] \leq \ve.
    \end{align}
    For any $\tau \in \sE$,
    \begin{align}
        \cE(\tau) = \tr[(I-E)\tau] \ket{0}\bra{0} + \tr[E\tau] \ket{1}\bra{1} = \pi_{M_{\tau}}
    \end{align}
    where $M_\tau = \frac{1}{\tr[E\tau]}$. Taking infimum over all $\tau \in \sE$ yields
    \begin{align}
       M^\prime:= \inf_{\tau\in\sE} M_{\tau} = \frac{1}{\sup_{\tau\in\sE} \tr[E\tau]}.
    \end{align}
    Thus the channel $\cE$ is feasible for the conversion $(\sP, \sE) \xrightarrow[]{\cE,\; \ve} (\ket{1}\bra{1}, \Pi_{M^\prime})$. It follows that
    \begin{align}
        \beta \widebar{W}_{\GPO,\ve}(\sP, \sE) \geq \log M^\prime = \log \frac{1}{\sup_{\tau\in\sE} \tr[E\tau]}
    \end{align}
    Taking the supremum over all feasible $E$ gives $\beta \widebar{W}_{\GPO,\ve}(\sP, \sE) \geq D_{\min,\ve}(\sP\|\sE)$. \smallskip

    ($\leq$) To obtain the reverse inequality, suppose $\cF\in\GPO$ achieves the conversion $(\sP, \sE) \xrightarrow[]{\cF,\; \ve} (\ket{1}\bra{1}, \Pi_M)$. Then for every $\rho \in \sP$, $T(\cF(\rho),\ket{1}\bra{1}) \leq \ve$, and for every $\tau \in \sE$, $\cF(\tau) = \pi_{M_{\tau}}$ for some $M_{\tau} \geq M$. Define $E := \cF^\dagger(\ket{1}\bra{1})$, where $\cF^\dagger$ denotes the adjoint of $\cF$. Since $\cF$ is CPTP, we have $0 \leq E \leq I$. \smallskip
    
    For any $\rho \in \sP$, the adjoint property gives $\tr[(I-E)\rho] = \tr[(I - \ket{1}\bra{1})\cF(\rho)]$. By the variational formula of trace distance,
    \begin{align}
        T(\cF(\rho), \ket{1}\bra{1}) = \sup_{0\leq F\leq I}\tr[F(\cF(\rho)-\ket{1}\bra{1})].
    \end{align}
    Taking $F = I - \ket{1}\bra{1}$ yields
    \begin{align}
        \tr[(I-E)\rho]  = \tr[(I - \ket{1}\bra{1})(\cF(\rho) - \ket{1}\bra{1})] \leq T(\cF(\rho), \ket{1}\bra{1}) \leq \ve.
    \end{align}
    Hence $\sup_{\rho \in \sP} \tr[(I-E)\rho] \leq \ve$ and $E$ is feasible for $D_{\min,\ve}(\sP\|\sE)$. Moreover, for any $\tau \in \sE$,
    \begin{align}
        \tr[E\tau] = \tr[\ket{1}\bra{1}\cF(\tau)] = \tr[\ket{1}\bra{1}\pi_{M_\tau}] = \frac{1}{M_\tau} \leq \frac{1}{M},
    \end{align}
    which implies $\sup_{\tau \in \sE} \tr[E\tau] \leq \frac{1}{M}$. Consequently
    \begin{align}
        D_{\min, \ve}(\sP\|\sE) \geq -\log \sup_{\tau \in \sE} \tr[E\tau] \geq \log M.
    \end{align}
    As this holds for any transformation $\cF\in\GPO$ achieving $(\sP, \sE) \xrightarrow[]{\cF, \ve} (\ket{1}\bra{1}, \Pi_M)$, taking the supremum over all achievable $M$ gives
    $ D_{\min, \ve}(\sP\|\sE) \geq \beta \widebar{W}_{\GPO,\ve}(\sP, \sE)$. This completes the proof.
\end{proof}

If $\sE$ is a singleton, the equilibrium state of the output battery is uniquely determined, and this knowledge effectively makes the dirty battery a clean one. In this case, Theorem~\ref{thm:one-shot dirty battery work extraction} also recovers the known characterizations for both the standard work extraction~\cite[Eq.~(50)]{gour2022role} and the recent black-box work extraction~\cite[Theorem~2]{watanabe2024black}. For a general equilibrium set $\sE$, the equilibrium uncertainty in the dirty battery removes the subspace constraint $E \perp V(\sE)$ that suppresses work extraction in the clean-battery setting (Theorem~\ref{thm:one-shot clean battery work extraction}), yielding a generically larger extractable work,
\begin{align}
    D_{\min,\ve}^{\sK}(\sP\|\sE) \leq D_{\min,\ve}(\sP\|\sE).
\end{align}
Moreover, this theorem gives the smoothed min-relative entropy between two sets of quantum states, previously studied as a purely mathematical quantity in~\cite{fang2025generalized}, a concrete operational interpretation in quantum thermodynamics, complementing the role of the max-relative entropy in work cost from a clean battery.

\subsection{Work of formation from a dirty battery}

We now consider the reverse task: preparing an uncertain athermality resource from a dirty battery with a worst-case certification of its work capacity. 

\begin{definition}[One-shot work cost from a dirty battery]
    \label{def:one-shot dirty battery work cost}
    Let $\ve \in [0,1)$, and let $(\sP, \sE)$ be an uncertain athermal state with $\sP,\sE\subseteq\density$. Given a class of free operations $\fF$, the one-shot work cost of preparing $(\sP, \sE)$ from a dirty battery is defined as
    \begin{align}\label{eq:one-shot dirty battery work cost}
        \beta \widebar{C}_{\fF,\ve}(\sP,\sE) := \log \inf_{\cF\in\fF} \left\{M:(\ket{1}\bra{1}, \Pi_M) \xrightarrow[]{\cF, \ve} (\sP, \sE)\right\}.
    \end{align}
\end{definition}

To characterize the work cost in this scenario, we introduce a constrained variant of the standard smoothed max-relative entropy.

\begin{definition}[Subspace-constrained max-relative entropy]
    \label{def:segment max relative entropy}
    Let $\ve \in [0,1)$, and let $\sK,\sP,\sE \subseteq \density$. The subspace-constrained max-relative entropy between $\sP$ and $\sE$ with respect to $\sK$ is defined as
    \begin{align}\label{eq:segment max relative entropy}
        D_{\max,\ve}^{\sK}(\sP\|\sE) := \log\inf_{\substack{\gamma \in \cl(\sK)\\\omega \in \sB_\ve(\sP)}} \{M>1: \sC(\gamma, \omega, 1/M) \subseteq \sE\},
    \end{align}
    where $\cl(\sK)$ denotes the closure of $\sK$ in $\density$, and $\sC(\gamma, \omega, 1/M) := \{(1-\lambda)\gamma + \lambda \omega : \lambda \in (0, 1/M]\}$ denotes the initial $1/M$ portion of the half-open segment from $\gamma$ (excluded) toward $\omega$.
\end{definition}

It is straightforward to verify that ${D}_{\max,\ve}^{\,\sK}(\sP\|\sE) \geq {D}_{\max,\ve}(\sP\|\sE)$ in general. When $\sE$ is convex and closed, the condition $\sC(\gamma, \omega, 1/M) \subseteq \sE$ is equivalent to requiring that the two endpoints of the corresponding closed segment lie in $\sE$, namely,
\begin{align}
    \gamma \in \sE, \qquad \tau:=(1-1/M)\gamma + (1/M)\omega \in \sE.
\end{align}
Indeed, $\tau \in \sE$ follows by taking the endpoint $\lambda = 1/M$, while $\gamma\in\sE$ follows from closedness by taking the limit $\lambda \to 0^+$ along the segment. Conversely, if both endpoints lie in $\sE$, then convexity implies that the whole segment between them lies in $\sE$. For any set $\sK\subseteq\density$, let $\cone(\sK):= \{z \rho: z \geq 0, \rho \in \sK\}$ denote the cone generated by $\sK$. Together with the constraint $\gamma \in \cl(\sK)$, the endpoint condition can be rewritten as $M\tau - \omega \in \cone(\cl(\sK)\cap \sE)$. Thus, in this case, the subspace-constrained max-relative entropy admits an equivalent form
\begin{align}
    D_{\max,\ve}^{\,\sK}(\sP\|\sE) = \log \inf_{\substack{\tau \in \sE\\\omega \in \sB_\ve(\sP)}}\left\{M > 1 : M\tau - \omega \in \cone(\cl(\sK)\cap \sE)\right\}.
\end{align}
This expression relates to the cone-restricted max-relative entropy in Ref.~\cite{george2024conerestricted}. 

Although the subspace-constrained max-relative entropy may appear artificial at first sight, it admits a direct operational interpretation as the one-shot work cost from a dirty battery under GPO.

\begin{restate}{Theorem}{7}[One-shot work cost from a dirty battery under GPO]
    \label{thm:one-shot dirty battery work cost}
    Let $\ve \in [0,1)$, and let $(\sP, \sE)$ be an uncertain athermal state with $\sP,\sE\subseteq\density$. The one-shot work cost of preparing $(\sP, \sE)$ from a dirty battery under $\GPO$ is given by
    \begin{align}
        \beta \widebar{C}_{\GPO,\ve}(\sP,\sE) = {D}_{\max,\ve}^{\,\sE}(\sP\|\sE).
    \end{align}
\end{restate}
\begin{proof}
    We show that the conversion $(\ket{1}\bra{1},\Pi_M) \xrightarrow[]{\cF,\;\ve}  (\sP,\sE)$ is achievable by some $\cF \in \GPO$ if and only if there exist $\omega\in\sB_{\ve}(\sP)$ and $\tau\in\cl(\sE)$ such that $\sC(\tau,\omega,1/M) \subseteq \sE$. The stated entropic characterization then follows by optimizing over all such $M$. \smallskip
    
    ($\Rightarrow$) Suppose that $(\ket{1}\bra{1}, \Pi_M) \xrightarrow[]{\cF,\;\ve} (\sP, \sE)$ is achieved by $\cF \in \GPO$. Define $\omega := \cF(\ket{1}\bra{1})$ and $\tau := \cF(\ket{0}\bra{0})$. By the approximation requirement on the nonequilibrium component, there exists $\rho\in\sP$ such that $T(\omega,\rho) \le \ve$, so $\omega \in \sB_\ve(\sP)$. Next, fix any $\lambda\in(0,1/M]$ and set $M^\prime = 1/\lambda$. Then $M^\prime \in[M,\infty)$ and $\pi_{M^\prime} \in \Pi_M$. The linearity of $\cF$ implies $\cF(\pi_{M^\prime}) = (1-\lambda)\tau + \lambda \omega$. By the exact Gibbs-preserving requirement for the target equilibrium set, we have $\cF(\pi_{M^\prime}) \in \sE$. Therefore,
    \begin{align}
        (1-\lambda)\tau + \lambda\omega = \cF(\pi_{M^\prime}) \in \sE.
    \end{align}
    Since this holds for every $\lambda \in (0,1/M]$, we have $\sC(\tau,\omega,1/M) \subseteq \sE$. Finally, taking $\lambda \to 0^+$ along the segment shows that $\tau$ is a limit of points in $\sE$, and hence $\tau \in \cl(\sE)$. \medskip

    ($\Leftarrow$) Suppose there exist $\omega \in \sB_\ve(\sP)$ and $\tau\in\cl(\sE)$ such that $\sC(\tau, \omega, 1/M) \subseteq \sE$. Since $\cl(\sE)\subseteq\density$, $\tau$ is a valid quantum state. Moreover, because $\omega \in \sB_\ve(\sP)$, there exists $\rho\in \sP$ with $T(\omega,\rho) \le \ve$. Define the measure-and-prepare channel $\cF(\cdot) := \tr[\ket{0}\bra{0} (\cdot)] \tau + \tr[\ket{1}\bra{1}(\cdot)] \omega$. Since $\cF(\ket{1}\bra{1}) = \omega$, we have $T(\cF(\ket{1}\bra{1}),\rho) \le \ve$. Thus the nonequilibrium component is prepared within error $\ve$. 
    
    It remains to verify the Gibbs-preserving condition on the uncertain equilibrium set. For every $M^\prime\in[M,\infty)$, let $\lambda := 1/M^\prime \in (0,1/M]$. Then
    \begin{align}
        \cF(\pi_{M^\prime}) = (1-\lambda)\tau + \lambda \omega \in \sC(\tau,\omega,1/M) \subseteq \sE.
    \end{align}
    Hence every candidate input Gibbs state $\pi_{M^\prime}\in\Pi_M$ is mapped into the target equilibrium set $\sE$. Combining this with the above approximation condition certifies that $(\ket{1}\bra{1}, \Pi_M) \xrightarrow[]{\cF,\;\ve} (\sP, \sE)$ is achievable by $\cF \in \GPO$. This proves the desired equivalence.
\end{proof}

\begin{remark}
    An analogous characterization can be obtained for work cost from a dirty battery under $\GPL$. In this case, the image $X:=\cL(\ket{1}\bra{1})$ need not be a valid quantum state. Nevertheless, all of the above argument remains valid upon replacing the set $\sB_\ve(\sP)$ with the $\ve$-ball of linear operators around $\sP$ defined as $\sB^{\sL}_{\ve}:=\{X\in \sL: T(X, \rho) \leq \ve\text{ for some }\rho\in\sP\}$.
\end{remark}

This result provides an exact characterization of the dirty-battery work cost in terms of the subspace-constrained max-relative entropy. In contrast to the standard smoothed max-relative entropy governing the clean-battery work cost in Theorem~\ref{thm:one-shot clean battery work cost}, this quantity depends explicitly on the geometry of the equilibrium uncertainty set $\sE$. This dependence reflects the additional restriction imposed by the uncertain equilibrium state of the battery during the formation process. 

\begin{figure}[H]
    \centering
    \includegraphics[width=0.55\linewidth]{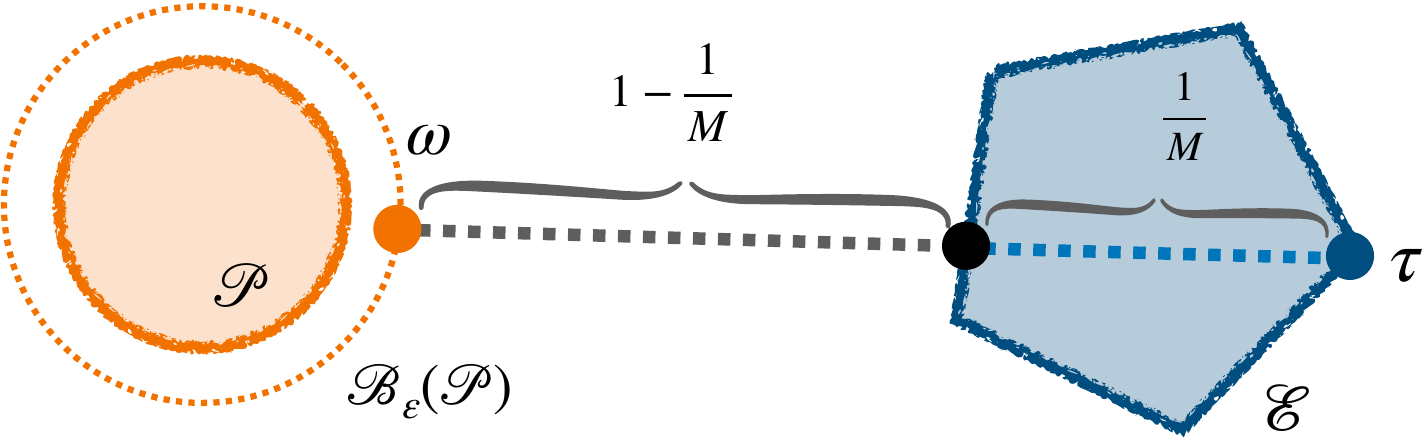}
    \caption{Geometric illustration of the condition for work formation from a dirty battery. The task is to find $\omega \in\sB_{\ve}(\sP)$ and $\tau\in\cl(\sE)$ such that the half-open line segment from $\tau$ (excluded) toward $\omega$ of length $1/M$ is entirely contained in $\sE$.}
    \label{fig:segment max relative entropy}
\end{figure}

Theorem~\ref{thm:one-shot dirty battery work cost} also admits a clear geometric interpretation. As illustrated in Fig.~\ref{fig:segment max relative entropy}, the formation $(\ket{1}\bra{1}, \Pi_M) \xrightarrow[]{\cF,\;\ve} (\sP, \sE)$ is achievable by $\cF\in\GPO$ if and only if there exist $\omega \in \sB_\ve(\sP)$ and $\tau \in \cl(\sE)$ such that the half-open line segment from $\tau$ (excluded) toward $\omega$ of length $1/M$, $\sC(\tau,\omega,1/M)$, lies entirely inside $\sE$. The work cost is then determined by the smallest value of $M$ for which such a segment exists. The same geometric characterization extends to $\GPL$ upon replacing the $\sB_\ve(\sP)$ by $\sB^{\sL}_{\ve}(\sP)$ defined above.

This geometric picture makes clear how the shape of the equilibrium uncertainty set affects the formation cost. Intuitively, the narrower the set $\sE$, the shorter the line segment that can be accommodated inside it, and hence the larger the work cost. In the extreme case where $\sE$ is a singleton lying outside the $\ve$-ball of $\sP$, no admissible pair $(\omega, \tau)$ exists, and the work cost diverges to infinity. We make this intuition more transparent in the following lower bound on work cost, stated in terms of the trace-distance separation $T(\sP, \sE) := \inf_{\rho \in \sP,\; \tau \in \sE} T(\rho, \tau)$ and the diameter $\diam(\sE) := \sup_{\tau, \tau^\prime \in \sE} T(\tau, \tau^\prime)$. \medskip

\begin{proposition}[Geometric lower bound on work cost from a dirty battery]
    \label{prop:lower bound diameter}
    Let $\ve \in [0,1)$, and let $(\sP, \sE)$ be an uncertain athermal state with $\sP,\sE\subseteq\density$. If $\sB_{\ve}(\sP)\cap \sE \neq \emptyset$, then $\beta \widebar{C}_{\TO,\ve}(\sP, \sE) = 0$, and the zero cost is achieved by a trivial thermal operation that discards the input and prepares a suitable Gibbs state in $\sE$. If $T(\sP, \sE) > \ve$, then
    \begin{align}
        \beta \widebar{C}_{\GPL,\ve}(\sP,\sE) \geq \log \frac{T(\sP, \sE) - \ve}{\diam(\sE)},
    \end{align}
    with the conventions that $\diam(\sE) = 0$ when $\sE$ is a singleton and $\log \infty = \infty$.
\end{proposition}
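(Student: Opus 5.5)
The proof splits into two cases.

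\emph{First case: $\sB_\ve(\sP)\cap\sE\neq\emptyset$.} Pick $\tau_0$ in this intersection and use the replacer map $\cF(\cdot)=\tr[\cdot]\,\tau_0$. By the Remark after Theorem~\ref{thm:no-go theorem for transformation with uncertainty}, this map is a thermal operation once we choose an ancilla whose Gibbs state is $\tau_0$. For the dirty battery input $(\ket{1}\bra{1},\Pi_M)$ it gives
\begin{align}
    \cF(\ket{1}\bra{1})=\tau_0\in\sB_\ve(\sP), \qquad \cF(\pi_{M^\prime})=\tau_0\in\sE \quad \text{for all } M^\prime\geq M.
\end{align}
This works for every $M>1$. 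Letting $M\to1^+$ gives $\beta\widebar{C}_{\TO,\ve}(\sP,\sE)\leq 0$. The cost is also $\geq 0$, because the infimum runs over $M>1$. Hence it equals $0$.

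\emph{Second case: $T(\sP,\sE)>\ve$.} Here I use the GPL version of the segment characterization from the Remark after Theorem~\ref{thm:one-shot dirty battery work cost}. Suppose some $\cL\in\GPL$ achieves $(\ket{1}\bra{1},\Pi_M)\xrightarrow[]{\cL,\;\ve}(\sP,\sE)$. Set $X:=\cL(\ket{1}\bra{1})$ and $Y:=\cL(\ket{0}\bra{0})$. By linearity, for every $\lambda\in(0,1/M]$,
\begin{align}
    \cL(\pi_{1/\lambda})=(1-\lambda)Y+\lambda X\in\sE.
\end{align}
Now compare two points of $\sE$. Take $\tau_1=\cL(\pi_M)$, which corresponds to $\lambda=1/M$, and $\tau_\lambda=\cL(\pi_{1/\lambda})$ for small $\lambda$. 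Their difference is
\begin{align}
    \tau_1-\tau_\lambda=(1/M-\lambda)(X-Y),
\end{align}
so $(1/M-\lambda)\,T(X,Y)\leq\diam(\sE)$. Letting $\lambda\to0^+$ gives $T(X,Y)\leq M\,\diam(\sE)$.

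To get a lower bound on $T(X,Y)$, pick $\rho\in\sP$ with $T(X,\rho)\leq\ve$; such a $\rho$ exists by the approximation requirement. The triangle inequality then gives
\begin{align}
    T(X,\tau_1)\geq T(\rho,\tau_1)-\ve\geq T(\sP,\sE)-\ve.
\end{align}
The triangle inequality holds for the norm on Hermitian operators even though $X$ need not be a state. Since $\tau_1-X=(1-1/M)(Y-X)$, we also have
\begin{align}
    T(X,\tau_1)=(1-1/M)\,T(X,Y)\leq T(X,Y)\leq M\,\diam(\sE).
\end{align}
Combining the last two displays yields $M\geq (T(\sP,\sE)-\ve)/\diam(\sE)$. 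Taking the logarithm and the infimum over achievable $M$ gives the stated bound.

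\emph{Singleton $\sE$.} If $\diam(\sE)=0$, the inequality $T(X,\tau_1)\leq(1-1/M)T(X,Y)$ still holds. But all the points $(1-\lambda)Y+\lambda X$ coincide, so $X=Y$. Then $T(X,\tau_1)=0$, which contradicts $T(X,\tau_1)\geq T(\sP,\sE)-\ve>0$. So no $\cL\in\GPL$ is feasible, the cost is $\infty$, and this matches the convention $\log\infty=\infty$.

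The main subtlety is that $X$ and $Y$ are not states under GPL. Every step, however, uses only linearity, the triangle inequality, and the fact that all the points $(1-\lambda)Y+\lambda X$ lie in $\sE$. No positivity is needed, so I expect no obstacle there. The only other care point is the infinite-cost case, which is handled separately as above.
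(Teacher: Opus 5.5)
Your proof is correct. Its skeleton matches the paper's: the segment structure from the GPL version of Theorem~7, a triangle inequality against some $\rho\in\sP$, and a chord of $\sE$ bounded by $\diam(\sE)$. The first case is identical.

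The real difference is the point you compare $X$ against.
\begin{itemize}
\item The paper anchors at $\tau=\cL(\ket{0}\bra{0})\in\cl(\sE)$, writes $X=M\tau_1-(M-1)\tau$, and uses $T(X,\rho)\geq T(\rho,\tau)-M\,T(\tau_1,\tau)$. This forces it to invoke $T(\sP,\cl(\sE))=T(\sP,\sE)$ and $\diam(\cl(\sE))=\diam(\sE)$, and to rule out the case $T(\tau_1,\tau)=0$ separately.
\item You compare $X$ with $\tau_1=\cL(\pi_M)$, which lies in $\sE$ itself. You get $T(X,Y)\leq M\,\diam(\sE)$ from chords lying in $\sE$ plus a limit of real numbers. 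No closure argument is needed, and the degenerate case $X=Y$ is absorbed into the inequality chain.
\end{itemize}

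Your route also gives more than you used. If you keep the factor $(1-1/M)$ instead of discarding it, you get
\begin{align}
    (M-1)\,\diam(\sE)\geq T(\sP,\sE)-\ve, \quad\text{i.e.}\quad M\geq 1+\frac{T(\sP,\sE)-\ve}{\diam(\sE)}.
\end{align}
This is strictly stronger than the stated bound, and it can be tight. Take $\sP=\{\ket{1}\bra{1}\}$, $\ve=0$, and $\sE=\{(1-\lambda)\ket{0}\bra{0}+\lambda\ket{1}\bra{1}:\lambda\in(0,1/N]\}$ with $N>1$. The identity channel achieves cost $\log N$, and your sharpened inequality matches it. The proposition's bound only gives $\log(N-1)$ here.

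One wording point: a GPL map need not preserve Hermiticity, so the fact you need is that $\tfrac12\|\cdot\|_1$ satisfies the triangle inequality on all of $\sL$. In any case, the segment condition forces $X$ and $Y$ to be Hermitian here.
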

\begin{proof}
    Suppose there exists $\tau \in \sB_{\ve}(\sP)\cap \sE$. Then there is some $\rho\in\sP$ such that $T(\rho, \tau) \leq \ve$. Consider the replacer channel $\cF(\cdot):= \tr[\cdot] \tau$. For any $M>1$, we have
    \begin{align}
        (\ket{1}\bra{1}, \Pi_M) \xrightarrow[]{\cF, 0} (\tau, \tau) \xrightarrow[]{\id, \ve} (\rho, \tau) \in \sP \times \sE.
    \end{align}
    Thus the formation can be achieved with arbitrary $M>1$ through this channel. Taking the infimum over $M$ gives $\beta \widebar{C}_{\TO,\ve}(\sP, \sE) = 0$. \smallskip

    Now suppose that $T(\sP,\sE) > \ve$. It suffices to show that every finite $M$ feasible under GPL satisfies the claimed lower bound. By the proof of Theorem~\ref{thm:one-shot dirty battery work cost}, together with its GPL extension stated in the subsequent remark, feasibility of such an $M$ implies that there exist $\rho \in \sP$, $\tau \in \cl(\sE)$, and $X\in\sL$ such that $T(X,\rho) \leq \ve$ and $(1-\lambda)\tau + \lambda X \in \sE$ for all $\lambda \in (0,1/M]$. Taking $\lambda = 1/M$ gives $\tau_1 := (1-1/M)\tau + (1/M)X \in \sE$, or equivalently, $X = M\tau_1 - (M-1)\tau$. Writing $\rho - X = (\rho - \tau) - M(\tau_1 - \tau)$ and applying the reverse triangle inequality for the trace norm gives
    \begin{align}
        T(X,\rho) & = \frac{1}{2}\bigl\|(\rho - \tau) - M(\tau_1 - \tau)\bigr\|_1 \\
        & \geq \frac{1}{2}\|\rho - \tau\|_1 - \frac{M}{2}\|\tau_1 - \tau\|_1 \\
        & = T(\rho, \tau) - M \cdot T(\tau_1, \tau).
    \end{align}
    Since $T(X,\rho)\leq \ve$, it follows that $MT(\tau_1,\tau)\geq T(\rho,\tau) -\ve$.\smallskip
    
    If $T(\tau_1, \tau) > 0$, then $M \geq \frac{T(\rho, \tau) - \ve}{T(\tau_1, \tau)}$. Becasue $\tau\in\cl(\sE)$ and $\tau_1\in\sE\subseteq\cl(\sE)$, the continuity of trace distance implies
    \begin{align}
        T(\rho, \tau) \geq T(\sP,\cl(\sE)) = T(\sP, \sE)
    \end{align}
    and
    \begin{align}
        T(\tau_1, \tau) \leq \diam(\cl(\sE)) = \diam(\sE).
    \end{align}
    Hence,
    \begin{align}
        M \geq \frac{T(\sP, \sE) - \ve}{\diam(\sE)},
    \end{align}
    which gives the desired bound upon taking the logarithm. \smallskip

    It remains to rule out the case $T(\tau_1,\tau) = 0$. In this case $\tau_1 = \tau$. Since $M<\infty$ and $\tau_1 = (1-1/M)\tau + (1/M)X$, we must have $X=\tau$. Therefore,
    \begin{align}
        T(X,\rho)=T(\tau,\rho) \geq T(\sP,\cl(\sE)) = T(\sP,\sE) > \ve,
    \end{align}
    contradicting the feasibility condition $T(X,\rho) \leq \ve$. Hence no finite feasible $M$ can fall into this case. This completes the proof.
\end{proof}

\begin{remark}
    When $\sP$ and $\sE$ are closed, the overlap condition $\sB_{\ve}(\sP)\cap \sE \neq \emptyset$ in the first statement can be replaced by $T(\sP,\sE) \leq \ve$. Indeed, since $\sP,\sE$ are closed subsets of the finite-dimensional state space $\density$, they are compact. By continuity of the trace distance, the infimum defining $T(\sP,\sE)$ is attained by pair $(\rho, \tau) \in \sP\times\sE$. Hence $T(\sP,\sE) \leq \ve$ ensures the existence of such a pair with $T(\rho, \tau) \leq \ve$, and the replacer channel $\cF(\cdot) := \tr[\cdot]\tau$ then yields zero work cost.
\end{remark}

The above lower bound is determined entirely by the geometry of $\sP$ and $\sE$. The trace-distance separation $T(\sP, \sE)$ measures how far the nonequilibrium states lie from the equilibrium set, while the diameter $\diam(\sE)$ quantifies the size of that uncertainty set. If $T(\sP, \sE) > \ve$, the lower bound increases with the separation distance $T(\sP,\sE)$ and shrinks with the diameter $\diam(\sE)$. This matches the geometric intuition that formation becomes harder when the target states are farther from equilibrium, and also when the equilibrium uncertainty set is too narrow to accommodate the required line segment. In the limiting case where the target $(\sP,\sE)=(\ket{1}\bra{1}, \pi_N)$ is a clean battery, one has $\diam(\sE) = 0$. Hence the work cost diverges whenever the target is nontrivial, i.e., $T(\sP,\sE) = 1-1/N > \ve$, connecting the no-go theorem for battery energy truncation in Corollary~\ref{col:no-go truncation}.

\section{Asymptotic analysis and irreversibility}
\label{sec: asymptotic}

\subsection{Asymptotic analysis}

We now turn to the asymptotic regime, where we consider sequences of uncertain athermal states and investigate the optimal work extraction and formation rates under GPO. 

\begin{definition}[Asymptotic rates]
    \label{def:asymptotic rates}
    Let $\ve\in[0,1)$, and let $\{(\sP_n,\sE_n)\}_{n\in \NN}$ be a sequence of uncertain athermal states with $\sP_n,\sE_n\subseteq\density(\cH^{\ox n})$ for each $n\in\NN$. Whenever the following limits exist, we define the asymptotic work extraction and work cost rates of $\{(\sP_n, \sE_n)\}_{n\in\NN}$ under GPO as
    \begin{align}
        \beta \mathbb{W}_{\GPO, \ve}^\infty\left(\sP, \sE\right) & := \lim_{n\to\infty} \frac{1}{n}\beta \mathbb{W}_{\GPO, \ve}(\sP_n,\sE_n), \\
        \beta \mathbb{C}_{\GPO, \ve}^\infty\left(\sP, \sE\right) & := \lim_{n\to\infty} \frac{1}{n}\beta \mathbb{C}_{\GPO, \ve}(\sP_n,\sE_n),
    \end{align}
    where $\mathbb{W} \in \{W, \widebar{W}\}$ and $\mathbb{C} \in \{C, \widebar{C}\}$ denote the clean- and dirty-battery versions of work extraction and work of formation, respectively.
\end{definition}

For any $\rho\in \density$ and $\tau \in \PSD$, the Umegaki relative entropy is defined as~\cite{umegaki1954conditional}
\begin{align}\label{eq:Umegaki}
    D(\rho\|\tau):= \tr[\rho(\log \rho - \log \tau)],
\end{align}
if $\supp(\rho) \subseteq \supp(\tau)$ and $+\infty$ otherwise. For any sets $\sP, \sE \subseteq \density$, we define the relative entropy between these sets as
\begin{align}
    D(\sP\|\sE) := \inf_{\rho \in \sP, \tau \in \sE} D(\rho\|\tau).
\end{align}

The following theorem is a direct consequence of the one-shot entropic characterizations in Theorems~\ref{thm:one-shot clean battery work cost} and~\ref{thm:one-shot dirty battery work extraction}, together with the generalized asymptotic equipartition properties for max- and min-relative entropies between sets~\cite[Theorem 25]{fang2025generalized}. 

\begin{theorem}[Asymptotic rates]
    \label{thm:asymptotic approximate work extraction and cost}
    Let $\{(\sP_n,\sE_n)\}_{n\in \NN}$ be a sequence of uncertain athermal states with $\sP_n,\sE_n\subseteq\density(\cH^{\ox n})$ for each $n\in\NN$. Assume that the sequence $\{\sP_n\}_{n\in\NN}$ satisfies the following conditions: for each $n\in\NN$, the set $\sP_n$ is convex, compact, and permutation-invariant, and for all $m, k \in \NN$, $\sP_m \otimes \sP_k \subseteq \sP_{m+k}$ and $\polarPSD{(\sP_m)} \otimes \polarPSD{(\sP_k)} \subseteq \polarPSD{(\sP_{m+k})}$, where $\polarPSD{(\cdot)}$ denotes the polar set restricted to the positive semidefinite cone. Assume that the sequence $\{\sE_n\}_{n\in\NN}$ satisfies the same conditions. Suppose further that there exists a constant $c > 0$ such that $D_{\max}(\sP_n\|\sE_n) \leq cn$ for all $n \in \NN$. Then, for any $\ve \in (0, 1)$,
    \begin{align}
        \beta \widebar{W}^\infty_{\GPO, \ve}\left(\sP, \sE\right) & = \lim_{n\to \infty} \frac{1}{n} D(\sP_n \| \sE_n),\\
        \beta C_{\GPO, \ve}^\infty\left(\sP, \sE\right) & = \lim_{n\to \infty} \frac{1}{n} D(\sP_n \| \sE_n).
    \end{align}
\end{theorem}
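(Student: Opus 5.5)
The proof is a reduction to the one-shot characterizations, followed by the generalized asymptotic equipartition property (AEP) for sets. By Theorem~\ref{thm:one-shot dirty battery work extraction}, for every $n$ we have $\beta \widebar{W}_{\GPO,\ve}(\sP_n,\sE_n) = D_{\min,\ve}(\sP_n\|\sE_n)$. By Theorem~\ref{thm:one-shot clean battery work cost}, $\beta C_{\GPO,\ve}(\sP_n,\sE_n) = D_{\max,\ve}(\sP_n\|\sE_n)$. The asymptotic rates in Definition~\ref{def:asymptotic rates} are therefore exactly the regularized limits
\begin{align}
\lim_{n\to\infty}\tfrac{1}{n}D_{\min,\ve}(\sP_n\|\sE_n), \qquad \lim_{n\to\infty}\tfrac{1}{n}D_{\max,\ve}(\sP_n\|\sE_n).
\end{align}
It then remains to show that both limits exist and equal $\lim_n \frac1n D(\sP_n\|\sE_n)$, for every fixed $\ve\in(0,1)$.

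For this I would invoke \cite[Theorem 25]{fang2025generalized}. Its hypotheses are precisely the structural assumptions listed in the statement: convexity, compactness, permutation invariance, closure under tensor products, closure of the PSD-restricted polar sets under tensor products, and the linear growth bound $D_{\max}(\sP_n\|\sE_n)\le cn$. That theorem asserts that, for any $\ve\in(0,1)$, both the smoothed max-relative entropy and the hypothesis-testing (min-)relative entropy between sets, after dividing by $n$, converge to the regularized relative entropy $\lim_n \frac1n D(\sP_n\|\sE_n)$. Along the way, superadditivity of $D(\sP_n\|\sE_n)$, which follows from $\sP_m\ox\sP_k\subseteq\sP_{m+k}$, the analogous inclusion for $\sE$, and additivity of $D$ on tensor products, together with the bound $\le cn$, gives existence of that limit via Fekete's lemma.

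The main thing to check is that the quantities in the paper match the conventions of \cite{fang2025generalized}. First, Definition~\ref{def:smoothed max relative entropy} smooths over the trace-distance ball $\sB_\ve(\sP)$ of normalized states. If the reference uses purified distance or subnormalized states, the two smoothings are interconvertible with a rescaling of $\ve$. Since the AEP holds for all $\ve\in(0,1)$, the limit does not change. Second, the min-relative entropy in Definition~\ref{def:smoothed min relative entropy} is the hypothesis-testing form with type-I error $\le\ve$. This matches the reference up to the $\ve\leftrightarrow 1-\ve$ convention, which is handled in the same way. With these identifications, both displayed equalities follow immediately.
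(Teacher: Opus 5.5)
Your proposal is correct and matches the paper's argument: you identify $\beta\widebar{W}_{\GPO,\ve}(\sP_n,\sE_n)=D_{\min,\ve}(\sP_n\|\sE_n)$ and $\beta C_{\GPO,\ve}(\sP_n,\sE_n)=D_{\max,\ve}(\sP_n\|\sE_n)$ from the one-shot theorems, and then apply the generalized AEP of \cite[Theorem 25]{fang2025generalized}. One small slip: the inclusions $\sP_m\ox\sP_k\subseteq\sP_{m+k}$ and $\sE_m\ox\sE_k\subseteq\sE_{m+k}$ make the infimum $D(\sP_n\|\sE_n)$ \emph{sub}additive, not superadditive. This is still exactly what Fekete's lemma needs, with $0\le D(\sP_n\|\sE_n)\le D_{\max}(\sP_n\|\sE_n)\le cn$ guaranteeing finiteness.
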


The above result identifies the ultimate asymptotic rate of work extraction into a dirty battery in terms of the regularized Umegaki relative entropy between sets. A finer characterization concerns the convergence behavior below this rate: given an achievable extraction rate below this threshold, how fast does the optimal error vanish as the system size grows? This is captured by the error exponent for work extraction into a dirty battery defined as follows.

\begin{definition}[Error exponent for work extraction into a dirty battery]
    \label{def:error exponent for work extraction}
    Let $\{(\sP_n,\sE_n)\}_{n\in \NN}$ be a sequence of uncertain athermal states with $\sP_n,\sE_n\subseteq\density(\cH^{\ox n})$ for each $n\in\NN$. For a fixed rate $r$, the optimal error for work extraction into a dirty battery is defined as
    \begin{align}
        \widebar{\zeta}_{\GPO, r}(\sP_n,\sE_n) := \inf_{\cF_n\in\GPO} \left\{\ve \in [0,1]: (\sP_n,\sE_n) \xrightarrow[]{\cF_n,\ve} (\ket{1}\bra{1}, \Pi_{2^{nr}}) \right\}.
    \end{align}
    The corresponding error exponent at rate $r$ is then defined as
    \begin{align}
        \widebar{\zeta}_{\GPO, r}^\infty(\sP\|\sE) := \lim_{n \to \infty} -\frac{1}{n} \log \widebar{\zeta}_{\GPO, r}(\sP_n,\sE_n),
    \end{align}
    provided that the limit exists.
\end{definition}

To determine this exponent, we relate the optimal work-extraction error to the corresponding hypothesis-testing error. For hypothesis testing between $\sP_n\subseteq\density(\cH^{\ox n})$ and $\sE_n\subseteq\density(\cH^{\ox n})$ with type-II error threshold $2^{-nr}$, the optimal type-I error is defined as
\begin{align}
    \alpha_{n, r}(\sP_n\|\sE_n) := \inf_{0 \leq E_n \leq I } \left\{ \sup_{\rho_n \in \sP_n} \tr[(I - E_n)\rho_n]: \sup_{\tau_n \in \sE_n} \tr[E_n \tau_n] \leq 2^{-nr} \right\}.
\end{align}
The following proposition shows that this hypothesis-testing quantity coincides with the optimal error for dirty-battery work extraction at rate $r$. 

\begin{proposition}[Optimal error for work extraction at rate $r$]
    \label{prop:optimal error for work extraction at rate r}
    Let $(\sP_n,\sE_n)$ be an uncertain athermal state with $\sP_n, \sE_n\subseteq \density(\cH^{\ox n})$. For any rate $r > 0$,
    \begin{align}
        \widebar{\zeta}_{\GPO, r}(\sP_n,\sE_n) = \alpha_{n, r}(\sP_n\|\sE_n).
    \end{align}
\end{proposition}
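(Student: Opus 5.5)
The plan is to prove two inequalities, reusing the constructions in the proof of Theorem~\ref{thm:one-shot dirty battery work extraction}. That proof shows that a GPO achieving $(\sP_n,\sE_n)\xrightarrow[]{\cF,\ve}(\ket{1}\bra{1},\Pi_M)$ yields a test, and that a test yields such a GPO. The only change here is that we fix the battery capacity $M=2^{nr}$ and track the error $\ve$ instead of $M$.

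\emph{Step 1 ($\widebar{\zeta}\geq\alpha$).} Let $\cF_n\in\GPO$ achieve $(\sP_n,\sE_n)\xrightarrow[]{\cF_n,\ve}(\ket{1}\bra{1},\Pi_{2^{nr}})$, and set $E_n:=\cF_n^\dagger(\ket{1}\bra{1})$. Then $0\le E_n\le I$. For every $\tau_n\in\sE_n$ we have $\cF_n(\tau_n)=\pi_{M_\tau}$ with $M_\tau\ge 2^{nr}$, so $\tr[E_n\tau_n]=1/M_\tau\le 2^{-nr}$ and $E_n$ is feasible for $\alpha_{n,r}$. The variational formula for the trace distance with $F=I-\ket{1}\bra{1}$ gives $\tr[(I-E_n)\rho_n]\le T(\cF_n(\rho_n),\ket{1}\bra{1})\le\ve$ for every $\rho_n\in\sP_n$. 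Hence $\alpha_{n,r}\le\ve$, and taking the infimum over achievable $\ve$ gives $\alpha_{n,r}\le\widebar{\zeta}_{\GPO,r}$.

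\emph{Step 2 ($\widebar{\zeta}\le\alpha$).} Take any feasible $E_n$, so that $\sup_{\tau_n}\tr[E_n\tau_n]\le 2^{-nr}$, and let $\ve:=\sup_{\rho_n}\tr[(I-E_n)\rho_n]$. We cannot use the plain measure-and-prepare channel $\tr[(I-E_n)\cdot]\ket{0}\bra{0}+\tr[E_n\cdot]\ket{1}\bra{1}$, because it maps $\tau_n$ to $\pi_{M}$ with $M=1/\tr[E_n\tau_n]$, which is infinite when $\tr[E_n\tau_n]=0$, and $\pi_\infty\notin\Pi_{2^{nr}}$. We therefore use the channel from the degenerate case of Theorem~\ref{thm:one-shot dirty battery work extraction}, with $M_0$ chosen large:
\begin{align}
\cE(\cdot):=\tr[(I-E_n)\cdot]\,\pi_{M_0}+\tr[E_n\cdot]\,\ket{1}\bra{1}.
\end{align}
Its output on $\tau_n$ is diagonal with excited population $x/M_0'$ in a suitable parametrisation: writing $x=\tr[E_n\tau_n]\le 2^{-nr}$, the excited-state weight is $(1-x)/M_0+x$, so $\cE(\tau_n)=\pi_{M_\tau}$ with $1/M_\tau=x+(1-x)/M_0$. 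We need $1/M_\tau\le 2^{-nr}$, i.e.\ $M_\tau\ge 2^{nr}$. This holds when $x<2^{-nr}$ and $M_0$ is large enough. The boundary case $x=2^{-nr}$ would push $1/M_\tau$ slightly above $2^{-nr}$, so we first rescale to $E_n':=(1-\eta)E_n$ with small $\eta>0$. Since $\sup_{\tau_n}\tr[E_n'\tau_n]\le(1-\eta)2^{-nr}<2^{-nr}$, the condition holds uniformly over $\sE_n$ once $M_0$ is large enough. The error on $\rho_n\in\sP_n$ is $(1-x')(1-1/M_0)\le 1-x'\le\ve+\eta$, where $x'=\tr[E_n'\rho_n]$ (this is the computation in Theorem~\ref{thm:one-shot dirty battery work extraction}). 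Letting $\eta\to0$ and then taking the infimum over feasible $E_n$ gives $\widebar{\zeta}_{\GPO,r}\le\alpha_{n,r}$.

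The main obstacle is Step 2: the prepared Gibbs state must be a genuine $\pi_{M'}$ with finite $M'\ge 2^{nr}$ for every $\tau_n\in\sE_n$ simultaneously. Two things threaten this: the type-II constraint can be saturated, and $\tr[E_n\tau_n]$ can be zero. Mixing in $\pi_{M_0}$ and applying the $\eta$-rescaling handles both, and it costs only an arbitrarily small error, which is harmless because both $\widebar{\zeta}_{\GPO,r}$ and $\alpha_{n,r}$ are defined as infima.
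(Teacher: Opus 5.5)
Your proof is correct and follows essentially the paper's route. Step 1 is the same adjoint-test argument. In Step 2, your channel with $\pi_{M_0}$ on the rejection outcome is exactly the plain measure-and-prepare channel for the perturbed test $(1-\tfrac{1}{M_0})(1-\eta)E_n+\tfrac{1}{M_0}I$. This is the same regularization the paper uses, except the paper takes the one-parameter choice $E_{n,\eta}=(1-\eta)E_n+\eta\,2^{-nr}I$ and bounds the error by $(1-\eta)\alpha_{E_n}+\eta(1-2^{-nr})$.

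A minor point: saturation $\tr[E_n\tau_n]=2^{-nr}$ is harmless for the plain channel. It becomes an issue only after you mix in $\pi_{M_0}$, which is why the paper's convex combination with $2^{-nr}I$ needs no separate $(1-\eta)$ rescaling.
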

\begin{proof}
    The proof is similar to that of Theorem~\ref{thm:one-shot dirty battery work extraction}. We prove the two inequalities separately. \smallskip

    $(\geq)$: Suppose that $\cF_n\in\GPO$ achieves $(\sP_n,\sE_n) \xrightarrow[]{\cF_n,\;\ve} (\ket{1}\bra{1}, \Pi_{2^{nr}})$ for some error $\ve \in [0,1]$. For every $\rho_n\in\sP_n$, the variational formula for the trace distance gives
    \begin{align}
        T(\cF_n(\rho_n), \ket{1}\bra{1}) = \sup_{0\leq F\leq I} \tr[F(\ket{1}\bra{1} - \cF_n(\rho_n))] \leq \ve.
    \end{align}
    Taking $F=\ket{1}\bra{1}$ gives $1-\tr[\ket{1}\bra{1}\cF_n(\rho_n)] \leq \ve$. Moreover, for every $\tau_n\in\sE_n$, the Gibbs-preserving condition for the target dirty battery implies that there exists some $M_{\tau_n} \geq 2^{nr}$ such that $\cF_n(\tau_n) = \pi_{M_{\tau_n}}$. Hence, $\tr[\ket{1}\bra{1}\cF_n(\tau_n)] = \frac{1}{M_{\tau_n}} \leq 2^{-nr}$. Now define $E_n := \cF_n^\dagger(\ket{1}\bra{1})$. Since $\cF_n$ is CPTP, we have $0\leq E_n \leq I$. For all $\rho_n\in\sP_n$,
    \begin{align}
        \tr[(I-E_n)\rho_n] = 1-\tr[E_n\rho_n] = 1-\tr[\ket{1}\bra{1}\cF_n(\rho_n)] \leq \ve,
    \end{align}
    and for all $\tau_n\in\sE_n$,
    \begin{align}
        \tr[E_n\tau_n] = \tr[\ket{1}\bra{1}\cF_n(\tau_n)] \leq 2^{-nr}.
    \end{align}
    Thus $E_n$ is feasible for $\alpha_{n,r}(\sP_n\|\sE_n)$ and $\alpha_{n,r}(\sP_n\|\sE_n) \leq \ve$. Taking the infimum over all feasible $(\cF_n,\ve)$ gives
    \begin{align}
        \widebar{\zeta}_{\GPO, r}(\sP_n,\sE_n) \geq \alpha_{n,r}(\sP_n\|\sE_n).
    \end{align}

    $(\leq)$: Conversely, let $0\leq E_n \leq I$ be any test operator satisfying $\sup_{\tau_n\in\sE_n}\tr[E_n\tau_n] \leq 2^{-nr}$. Set $\alpha_{E_n} := \sup_{\rho_n\in\sP_n} \tr[(I-E_n)\rho_n]$ and $t_n := 2^{-nr}$. For any $\eta \in(0,1)$, define $E_{n,\eta} := (1-\eta)E_n + \eta t_n I$. Since $n\geq 1$ and $r>0$, we have $t_n \in (0,1)$ and $ 0 \leq E_{n,\eta} \leq I$. Moreover, for every $\tau_n\in\sE_n$,
    \begin{align}
        0 < \eta t_n \leq \tr[E_{n,\eta} \tau_n] = (1-\eta)\tr[E_n\tau_n] + \eta t_n \leq t_n.
    \end{align}
    Now define the measure-and-prepare channel $\cE_{n,\eta}(\cdot) := \tr[(I-E_{n,\eta})(\cdot)]\ket{0}\bra{0} + \tr[E_{n,\eta}(\cdot)]\ket{1}\bra{1}$. For every $\rho_n\in\sP_n$, 
    \begin{align}
        T(\cE_{n,\eta}(\rho_n), \ket{1}\bra{1}) = \tr[(I-E_{n,\eta})\rho_n] & = (1-\eta)\tr[(I-E_n)\rho_n] + \eta(1-t_n) \\
        & \leq (1-\eta)\alpha_{E_n} + \eta(1-t_n).
    \end{align}
    For every $\tau_n\in\sE_n$,
    \begin{align}
        \cE_{n,\eta}(\tau_n) = \tr[(I-E_{n,\eta})\tau_n]\ket{0}\bra{0} + \tr[E_{n,\eta}\tau_n]\ket{1}\bra{1} = \pi_{M_{\tau_n,\eta}},
    \end{align}
    where $M_{\tau_n,\eta} := \frac{1}{\tr[E_{n,\eta}\tau_n]}$. Since $0 < \tr[E_{n,\eta}\tau_n] \leq t_n = 2^{-nr}$, we have $2^{nr} \leq M_{\tau_n,\eta} < \infty$. Thus $\cE_{n,\eta}(\tau_n) \in \Pi_{2^{nr}}$ for all $\tau_n \in \sE_n$. Hence $\cE_{n,\eta}\in\GPO$ achieves the conversion $(\sP_n,\sE_n) \xrightarrow[]{\cE_{n,\eta},\;\ve_{E_{n}, \eta}} (\ket{1}\bra{1}, \Pi_{2^{nr}})$ with error at most $\ve_{E_{n},\eta} := (1-\eta)\alpha_{E_n} + \eta(1-t_n)$. Therefore,
    \begin{align}
        \widebar{\zeta}_{\GPO, r}(\sP_n,\sE_n) \leq (1-\eta)\alpha_{E_n} + \eta(1-t_n).
    \end{align}
    Taking $\eta \to 0^+$ gives
    \begin{align}
        \widebar{\zeta}_{\GPO, r}(\sP_n,\sE_n) \leq \alpha_{E_n} = \sup_{\rho_n\in\sP_n} \tr[(I-E_n)\rho_n].
    \end{align}
    Finally, taking the infimum over all feasible tests $E_n$ yields
    \begin{align}
        \widebar{\zeta}_{\GPO, r}(\sP_n,\sE_n) \leq \alpha_{n,r}(\sP_n\|\sE_n).
    \end{align}
    Combining this with the opposite inequality completes the proof.
\end{proof}

For $n\in\NN$ and $r>0$, the Hoeffding divergence between two states $\rho_n, \tau_n\in\density(\cH^{\ox n})$ is defined as
\begin{align}
    H_{n,r}(\rho_n\|\tau_n) := \sup_{\alpha\in(0,1)} \frac{1}{\alpha}\left((\alpha-1)nr- \log\tr\left[\rho_n^\alpha\tau_n^{1-\alpha}\right]\right).
\end{align}
For two sets of states $\sP_n, \sE_n \subseteq \density(\cH^{\ox n})$, we define the Hoeffding divergence between these two sets as
\begin{align}
    H_{n,r}(\sP_n\|\sE_n) := \inf_{\substack{\rho_n\in\sP_n\\\tau_n\in\sE_n}} H_{n,r}(\rho_n\|\tau_n).
\end{align}
Combining Proposition~\ref{prop:optimal error for work extraction at rate r} with the generalized quantum Hoeffding's theorem~\cite[Theorem 5.1]{fang2025error}, we obtain the following characterization of the error exponent for work extraction into a dirty battery.

\begin{theorem}[Error exponent for work extraction into a dirty battery]
    Let $\{(\sP_n,\sE_n)\}_{n\in \NN}$ be a sequence of uncertain athermal states with $\sP_n,\sE_n\subseteq\density(\cH^{\ox n})$ for each $n\in\NN$. Assume that, for every $n\in\NN$, the sets $\sP_n$ and $\sE_n$ are convex and compact, and that the sequences satisfy $\sP_m\ox\sP_n \subseteq \sP_{m+n}$ and $\sE_m\ox\sE_n \subseteq \sE_{m+n}$ for all $m,n\in\NN$. Assume furthermore that there exists $\rho_1\in \sP_1$ and $\tau_1\in \sE_1$ with $\rho_1 \not\perp \tau_1$. Then for every rate $0 < r < \lim_{n\to \infty} \frac{1}{n} D(\sP_n \| \sE_n)$, the error exponent for work extraction into a dirty battery is given by
    \begin{align}
        \widebar{\zeta}^\infty_{\GPO, r}(\sP,\sE) = \lim_{n\to \infty} \frac{1}{n}H_{n,r}(\sP_n\|\sE_n).
    \end{align}
\end{theorem}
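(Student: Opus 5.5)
The plan is to reduce the statement directly to Proposition~\ref{prop:optimal error for work extraction at rate r} and then invoke the generalized quantum Hoeffding theorem of~\cite[Theorem 5.1]{fang2025error}.

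First, for each $n$ and the fixed rate $r>0$, Proposition~\ref{prop:optimal error for work extraction at rate r} gives the identity $\widebar{\zeta}_{\GPO, r}(\sP_n,\sE_n) = \alpha_{n,r}(\sP_n\|\sE_n)$. Consequently
\begin{align}
    -\frac{1}{n}\log \widebar{\zeta}_{\GPO, r}(\sP_n,\sE_n) = -\frac{1}{n}\log \alpha_{n,r}(\sP_n\|\sE_n)
\end{align}
holds for every $n$. The existence of the limit defining $\widebar{\zeta}^\infty_{\GPO, r}(\sP,\sE)$ is therefore equivalent to the existence of the composite Hoeffding exponent of the type-I error in hypothesis testing between $\{\sP_n\}$ and $\{\sE_n\}$ at type-II threshold $2^{-nr}$, and the two limits coincide.

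Second, I would check that the hypotheses of the statement are exactly those required by~\cite[Theorem 5.1]{fang2025error}. These are:
\begin{itemize}
    \item convexity and compactness of each $\sP_n$ and $\sE_n$;
    \item the tensor-stability conditions $\sP_m\ox\sP_n\subseteq\sP_{m+n}$ and $\sE_m\ox\sE_n\subseteq\sE_{m+n}$;
    \item the non-orthogonality of some $\rho_1\in\sP_1$ and $\tau_1\in\sE_1$.
\end{itemize}
The non-orthogonality condition guarantees that the Petz--R\'enyi quantities $\log\tr[\rho_n^\alpha\tau_n^{1-\alpha}]$ are finite along product states, so the Hoeffding divergence is well defined and finite. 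The rate restriction $0<r<\lim_n \frac1n D(\sP_n\|\sE_n)$ is the regime in which that theorem yields a positive exponent equal to $\lim_n \frac1n H_{n,r}(\sP_n\|\sE_n)$. That limit exists by superadditivity under the tensor-stability assumption, via a Fekete-type argument.

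The main obstacle is matching conventions. I need to confirm that the optimal type-I error in~\cite{fang2025error} uses the same worst-case structure as here, namely $\sup_{\rho_n\in\sP_n}$ on the type-I error with the constraint $\sup_{\tau_n\in\sE_n}\tr[E_n\tau_n]\le 2^{-nr}$. I also need to confirm that the threshold is non-strict and that infima, rather than minima, are taken. If the cited theorem uses a strict inequality or a slightly different threshold such as $2^{-nr'}$, I would sandwich the problem between the rates $r\pm\delta$. Continuity of $r\mapsto \lim_n\frac1n H_{n,r}$ follows because it is a supremum of affine functions of $r$, hence convex and finite on the open interval. Letting $\delta\to 0$ then closes the argument and yields $\widebar{\zeta}^\infty_{\GPO, r}(\sP,\sE) = \lim_{n\to\infty}\frac1n H_{n,r}(\sP_n\|\sE_n)$.
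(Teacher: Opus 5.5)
Your proposal follows the paper's argument: the paper obtains this theorem in one line by combining Proposition~\ref{prop:optimal error for work extraction at rate r}, which gives $\widebar{\zeta}_{\GPO, r}(\sP_n,\sE_n) = \alpha_{n,r}(\sP_n\|\sE_n)$ for every $n$, with the generalized quantum Hoeffding theorem~\cite[Theorem 5.1]{fang2025error} applied directly under the stated hypotheses, without any rate sandwiching. Two of your side remarks are wrong, though neither affects the main argument, since the cited theorem itself supplies the existence and value of the limit: the tensor-stability assumptions make $n\mapsto H_{n,r}(\sP_n\|\sE_n)$ subadditive rather than superadditive, and $H_{n,r}$ can be $+\infty$ inside the allowed range of $r$ (e.g.\ $\sP_n=\{\ket{1}\bra{1}^{\ox n}\}$, $\sE_n=\{\pi_2^{\ox n}\}$ with $0<r<1$), so your finiteness and ``supremum of affine functions'' continuity justification for the fallback sandwich would not go through as stated.
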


As discussed after Theorem~\ref{thm:one-shot dirty battery work extraction}, when $\sE$ is a singleton, the Gibbs-preserving condition uniquely determines the equilibrium state of the output battery, and the dirty battery effectively reduces to a clean one. In this case, our theorem yields the error exponent for the standard i.i.d.\ setting~\cite{gour2022role} and for the recently introduced black-box setting~\cite{watanabe2024black}, where the source uncertainty satisfies $\sP_m\ox\sP_n \subseteq \sP_{m+n}$ against a fixed i.i.d.\ equilibrium. The general statement further extends these results to the genuinely uncertain regime where neither the nonequilibrium state nor the equilibrium reference is known precisely, providing a unified Hoeffding-type formula.

\subsection{Irreversibility}

In this section, we provide detailed derivations of the asymptotic rates for the irreversibility example presented in the main text.

\begin{example}
Let $\ve < 1/2$ and $\delta > 0$. Consider an uncertain athermal state with precisely known nonequilibrium state $\sP = \{\ket{1}\bra{1}\}$ and uncertain equilibrium state $\sE = \left\{\pi_M: M \in [2, 2+\delta]\right\}$, which describes an excited dirty battery whose capacity is known only up to precision $\delta$. For $n \geq 2$ copies of this battery, the associated sequence of uncertain athermal states $\{(\sP_n, \sE_n)\}_{n\in\NN}$ is given by
\begin{align}
    \sP_n = \{\ket{1}\bra{1}^{\ox n}\},\qquad \sE_n = \left\{\pi_M^{\ox n}: M \in [2, 2+\delta]\right\}.
\end{align}
\begin{itemize}
    \item In the clean-battery setting, the one-shot extractable work and work cost are
    \begin{align}
        \beta W_{\GPO,\ve}(\sP_n, \sE_n) &= - \log(1-\ve), \label{eq:W GPO irre} \\
        \beta C_{\GPO,\ve}(\sP_n, \sE_n)& = n + \log(1-\ve). \label{eq:C GPO irre}
    \end{align}
    Dividing by $n$ and taking the asymptotic limit gives
    \begin{align}
        \beta W_{\GPO,\ve}^\infty(\sP, \sE) & = 0 < 1 = \beta C_{\GPO,\ve}^\infty(\sP, \sE).
    \end{align}
    \item In the dirty-battery setting, the corresponding one-shot quantities are
    \begin{align}
        \beta \widebar{W}_{\GPO,\ve}(\sP_n, \sE_n) &= n - \log(1-\ve), \label{eq:bar W GPO irre} \\
        \beta \widebar{C}_{\GPO,\ve}(\sP_n, \sE_n) & = \infty. \label{eq:bar C GPO irre}
    \end{align}
    Hence the asymptotic rates become
    \begin{align}
        \beta \widebar{W}_{\GPO,\ve}^\infty(\sP, \sE) & = 1 < \infty = \beta \widebar{C}_{\GPO,\ve}^\infty(\sP, \sE).                                
    \end{align}
\end{itemize}
\end{example}             

\begin{proof}[Proof of Eq.~\eqref{eq:W GPO irre}]
We first verify the geometric condition $\conv(\sP_n)\cap\aff(\sE_n)\neq\emptyset$. 
For convenience, write $\rho_p := (1-p)\ket{0}\bra{0} + p\ket{1}\bra{1}$. Because $\pi_M = \rho_{1/M}$ and $M\in[2,2+\delta]$ is equivalent to $p\in [1/(2+\delta),\, 1/2]$, the equilibrium set can be equivalently written as
\begin{align}
    \sE_n = \{\rho_p^{\ox n}: p \in [1/(2+\delta),\, 1/2]\}.
\end{align}

Consider the map $p \mapsto \rho_p^{\ox n}$. Because $\rho_p$ is diagonal in the computational basis, its tensor power $\rho_p^{\ox n}$ is diagonal in the $n$-qubit computational basis. For any bit string $x = x_1 \cdots x_n \in \{0,1\}^n$ with Hamming weight $|x| = \sum_i x_i$, the corresponding diagonal entry is
\begin{align}
    \bra{x}\rho_p^{\ox n}\ket{x} = \prod_{i=1}^n \bra{x_i}\rho_p\ket{x_i} = p^{|x|}(1-p)^{n-|x|},
\end{align}
which is a (scalar) polynomial in $p$ of degree at most $n$. Hence, the map $p \mapsto \rho_p^{\ox n}$ is an \emph{operator-valued polynomial} of degree $n$ and can be written as $\rho_p^{\ox n} = \sum_{k=0}^n A_k\, p^k$, where each $A_k$ is a fixed $2^n \times 2^n$ Hermitian matrix independent of $p$.
                  
Recall that a scalar polynomial of degree $n$ is uniquely determined by its values at $n+1$ distinct points. The same property holds for operator-valued polynomials, since two operators are equal if and only if all of their matrix entries agree. Choosing $n+1$ distinct points $p_0, p_1, \ldots, p_n \in [1/(2+\delta),\, 1/2]$ and applying the Lagrange interpolation formula entry-wise, we obtain
\begin{align}
    \rho_p^{\ox n} = \sum_{i=0}^n \ell_i(p)\, \rho_{p_i}^{\ox n},
\end{align}
where $\ell_i(p) = \prod_{j \neq i} \frac{p - p_j}{p_i - p_j}$ are the Lagrange basis polynomials. This identity holds because both sides are operator-valued polynomials of degree $n$ in $p$ and agree at the $n+1$ points $p_0, \ldots, p_n$. 

Moreover, by construction, we have $\rho_{p_i}^{\ox n} = \pi_{1/p_i}^{\ox n} \in \sE_n$ for each $i=0,\cdots,n$. Since the Lagrange basis polynomials satisfy $\sum_{i=0}^n \ell_i(p) = 1$ for all $p$, the above representation is an affine combination of elements of $\sE_n$. Evaluating at $p = 1$ gives
\begin{align}
    \ket{1}\bra{1}^{\ox n} = \rho_1^{\ox n} = \sum_{i=0}^n \ell_i(1)\, \rho_{p_i}^{\ox n}, \qquad \sum_{i=0}^n \ell_i(1) = 1.
\end{align}
Hence $\ket{1}\bra{1}^{\ox n} \in \aff(\sE_n)$. In particular, since $\conv(\sP_n) \cap \aff(\sE_n) \neq \emptyset$, the asserted result therefore follows from Corollary~\ref{cor:no-go for work extraction to clean battery}.
\end{proof}

\begin{proof}[Proof of Eq.~\eqref{eq:C GPO irre}]
By Theorem~\ref{thm:one-shot clean battery work cost}, we have $\beta C_{\GPO,\ve}(\sP_n, \sE_n) = D_{\max,\ve}(\sP_n \| \sE_n)$. Since $\sP_n = \{\rho_n\}$, where $\rho_n := \ket{1}\bra{1}^{\ox n}$, it follows that
\begin{align}
    D_{\max,\ve}(\sP_n \| \sE_n) = \inf_{M \in [2,2+\delta]} D_{\max,\ve}(\rho_n \| \pi_M^{\ox n}).
\end{align}
Fix $M \in [2,2+\delta]$. For each $x\in\{0,1\}^n$, write $\sigma_x := \bra{x}\pi_M^{\ox n}\ket{x} = (1/M)^{|x|}(1-1/M)^{n-|x|}$. Since both $\rho_n$ and $\pi_M^{\ox n}$ are diagonal in the computational basis, dephasing any candidate smoothing state $\tilde{\rho}$ cannot increase either $T(\tilde{\rho}, \rho_n)$ or $D_{\max}(\tilde{\rho} \| \pi_M^{\ox n})$. Therefore, it suffices to optimize over diagonal states $\tilde{\rho} = \sum_{x} q_x\ket{x}\bra{x}$ satisfying $T(\tilde{\rho}, \rho_n) \leq \ve$. Because $\rho_{n}=\ket{1}\bra{1}^{\ox n}$, this is equivalent to $q_{1^n} \geq 1 - \ve$. Hence, in this setting, 
\begin{align}
    D_{\max,\ve}(\rho_n\|\pi_M^{\ox n}) = \log\inf \Big\{\lambda>1: \exists\, (q_x)_x\text{ with } q_{1^n}\geq 1-\ve \text{ and } q_x \le \lambda \sigma_x \text{ for all } x \Big\}.
\end{align}
We now evaluate this quantity by matching lower and upper bounds. \smallskip

\emph{Lower bound.} Let $(q_x)_x$ be feasible. Then $1-\ve \leq q_{1^n} \leq \lambda\,\sigma_{1^n} = \lambda\, M^{-n}$, and hence $\lambda \geq (1-\ve)\,M^n$. \smallskip

\emph{Upper bound.} Set $\lambda^* := (1-\ve)\,M^n$, and choose $q_{1^n} = 1 - \ve$, $q_{0^n} = \ve$, and $q_x = 0$ for all other $x$. We verify that $(q_x)_x$ is feasible. First, $q_{1^n} = 1 - \ve = \lambda^*\, M^{-n} = \lambda^* \sigma_{1^n}$. 
Next, $q_{0^n} = \ve \leq (1-\ve)(M-1)^n = \lambda^*\,(1-1/M)^n = \lambda^* \sigma_{0^n}$ where we used $\ve < 1/2 < 1-\ve$ and $(M-1)^n \geq 1$. Finally, for all other $x$, $q_x = 0\leq \lambda^* \sigma_x$. so the constraints are satisfied. \smallskip

Combining both bounds gives
\begin{align}
    D_{\max,\ve}(\rho_n\|\pi_M^{\ox n}) = n\log M + \log(1-\ve).
\end{align}
Since this quantity is increasing in $M$, the infimum over $M\in [2, 2+\delta]$ is attained at $M = 2$. Hence,
\begin{align}
    \beta C_{\GPO,\ve}(\sP_n, \sE_n) =D_{\max,\ve}(\sP_n\|\sE_n) = n + \log(1-\ve).
\end{align}
\end{proof}

\begin{proof}[Proof of Eq.~\eqref{eq:bar W GPO irre}]
By~\cite[Lemma~31]{fang2025generalized}, we have the identity
\begin{align}
    D_{\min,\ve}(\sP_n\|\sE_n) = D_{\min,\ve}(\conv(\sP_n) \| \conv(\sE_n)) = \inf_{\substack{\rho_n \in \conv(\sP_n)\\ \tau_n \in \conv(\sE_n)}} D_{\min,\ve}(\rho\|\tau).
\end{align}
Since $\sP_n=\{\ket{1}\bra{1}^{\ox n}\}$, Theorem~\ref{thm:one-shot dirty battery work extraction} gives
\begin{align} 
    \beta \widebar{W}_{\GPO,\ve}(\sP_n, \sE_n) = D_{\min,\ve}(\sP_n \| \sE_n)=\inf_{\substack{\tau_n \in \conv(\sE_n)}} D_{\min,\ve}(\ket{1}\bra{1}^{\ox n}\|\tau_n).
\end{align}
We now evaluate the last quantity for a fixed $\tau_n \in \conv(\sE_n)$ by matching lower and upper bounds. Since every element of $\sE_n$ is diagonal in the computational basis, every $\tau_n \in \conv(\sE_n)$ is also diagonal in the computational basis. \smallskip

\emph{Lower bound.} Let $E$ be any feasible test operator. Then $\tr[(I-E)\ket{1}\bra{1}^{\ox n}] \leq \ve$, which is equivalent to $\bra{1^n}E\ket{1^n}\geq 1-\ve$. Since $\tau_n$ is diagonal in the computational basis and $E\geq 0$, we have
\begin{align}
    \tr[E\tau_n] = \sum_{x\in\{0,1\}^n} \bra{x}\tau_n\ket{x}\bra{x} E\ket{x} \geq \bra{1^n} \tau_n \ket{1^n}\bra{1^n} E \ket{1^n} \geq (1-\ve)\bra{1^n}\tau_n\ket{1^n}.
\end{align}

\emph{Upper bound.} Choose $E = (1-\ve)\ket{1}\bra{1}^{\ox n}$. Then $0\leq E\leq I$ and $\tr[(I-E)\ket{1}\bra{1}^{\ox n}] = \ve$, so $E$ is feasible. Moreover, $\tr[E\tau_n] = (1-\ve)\bra{1^n}\tau_n\ket{1^n}$, which implies that the lower bound is tight. \smallskip

Combining both bounds gives
\begin{align}\label{eq:bar W proof tmp}
    D_{\min,\ve}(\ket{1}\bra{1}^{\ox n} \| \tau_n) = -\log\bigl((1-\ve)\bra{1^n}\tau_n\ket{1^n}\bigr).
\end{align}
Finally, the right-hand side of Eq.~\eqref{eq:bar W proof tmp} is minimized over $\conv(\sE_n)$ at $\tau_n = \pi_2^{\ox n}$, for which $\bra{1^n}\pi_2^{\ox n}\ket{1^n} = 2^{-n}$. Therefore, 
\begin{align}
    \beta \widebar{W}_{\GPO,\ve}(\sP_n, \sE_n) = n - \log(1-\ve).
\end{align}
\end{proof}

\begin{proof}[Proof of Eq.~\eqref{eq:bar C GPO irre}]
The set $\sE_n = \{\pi_M^{\ox n} : M \in [2, 2+\delta]\}$ is the continuous image of a compact interval, and hence is compact. In particular, it is closed, so $\cl(\sE_n) = \sE_n$. By Theorem~\ref{thm:one-shot dirty battery work cost}, any finite work cost $\beta\widebar{C}_{\GPO,\ve}(\sP_n,\sE_n) < \infty$ requires the existence of $\omega_n \in \sB_\ve(\sP_n)$, $\tau_n \in \cl(\sE_n) = \sE_n$, and $M < \infty$ such that the half-open line segment
\begin{align}
    \sC(\tau_n, \omega_n, 1/M) = \{(1-p)\tau_n + p\,\omega_n : p \in (0, 1/M]\} \subseteq \sE_n.
\end{align}
We show that no such triple exists for $n \geq 2$. \smallskip

\emph{Step 1: $\sE_n$ contains no non-degenerate line segment anchored at any point $\tau_n\in\sE_n$ for $n \geq 2$.} Suppose, for contradiction, that there exist a state $\tau_n = \pi_{M_0}^{\ox n} \in \sE_n$, an operator $\omega_n \neq \tau_n$, and a constant $\lambda > 0$ such that $(1-p)\tau_n + p\,\omega_n \in\sE_n$ for all $p \in (0,\lambda]$. Equivalently, for each $p \in (0,\lambda]$, there exists $M(p) \in [2, 2+\delta]$ such that
\begin{align}\label{eq:line_seg_assumption}
    (1-p)\,\pi_{M_0}^{\ox n} + p\,\omega_n = \pi_{M(p)}^{\ox n}.
\end{align}
Since both $\pi_{M_0}^{\ox n}$ and $\pi_{M(p)}^{\ox n}$ are diagonal in the computational basis, Eq.~\eqref{eq:line_seg_assumption} implies that $\omega_n$ is diagonal as well. Let $t(p) := 1/M(p)$ and $t_0 := 1/M_0$. Then $t(p) \in [1/(2+\delta),\, 1/2]$ for $p\in(0,\lambda]$ and $t_0\in[1/(2+\delta),\,1/2]$. For each bit string $x \in \{0,1\}^n$ with Hamming weight $|x|$, we have
\begin{align}
    \bra{x}\pi_{M(p)}^{\ox n}\ket{x} = t(p)^{|x|}\,(1-t(p))^{n-|x|}.
\end{align}
Evaluating Eq.~\eqref{eq:line_seg_assumption} at $x = 1^n$ gives
\begin{align}\label{eq:tn_affine}
    t(p)^n = (1-p)\,t_0^n + p\,\bra{1^n}\omega_n\ket{1^n},
\end{align}
so $t(p)^n$ is affine in $p$ on $(0,\lambda]$. Next, evaluating Eq.~\eqref{eq:line_seg_assumption} at any string $x$ with $|x| = n-1$ yields
\begin{align}\label{eq:tn1_intermediate}
    t(p)^{n-1}(1 - t(p)) = (1-p)\,t_0^{n-1}(1-t_0) + p\,\bra{x}\omega_n\ket{x},
\end{align}
which is also affine in $p$. Since $t(p)^{n-1}(1-t(p)) = t(p)^{n-1} - t(p)^n$ and $t(p)^n$ is affine in $p$ by Eq.~\eqref{eq:tn_affine}, it follows that $t(p)^{n-1}$ is affine in $p$ as well. We may therefore write
\begin{align}\label{eq:ab_cd}
    t(p)^n = a + bp, \qquad t(p)^{n-1} = c + dp,
\end{align} 
where $a = t_0^n$, $c = t_0^{n-1}$ (by continuity at $p\to 0^+$), and $b, d$ are constants determined by the preceding relations. These two affine functions must satisfy $[t(p)^{n-1}]^n = [t(p)^n]^{n-1}$, i.e.,
\begin{align}\label{eq:poly_identity}
    (c + dp)^n = (a + bp)^{n-1}
\end{align}
for all $p \in (0,\lambda]$. Since both sides are polynomials in $p$ and agree on an interval, Eq~\eqref{eq:poly_identity} holds as a polynomial identity. Comparing the coefficient of $p^n$ gives $d^n = 0$, so $d = 0$. Substituting this back into Eq.~\eqref{eq:poly_identity} gives $c^n = (a + bp)^{n-1}$. Comparing the coefficient of $p^{n-1}$ yields $b^{n-1} = 0$, hence $b = 0$. Therefore both $t(p)^n = a$ and $t(p)^{n-1} = c$ are constant. Since $t(p) > 0$, we have $t(p) = a/c = t_0$ for all $p \in (0,\lambda]$. Therefore, $M(p) = 1/t(p) = 1/t_0 = M_0$ for all $p\in(0,\lambda]$, and Eq.~\eqref{eq:line_seg_assumption} reduces to
\begin{align}
    (1-p)\,\pi_{M_0}^{\ox n} + p\,\omega_n = \pi_{M_0}^{\ox n}.
\end{align}
Since $p >0$, this forces $\omega_n = \pi_{M_0}^{\ox n} = \tau_n$, contradicting the assumption that the segment is non-degenerate. \medskip

\emph{Step 2: The degenerate case violates the smoothing constraint.}
By Step~1, if $\sC(\tau_n,\omega_n, 1/M)\subseteq \sE_n$ for some $\tau_n \in \sE_n$, $\omega_n$, and finite $M$, then necessarily $\omega_n = \tau_n = \pi_{M_0}^{\ox n}$ for some $M_0 \in [2, 2+\delta]$. On the other hand, Theorem~\ref{thm:one-shot dirty battery work cost} also requires $\omega_n \in \sB_{\ve}(\sP_n)$, that is, $T(\omega_n, \ket{1}\bra{1}^{\ox n}) \leq \ve$. We now show that this is impossible. By the variational characterization of trace distance,
\begin{align}
    T(\pi_{M_0}^{\ox n}, \ket{1}\bra{1}^{\ox n}) & = \sup_{0\leq F\leq I} \tr[F(\ket{1}\bra{1}^{\ox n} -\pi_{M_0}^{\ox n})].
\end{align}
Taking the test $F = \ket{1}\bra{1}^{\ox n}$ gives
\begin{align}
    T(\pi_{M_0}^{\ox n}, \ket{1}\bra{1}^{\ox n}) \geq 1- \bra{1^n}\pi_{M_0}^{\ox n}\ket{1^n} = 1 - M_0^{-n} \geq 1-2^{-n} \geq \frac{3}{4} > \frac{1}{2} > \ve,
\end{align}
where we used $M_0 \geq 2$, $n\geq 2$ and the assumption $\ve < 1/2$. Hence $\omega_n = \pi_{M_0}^{\ox n} \notin \sB_\ve(\sP_n)$, which contradicts the required smoothing condition. \smallskip

Combining Steps~1 and~2, no triple $(\omega_n,\tau_n, M)$ satisfying the condition of Theorem~\ref{thm:one-shot dirty battery work cost} can exist. Consequently, $\beta \widebar{C}_{\GPO,\ve}(\sP_n, \sE_n) = \infty$ for all $n\geq 2$.
\end{proof}

\end{document}